\documentclass{article}
\usepackage[utf8]{inputenc}
\usepackage{cite}
\usepackage{graphicx}
\usepackage{authblk}
\usepackage{datetime2,soul,amsthm,mathtools}
\usepackage{amsmath}
\usepackage{tikz}
\usepackage{mathdots}
\usepackage{yhmath}
\usepackage{cancel}
\usepackage{color}
\usepackage{siunitx}
\usepackage{array}
\usepackage{multirow}
\usepackage{amssymb}
\usepackage{gensymb}
\usepackage[all]{xy}
\usepackage{tabularx}
\usepackage{extarrows}
\usepackage{booktabs}
\usetikzlibrary{fadings}
\usetikzlibrary{patterns}
\usetikzlibrary{shapes}
\usepackage{comment}

\usepackage{indentfirst,csquotes}
\usepackage[indentafter]{titlesec}
\titleformat{name=\section}{}{\thetitle.}{0.8em}{\centering\scshape}
\titleformat{name=\subsection}[runin]{}{\thetitle.}{0.5em}{\bfseries}[.]
\titleformat{name=\subsubsection}[runin]{}{\thetitle.}{0.5em}{\itshape}[.]
\titleformat{name=\paragraph,numberless}[runin]{}{}{0em}{}[.]
\titlespacing{\paragraph}{0em}{0em}{0.5em}
\titleformat{name=\subparagraph,numberless}[runin]{}{}{0em}{}[.]
\titlespacing{\subparagraph}{0em}{0em}{0.5em}
\usepackage{lipsum}

\usepackage{amssymb,amsthm,amsmath}
\usepackage{xcolor,paralist,hyperref,titlesec,fancyhdr,etoolbox}
\usepackage[all]{xy}
\theoremstyle{definition}
\newtheorem{definition}{Definition}[section]
\newtheorem{remark}[definition]{Remark}
\newtheorem{example}[definition]{Example}
\theoremstyle{plain}
\newtheorem{theorem}[definition]{Theorem}
\newtheorem{corollary}[definition]{Corollary}

\newtheorem{lemma}[definition]{Lemma}
\newtheorem{proposition}[definition]{Proposition}
\renewenvironment{proof}{\noindent\textsc{Proof.}\quad}{\qed}
\newenvironment{proofof}[1]{\noindent {\it Proof of #1. $\,$}}{\hfill $\Box$ \vskip 5pt }

\newenvironment{lettertheorem}[1]
  {\renewcommand{\thedefinition}{#1}%
   \begin{theorem}}
  {\end{theorem}}

\hypersetup{ colorlinks=true, linkcolor=black, filecolor=black, urlcolor=black }

\newcommand\ZZ{\mathop{}\!\mathbb{Z}}

\newcommand\RR{\mathop{}\!\mathbb{R}}
\newcommand\CC{\mathop{}\!\mathbb{C}}
\newcommand\LL{\mathop{}\!\mathcal{L}}

\newcommand\HDet{\mathop{}\!\widetilde{\mathrm{Det}}}

\begin{document}
\setlength{\columnsep}{5pt}
\title{\vspace{-1.25cm} Two Regularized Determinants of Laplacian through Resurgence theory}


\author[1]{Wen SHEN\thanks{Email: sw@cnu.edu.cn}}
\author[1, 2]{Shanzhong SUN\thanks{Email: sunsz@cnu.edu.cn}}

\affil[1]{Department of Mathematics, Capital Normal University, Beijing 100048, P. R. China}
\affil[2]{Academy for Multidisciplinary Studies, Capital Normal University, Beijing 100048, P. R. China}

\date{}
\maketitle

\begin{abstract}
We study two types of regularizations of the determinant of Laplacian on Riemann  manifold from the viewpoint of resurgence theory. One is the formal logarithmic derivative of the determinant, and the other is its exponential deformation. Under appropriate conditions, the close formulas for both regularized determinant are established through Borel-Laplace resummation which takes into account the contribution of the singularities along the analytic continuation  of Theta series $\hat{\Theta}_{D_X}$. The series resembles the trace of the heat kernel, but is defined via the spectrum of the square-root of the Laplacian.
As applications, we revisit the well known formal logarithmic derivative of  determinant on $S^1$ and compact Riemann surface with higher genus ($\geq2$) corresponding to the Poisson summation formula and Selberg trace formula respectively. Furthermore, the 1-Gevrey asymptotic behavior of the exponential deformation regularization at infinity is considered whose coefficients are determined by the trace of the heat kernel. In the end, we establish the relationship between the two regularized determinants. In fact, they have the same derivatives when the deformation parameter tends to $0$ in exponentially deformed regularization.
\end{abstract}

\tableofcontents

\section{Introduction}\label{1}

\subsection{Background}\label{1.1}

{Functional determinant of Schr\"{o}dinger operator gets involved in the famous Gutzwiller semiclassical trace formula  relating it to the period/length spectra(\cite{gutzwiller1971periodic, gutzwiller1991chaos,sun2017gutzwiller}) through  the trace of Green's function, which is one of our main motivations to address the regularization problem via resurgence theory. 
A typical case is for the Laplace operator on a Riemann surface of genus great than $1$ where Gutzwiller semiclassical trace formula reduces to the well known Selberg trace formula.}

Given a differential operator $P$ for which we will focus on the Laplace operator on a Riemannian manifold, it is natural and a common practice to encode its discrete spectra $\{\lambda_n\}$ into the functional determinant defined by $\prod_{n=1}^{\infty}\lambda_n$ generalizing the familiar determinant of matrix. Unfortunately it is divergent, and regularization is needed to make it sense.

The most classical approach is the zeta regularization, where the regularized determinant is defined as $e^{-\zeta_P^\prime(0)}$. The associated spectral zeta function
$$\zeta_P(s):=\sum_n{\lambda_n^{-s}}$$ admits a contour integral representation involving either the Green's function (\cite[\S5]{Welylaw}) or eigenfunctions satisfying specific boundary conditions (\cite{hur2008fast, kirsten2003functional}). These methods serve distinct purposes: the former reveals the singularity structure of the spectral zeta function by the short time expansion of heat kernel when $P$ is the Laplacian, while the latter yields the ratio of the regularized determinant constructed from the eigenfunctions. Recently, Carosi (\cite{carosi2026equivalent}) tries to incorporate these results into a unified framework based on the observation that the spectral zeta function can be represented as a contour integral
$$\zeta_P(s)=\frac{1}{2\pi i}\int_C\left(\lambda^{-s}\frac{d\log F_P(\lambda)}{d\lambda}\right)d\lambda,$$
with $C$ a contour enclosing the spectrum, and $F_P(\lambda)$ a holomorphic function whose simple zeros correspond to the eigenvalues of $P$.

We will focus on another type of functional determinant which is also well studied: the infinite dimensional analogue of the characteristic polynomial of a matrix. It is defined as
\begin{equation}{\label{def:formal product det}}
\HDet_P(\lambda):=\prod_{n\geq1}(\lambda+\lambda_n),
\end{equation}
where the notation $\HDet$ highlights the fact that the product is formal and generally divergent. There are several techniques to regularize it in the literature.

\subsubsection{Zeta Regularization}\label{1.1.1}
Like the zeta regularization of $\prod_{n=1}^{\infty}\lambda_n$, 
$\HDet_P(\lambda)$ 
can be regularized by the form
$$D_P(\lambda):=\exp{\left(-\frac{\partial}{\partial s}Z_P(s,\lambda)\Big|_{s=0}\right)},$$
where $Z_P(s,\lambda):=\sum_{n=0}^{\infty}(\lambda+\lambda_n)^{-s}$ is a two-variable zeta function. The zeta-regularized determinant is well-defined following from the fact that $Z_P(s,\lambda)$ is the Mellin transform of $\hat{\Theta}_P(t)e^{-\lambda t}$, where 
$$\hat{\Theta}_P(t):=\sum_{n=1}^{\infty}e^{-\lambda_nt}$$ is also one of our main objects, and it has a short time asymptotic expansion when $t\to0^+$, at least for 
Laplace operator on a compact manifold. 

As a classical example, the functional determinant of the Laplacian on a compact Riemann surface $X$ with genus $g \geq2$ can be evaluated by applying the Selberg trace formula to a resolvent-type test function (\cite{voros1987spectral, cartierVoros1988}). Consequently, the regularized determinant of $-\Delta_X+\frac{1}{4}$ can be expressed as the product of the Selberg zeta function and the regularized determinant of the round sphere.

\subsubsection{Weierstrass Product Form}\label{1.1.2}
The Weierstrass product form of the determinant is
$$\Delta_{P}(\lambda):=\prod_{n=0}^{\infty}\left(1-\frac{\lambda}{\lambda_n}\right) \exp\left(\frac{\lambda}{\lambda_n}+\frac{\lambda^2}{2 \lambda_n^2}+\cdots+\frac{\lambda^m}{m \lambda_n^m}\right),$$
where $m$ is a constant depending on the dimension of the manifold. 
Voros (\cite{voros1987spectral}) considered the relation between this regularized methods and the one in \S\ref{1.1.1}. They can both be viewed as the $m$-th formal integral of the spectral sum $\sum_n(\lambda+\lambda_n)^{-m-1}$ up to integration ambiguities, but are normalized at different places ($\lambda\to-\infty$ and $\lambda=0$, respectively). Thus, they differ by a factor $e^{Q(\lambda)}$ with $Q(\lambda)$ a polynomial of degree $m$. 

Furthermore, Illies introduced a general $\delta$-regularized determinant (\cite[Def. 3.4]{illies2001regularized})  via $$\Delta_{P,\delta}(\lambda) := \exp\!\left(-\mathrm{CT}_{s=0}\,\delta(s)Z_p(s,\lambda)\right),$$ 
where $\delta(s)$ is a formal series about $s$ encoding the spectra data and $\mathrm{CT}_{s=0}$ means the constant term in the Laurent expansion at $s=0$. Zeta-regularization, Weierstrass canonical product and other regularizations are the special cases of the $\delta$-regularized determinant by the different choices of $\delta(s)$(\cite[Examples in p.73]{illies2001regularized}). Moreover, every $\delta$-regularized determinant also differs from the Weierstrass product by an exponential polynomial $e^{P(z)}$ (\cite[Thm. 1]{illies2001regularized}). He also provided a criterion for determining whether a function is a $\delta$-regularized determinant based on its asymptotic expansion at infinity(\cite[Thm. 5]{illies2001regularized}).

\subsubsection{Laplace Method}\label{1.1.3}
In (\cite{voros1992spectral}), Voros further studied the relations among the spectral functions $\hat{\Theta}_P(t)$, $Z_P(s,\lambda)$, and $D_P(\lambda)$. Beyond the Mellin transform relation between $\hat{\Theta}_P(t)e^{-\lambda t}$ and $Z_P(s,\lambda)$, he established that the finite part of the log-determinant can be viewed as the Laplace transform of the regularized Theta function
\begin{equation}\label{laplace with F.P}
-\mathrm{F.P.}\left[\log D_P(\lambda)\right]=\int_0^{\infty}\mathrm{F.P.}\left[ \frac{\hat{\Theta}_P(t)}{t}\right] e^{-t\lambda}dt,
\end{equation}
where $\mathrm{F.P.}$ denotes the subtraction of divergent terms as $t\to0$ or $\lambda\to\infty$. Based on it, Voros derived functional equations for $D_P(\lambda)$ via contour integration around the singularities of
$\hat{\Theta}_P(t)$ for some specific examples: $\{\lambda_n\}\subseteq\ZZ$; the spectrum of the Laplacian on some compact hyperbolic surface; and the spectrum of the Schr\"{o}dinger operator with homogeneous potential $q^{2M}(M\in\ZZ_{>0})$ (\cite[Eqs. 3.15, 4.15, 5.26]{voros1992spectral}). Futhermore, he proposed a general framework (\cite[Fig. 3.18, \S6]{voros1992spectral} and \cite{voros1993quantumresur}) whereby the functional relations for determinants would arise from the global analytic structure of "resurgent" Theta series. While he computed explicit examples via contour integration, a general rigorous mathematical description of determinants in terms of the "resurgent structure", and a precise condition for $\hat{\Theta}_P$ is applicable to his framework, were lacking. Curiosity on such resurgent structures is another motivation to the current work. 


\subsection{Main Results}\label{1.2}
We introduce two novel regularized determinants that are different from the classical zeta-regularized determinant $D_P$, providing a unified framework and new perspective on spectral invariants through the resurgence theory(\cite{ecallebook1, ecallebook2, ecallebook3, sauzinbook}). By using the Stokes formula (lemma \ref{lemma: Laplace and stokes}), we establish explicit computational formulas in Theorem \ref{theorem about derivative regularization in introduction} and Theorem \ref{thm about the computation of deform in intro} to express these determinants as a sum of alien operators (see Definition \ref{Def:alien operator}) acting on the modified Theta series under precise analytical conditions. {This method of using Alien operators to encode the singularities of the holomorphic extension of a germ along paths originating from the origin, provides a rigorous and explicit description of the "global analytic structure" within Voros' framework (\cite{voros1992spectral}).}

For the Laplace-type operator $P=-\Delta_X+C$ on a compact Riemann manifold $X$ of dimension $d$, it is well-known that $P$ is self-adjoint and has discrete spectrum, whose non-zero eigenvalues are positive $$0<\lambda_1\leq\cdots\leq\lambda_n\leq\cdots\to\infty.$$
Instead of subtracting the divergent terms of $\hat{\Theta}_{-\Delta_X+C}(t)$ in \eqref{laplace with F.P}, we propose two methods that make $\hat{\Theta}_{-\Delta_X+C}(t)$ integrable. Bearing in mind the asymptotic behavior, we multiply  $(-t)^m$ to $\hat{\Theta}_{-\Delta_X+C}(t)$. In fact, $\hat{\Theta}_{-\Delta_X+C}(t)\in O(t^{-\frac{d}{2}})$ as $t\to 0$, so the choice for $m=\lfloor\frac{d}{2}\rfloor+1$ is enough. Another method is deformation, and we shift $\hat{\Theta}_{-\Delta_X+C}(t)$ by a negative constant $s_0$. Both methods make the Laplace transform over the modified Theta series well-defined, thus

\begin{equation}{\label{relation:derivative laplace in intro}}
\int_0^{e^{i\theta}\infty}(-t)^{m-1}\hat{\Theta}_{-\Delta_X+C}(t)e^{-\lambda t}dt=\sum_{n\geq1}\frac{(-1)^{m-1}(m-1)!}{(\lambda+\lambda_n)^m},\quad\theta\in\left(-\frac{\pi}{2},\frac{\pi}{2}\right);
\end{equation}
\begin{equation}{\label{relation:deform laplace in intro}}
\int_0^{e^{i\theta}\infty}\hat{\Theta}_{-\Delta_X+C}(t-s_0)e^{-\lambda t}dt=\sum_{n\geq1}\frac{e^{\lambda_n s_0}}{\lambda+\lambda_n},\quad\theta\in\left(-\frac{\pi}{2},\frac{\pi}{2}\right).
\end{equation}
We will present two regularization methods emerging from these identities, and deduce the expressions for their corresponding regularized determinants through the analysis of the singularities of two modified Theta series.

\textbf{The First Regularization.} In terms of the formal product $\HDet_{-\Delta_X+C}(\lambda)$ in \eqref{def:formal product det}, the right hand side of \eqref{relation:derivative laplace in intro} can be viewed as
$$\widetilde{\frac{d^m}{d\lambda^m}}\widetilde{\log} \HDet_{-\Delta_X+C}(\lambda),$$
which we call the \textbf{logarithmic derivative regularization of the determinant}. We add the "tilde" to emphasize the formal nature of taking the logarithm and the derivatives. If we write $\rho^2=\lambda$ and the spectrum $\rho_n^2=\lambda_n$, the regularized determinant
$$\widetilde{\frac{d^{m_0}}{d\rho^{m_0}}}\widetilde{\log}\HDet_{-\Delta_X+C}(\rho^2):=\sum_{n\geq 1}\frac{d^{m_0-1}}{d\rho^{m_0-1}}\frac{2\rho}{\rho^2+\rho_n^2},\quad\text{where }m_0=d+1,$$
can be characterized by the singularities and the {holomorphic} extension of $(-t)^d\hat\Theta_{D_X}(t)$, 
where
$$\hat\Theta_{D_X}(t):=\sum_{n=1}^{\infty}e^{-\rho_n t}\in O(t^{-d}),\quad t\to 0.$$
To be more precise, we have the following

\begin{lettertheorem}{A}\label{theorem about derivative regularization in introduction}
Let $m_0=d+1$. Assume:
\begin{itemize}
\item[(i)] There exists a non-empty closed discrete set $\Omega=\big\{i\tau_k:\tau_k\in\RR,\tau_0=0,\tau_k<\tau_{k+1},k\in\ZZ\big\}$, such that $(-t)^d\hat{\Theta}_{D_X}(t)\in\hat{\mathcal{R}}_{\Omega^*}^{simp}$;
\item[(ii)] There exists $\delta>0$, such that $\hat{\Theta}_{D_X}(t)$ can be holomorphically extended to
$$U_\delta:=\left\{re^{i\theta}:r>0,-\frac{\pi}{2}-\delta<\theta<\frac{\pi}{2}+\delta\right\}\setminus\big(i(\tau_1,+\infty)\cup i(-\infty,\tau_{-1})\big)$$
along the path in it, with $t^d\hat{\Theta}_{D_X}(t)\in\mathcal{N}\left(\left(\frac{\pi}{2},\frac{\pi}{2}+\delta\right),0\right)$.

\end{itemize}
Then, for $0<\epsilon<\delta$ and $\arg(\rho)\in\left(-\frac{\pi}{2}+\epsilon,\frac{\pi}{2}-\epsilon\right)$,
\begin{equation}\label{formula:reg1 in intro}
\begin{aligned}
\widetilde{\frac{d^{m_0}}{d\rho^{m_0}}}\widetilde{\log}\HDet_{-\Delta_X+C}(\rho^2)=&i^{m_0}\left(\sum_{\tau_k>0}e^{-\tau_k\rho}\LL^{\left(\frac{\pi}{2}-\epsilon\right)}\left(\Delta_{i\tau_k}^-\left(t^{m_0-1}\hat{\Theta}_{D_X}(t)\right)\right)(-i\rho)\right) \\&-i^{m_0}\LL^{\left(\frac{\pi}{2}+\epsilon\right)}\left(t^{m_0-1}f(t)\right)(-i\rho),
\end{aligned}
\end{equation}
if the right side converges, where $$f(t):=\hat{\Theta}_{D_X}(te^{-i\pi})+\hat{\Theta}_{D_X}(t).$$
We call $f(t)$ the "\textbf{asymmetry part}" of $\hat{\Theta}_{D_X}(t)$.
\end{lettertheorem}
As demonstrated in (\cite{voros1987spectral}), this form of $\widetilde{\frac{d^{m}}{d\lambda^{m}}}\widetilde{\log} \widetilde{Det}_{-\Delta_X+C}(\lambda)$ serves as the common $m$-th derivative for both the zeta-regularized determinant and the Weierstrass product. Specifically, these two regularized determinants can be obtained by
integrating $m$ times, then normalizing at $-\infty$ and $0$, respectively. Therefore, such characterization of $\widetilde{\frac{d^{m}}{d\lambda^{m}}}\widetilde{\log} \widetilde{Det}_{-\Delta_X+C}(\lambda)$ provides a unified approach to deriving both zeta-regularized and Weierstrass product determinants.
\begin{remark}
Several explanations are in the sequel:
\begin{itemize}
    \item[(i)] $\Delta_{i\tau_k}^-$ is the alien operator in resurgence theory which describes the singularity behaviour at $i\tau_k$ (Definition \ref{Def:alien operator}). In particular, if all singular points of $t^{m_0-1}\hat{\Theta}_{D_X}(t)$ are simple poles, then $\Delta_{i\tau_k}^-t^{m_0-1}\hat{\Theta}_{D_X}(t)$ is actually the residue of $-2\pi it^{m_0-1}\hat{\Theta}_{D_X}(t)$ at $i\tau_k$ multiplying Dirac delta, and
\begin{equation*}
\begin{aligned}
    \widetilde{\frac{d^{m_0}}{d\rho^{m_0}}}\widetilde{\log}\HDet_{-\Delta_X+C}(\rho^2)=&-2\pi i^{m_0+1}\left(\sum_{\tau_k>0}e^{-\tau_k\rho}\mathrm{Res}\left(t^{m_0-1}\hat{\Theta}_{D_X}(t);i\tau_k\right)\right) \\&-i^{m_0}\mathcal{L}^{\left(\frac{\pi}{2}+\varepsilon\right)}\left(t^{m_0-1}f(t)\right)(-i\rho),\quad\arg(\rho)\in\left(-\frac{\pi}{2}+\varepsilon,\frac{\pi}{2}-\varepsilon\right).
\end{aligned}
\end{equation*}
Thus, \eqref{formula:reg1 in intro} means that the regularized determinant is the sum due to the Stokes structure and the Laplace transform over the "asymmetry part" of the modified Theta series.

\item[(ii)] The reason for choosing the new series $\hat\Theta_{D_X}(t)$ by letting $\rho_n^2=\lambda_n$ is that if we consider the original $\hat\Theta_{-\Delta_X+C}(t)$, the singular set is always dense even when $d=1$. This is due to Weyl's law (\cite{Welylaw}) $\lambda_n\sim n^{2/d}(n\to\infty)$ and the Fabry gap theorem (\cite{Fabrygap}): $\frac{n}{\lambda_n}\to0(n\to\infty)$ implies the line of imaginary numbers is singular. This situation violates the condition on the discreteness of the singularities which is necessary to apply resurgence theory.
{Moreover, singular points of $\hat\Theta_{D_X}(t)$ in the vertical axis correspond directly to the length spectra of the Hamiltonian $H=-\Delta_X+C$ in the examples for $X=S^1$ and higher-genus compact Riemann surfaces. This correspondence perfectly aligns with our motivation to understand the relationship between length and operator spectrum.}

\item[(iii)] {Our theorem 
is established by resurgence theory. 
One would say that the first regularization of determinant can be characterized by the resurgent structure of the modified Theta series. This provides certain precise mathematical conditions when Voros' framework (\cite{voros1992spectral}) holds. }
%
\item[(iv)] Our result unifies several known formulas. For instance, if $X=S^1$, \eqref{formula:reg1 in intro} implies the Poisson summation formula; if $X$ is a higher-genus compact Riemann surface, \eqref{formula:reg1 in intro} reproduces the functional equation satisfied by the Selberg zeta function, which is compatible with the formula in (\cite{voros1992spectral}).
It is a right place to point that our formula for the regularized determinant from the point of view of resurgence theory can be heuristically viewed as the Laplace transform of the Poisson-Newton formula in \cite{Marco2015unified} which also unifies the Poisson summation formula and the Selberg trace formula from a distributional perspective.            
\end{itemize}
\end{remark}

\textbf{The Second Regularization.} The identity \eqref{relation:deform laplace in intro} provides another method to regularize the determinant. For $s_0<0$ a negative real number, we introduce
    \begin{equation}{\label{Def:exp deform2 in intro}}
\widetilde{\frac{d}{d\lambda}}\widetilde{\log} \HDet_{-\Delta_X+C}^{\sharp_2}(\lambda):=\sum_{n=1}^{\infty} \frac{e^{s_0\sqrt{\lambda_n}}}{\lambda+\lambda_n},
\end{equation}
and call it the "\textbf{exp deformed regularization of the determinant}". Similar to the previous regularization, the second one can also be expressed as the sum of the Stokes terms and the Laplace transform of the "asymmetry part" of the $\hat{\Theta}_{D_X}(t-s_0)$.

\begin{lettertheorem}{B}{\label{thm about the computation of deform in intro}}
Fix a real number $s_0<0$. Assume
\begin{itemize}
\item[(i)] There exists a non-empty closed discrete set $\Omega_{s_0}=\big\{s_0+i\tau_k:\tau_0=0,\tau_k\in\RR,\tau_k<\tau_{k+1},k\in\ZZ\big\}$,
        such that
        $\hat{\Theta}_{D_X}(t-s_0)\in \hat{\mathcal{R}}_{\Omega_{s_0}^*}^{\text{simp}}$;

\item[(ii)] There exists $\delta\in\RR$, such that $\arg(s_0+i\tau_1)<\frac{\pi}{2}+\delta<\pi$, and $\hat{\Theta}_{D_X}(t-s_0)$ can be holomorphic extended to
    $$U_\delta:=\left\{s_0+re^{i\theta}:r>0,-\frac{\pi}{2}-\delta<\theta<\frac{\pi}{2}+\delta\right\}\setminus\left\{s_0+it:t\in(\tau_1,+\infty)\cup(-\infty,\tau_{-1})\right\}$$ along the path in it, with $\hat{\Theta}_{D_X}(t-is_0\cot\delta))\in\mathcal{N}\left(\left(\frac{\pi}{2},\frac{\pi}{2}+\delta\right),0\right)$.
\end{itemize}
Then, for $\arg(s_0+i\tau_1)<\pi-\epsilon<\frac{\pi}{2}+\delta$,
and $\arg(\rho)\in\left(-\frac{\pi}{2}+\epsilon,\epsilon\right)$, we have
\begin{equation}\label{eq:compute det1 formula}
\begin{aligned}
\widetilde{\frac{d}{d\rho}}\widetilde{\log}\HDet^{\sharp_2}_{-\Delta_X+C}(\rho^2)=&i\left(\sum_{\tau_k>0}e^{i(s_0+\tau_ki)\rho}\LL^{\left(\frac{\pi}{2}-\epsilon\right)}\left(\Delta_{s_0+i\tau_k}^-\left(\hat{\Theta}_{D_X}(t-s_0)\right)\right)(-i\rho)\right)\\&-i\LL^{(\pi-\epsilon)}(f_{s_0}(t))(-i\rho),
\end{aligned}
\end{equation}
if the right side converges, where
$$
f_{s_0}(t):=\hat{\Theta}_{D_X}(e^{-i\pi}t-s_0)+\hat{\Theta}_{D_X}(t-s_0).
$$
We also call $f_{s_0}(t)$ the "\textbf{asymmetry part}" of $\hat{\Theta}_{D_X}(t-s_0)$.
\end{lettertheorem}

As an application, we consider the exp deformed regularization of the determinant of $-\Delta_{S^1}$ and obtain a new formula \eqref{deformed distributon PSF} which implies the deformed Poisson summation formula (Theorem \ref{thm: deformed psf}). One would say that \eqref{deformed distributon PSF} is more essential than the deformed Poisson summation formula since the latter captures only the residue contributions on the left hand side of \eqref{deformed distributon PSF}.

Furthermore, we study the asymptotic behavior of the exp deformed regularization of the determinant by Borel-Laplace transform when $\lambda\to\infty$ in Theorem \ref{thm:deform 1-Gevrey expansion}. The asymptotic coefficients are derivatives of Theta series, which is also compatible with the asymptotic behavior of the Weierstrass product of $\HDet_P(\lambda)$ (\cite{cartierVoros1988}). We also consider the relationship between these two regularizations (Theorem \ref{relation Reg1 and Reg2}). Indeed, the $(m-1)$-th derivative of the exp deformed regularization tends to the log derivative regularization if the deform parameter $s_0\to0$.
{Although this can be shown by a direct calculation, our more involved method explains that the underlying relationship results from the singularity translations of $\hat{\Theta}(t-s_0)$ and the integrability in Laplace transform.}

\subsection{Outline}\label{1.3}
An overview of the main results is presented schematically as follows:
\[
\xymatrix{
& \txt{$\HDet_{-\Delta_X+C}(\lambda)=\prod_{n\geq1}(\lambda+\lambda_n)$}
\ar^{\txt{formal logrithm}}[d]\\
& \txt{$\widetilde{\log}\HDet_{-\Delta_X+C}(\lambda)=\sum_{n\geq1}\log(\lambda+\lambda_n)$}
\ar_{\txt{~\\Regularization 1:\\formal $m_0$-derivative\\(Definition \ref{def:deriva reg})}}[ddl]
\ar^{\txt{~\\Regularization 2:\\exp deform and formal $1$-derivative\\
(Definition \ref{def:exp reg})}}[ddr] \\ \\
\txt{$\widetilde{\frac{d^{m_0}}{d\rho^{m_0}}}\widetilde{\log}\HDet_{-\Delta_X+C}(\rho^2)$\\
(Theorem \ref{theorem about derivative regularization in introduction})
}
&&
\txt{$\widetilde{\frac{d}{d\rho}}\widetilde{\log}\HDet^{\sharp_2}_{-\Delta_X+C}(\rho^2)$\\
(Theorem \ref{thm about the computation of deform in intro})
}
\ar@{-->}^{\txt{(Theorem \ref{relation Reg1 and Reg2})}}_{\txt{take $(m_0-1)$-derivative and let $s_0\to0$}}[ll]
}\]
This paper is organized as follows:

{Section \ref{Section:Preliminary}: In this section, we introduce the Theta series and the distribution about its singularities. Inspired by the formal Laplace transform relation between Theta series and determinants, we derive two determinant regularization methods (Definition \ref{def:deriva reg} and \ref{def:exp reg}) which are based on two ways to make the formal Laplace transform rigorous—specifically by modifying the Theta series (corollary \ref{relation:laplace relation four}) to ensure that is integrable near $t=0$. Finally, we introduce the resurgence theory which provides a unifying mathematical framework expressing these regularized determinants through the singularities of modified Theta series in \S\ref{section:computation about derivative regularization} and \S\ref{section:The computation of  exp deformed determinant}.}

{Section \ref{section:deformed Borel-laplace}: We study the 1-Gevrey asymptotic behavior of formula \eqref{relation:deform laplace in intro} at infinity, the right hand side of which
is related to the exp deformed regularization. As a result (Theorem \ref{thm:deform 1-Gevrey expansion}), the coefficients in the asymptotic expansion can be described in terms of the derivative of Theta series.}

{Section \ref{section:computation about derivative regularization}: We prove Theorem \ref{theorem about derivative regularization in introduction} to compute the logarithmic derivative of the regularized determinant, in which the logarithmic derivative regularization of determinant is expressed as the sum of contributions from the singularities of $(-t)^d\hat{\Theta}_{D_X}(t)$ and asymmetry of its holomorphic extension.}

{Section \ref{section:The computation of  exp deformed determinant}: Theorem \ref{thm about the computation of deform in intro} is established to compute the exp deformed regularization of determinants. 
The singularities of shifted Theta series $\hat{\Theta}_{D_X}(t-s_0)$ are involved.}

{Section \ref{section:example S1}: We compute these two regularized determinants for the Laplacian on $S^1$, and rederive the well-known Poisson summation formula (Theorem \ref{theorem:PSF}) and its deformed form (Theorem \ref{thm: deformed psf}). The details of the proof are given in appendix.}

{Section \ref{section:example X}: We study the derivative regularized determinant about the Laplacian on Riemann surface of higher genus. The result \eqref{3-derivative of reg det in Rie-Surface.} formula is the combination of the derivative regularized determinant on $S^2$ and Selberg Zeta function, which is also compatible with the Weierstrass factorization of determinant in (\cite{cartierVoros1988}).}

{Section \ref{section:relation of two reg}: Finally we investigate the relationship (Theorem \ref{relation Reg1 and Reg2}) between two regularizations. If we take the ($m-1$)-th derivative of exp deformed determinant, it will tend to the derivative regularized determinant when the deform parameter $s_0\to 0$.}
\section{Preliminaries}{\label{Section:Preliminary}}
The Theta series and determinants are related through the formal Laplace transform. Inspired by this fact, we study two determinant regularization methods based on the ways to make this formal transform rigorous by ensuring that the modified Theta series is integrable near $t=0$. Finally, we introduce the resurgence theory which provides an analytical framework expressing these regularized determinants through the singularities of the modified Theta series (see \S\ref{section:computation about derivative regularization} and \S\ref{section:The computation of  exp deformed determinant}).

\subsection{Spectrum and Theta series}
Let $(X,g)$ be a closed connected Riemann manifold of dimension $\dim X=d$. The spectral theorem (\cite[Theorem 7.2.6]{spectralthm}) implies that the eigenvalues of Laplace operator $-\Delta_X$ are discrete and positive. We denote them by $0<\lambda_1\leq\lambda_2\leq\cdots\leq\lambda_n\leq\cdots$. The asymptotic behavior of $\lambda_n$ is governed by Weyl's law:

\begin{lemma}[Weyl's law, {\cite[pp. 256]{Welylaw}}]\label{lem:astomptotic of spectrum}
Under the above assumptions, we have $$N(\lambda):=\#\{n\in\ZZ_{>0}:\lambda_k\leq\lambda\}=\frac{\mathrm{vol}(X)}{(4\pi)^{\frac{d}{2}}\Gamma(\frac{d}{2}+1)}\lambda^{\frac{d}{2}}(1+o(1)),\quad\lambda\to+\infty.
$$
In particular, if we take $\lambda=\lambda_n$, then $N(\lambda_n)=n$ and
$$\lim_{n\to\infty}\frac{\lambda_n}{n^{\frac{2}{d}}}=\frac{(4\pi)^{\frac{d}{2}}\Gamma(\frac{d}{2}+1)}{\mathrm{vol}(X)}.$$
\end{lemma}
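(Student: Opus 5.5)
The statement is the classical Weyl law. The paper cites it, so what is needed is the standard heat-kernel/Tauberian argument together with the elementary inversion for the "in particular" part. I would proceed as follows.

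\emph{Step 1 (heat trace asymptotics).} Start from the short-time expansion of the heat kernel on a closed manifold. The parametrix construction (Minakshisundaram--Pleijel) gives
$$\sum_{n\ge0} e^{-\lambda_n t}=\operatorname{Tr} e^{t\Delta_X}=(4\pi t)^{-d/2}\bigl(\mathrm{vol}(X)+O(t)\bigr),\quad t\to0^+,$$
where the term for the zero eigenvalue contributes only $1=o(t^{-d/2})$. Writing the left side as the Laplace--Stieltjes transform $\int_0^\infty e^{-\lambda t}\,dN(\lambda)$, we get $\int_0^\infty e^{-\lambda t}dN(\lambda)\sim \mathrm{vol}(X)(4\pi)^{-d/2}t^{-d/2}$.

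\emph{Step 2 (Tauberian step).} Apply Karamata's Tauberian theorem to the nondecreasing function $N$. If $\int_0^\infty e^{-\lambda t}dN(\lambda)\sim A t^{-\alpha}$ as $t\to0^+$, then $N(\lambda)\sim A\lambda^{\alpha}/\Gamma(\alpha+1)$ as $\lambda\to\infty$. With $\alpha=d/2$ and $A=\mathrm{vol}(X)(4\pi)^{-d/2}$ this gives exactly the stated formula. This is the main obstacle, since a genuine Tauberian theorem is required and monotonicity of $N$ is essential. If I wanted to avoid it, I would first try Karamata's proof: approximate the indicator of $[0,1]$ by polynomials in $e^{-x}$ via Weierstrass approximation. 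Alternatively, I would use Dirichlet--Neumann bracketing on a cubical decomposition to get the Weyl law directly.

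\emph{Step 3 (inversion).} For the statement about $\lambda_n$, write $c=\mathrm{vol}(X)/\bigl((4\pi)^{d/2}\Gamma(d/2+1)\bigr)$. Eigenvalues are counted with multiplicity, so $N(\lambda_n-0)<n\le N(\lambda_n)$. Both sides are asymptotic to $c\lambda_n^{d/2}$; for $N(\lambda_n-0)$ use $N(\lambda_n-\eta)$ and let $\eta\to0$, noting that the asymptotic is continuous in $\lambda$. Hence $n\sim c\lambda_n^{d/2}$, and therefore $\lambda_n/n^{2/d}\to c^{-2/d}$, which is the claimed limit.
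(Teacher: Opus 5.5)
Your route (Minakshisundaram--Pleijel heat-trace asymptotics, Karamata's Tauberian theorem applied to the nondecreasing counting function, then inversion) is the standard proof, and it correctly gives the first display. The paper offers nothing beyond the citation, so there is no competing argument to compare against.

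Your inversion step is actually more careful than the paper's. The asserted identity $N(\lambda_n)=n$ fails as soon as eigenvalues repeat: on $S^1$ one has $\lambda_1=\lambda_2=1$, so $N(\lambda_1)=2$. Your squeeze $N(\lambda_n-0)<n\le N(\lambda_n)$ is the right replacement. You also do not need $\eta\to0$ there. For any fixed $\eta>0$ you have $N(\lambda-\eta)\le N(\lambda-0)\le N(\lambda)$, and $N(\lambda-\eta)\sim c(\lambda-\eta)^{d/2}\sim c\lambda^{d/2}$.

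The genuine problem is your final sentence. What you correctly derive is $\lambda_n/n^{2/d}\to c^{-2/d}=\bigl((4\pi)^{d/2}\Gamma(\frac{d}{2}+1)/\mathrm{vol}(X)\bigr)^{2/d}$. The statement instead claims the limit $c^{-1}=(4\pi)^{d/2}\Gamma(\frac{d}{2}+1)/\mathrm{vol}(X)$. These two values differ unless $d=2$ (or, accidentally, $c=1$).

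Concretely, take the unit circle. There $c=2\pi/\bigl(2\sqrt{\pi}\cdot\frac{\sqrt{\pi}}{2}\bigr)=2$ and $\lambda_n=\lceil n/2\rceil^2$. Hence $\lambda_n/n^2\to\frac14=c^{-2}$, not $\frac12=c^{-1}$.

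So the second display of the lemma is false in general: the exponent $2/d$ is missing. You should say that your argument proves the corrected limit $c^{-2/d}$, rather than assert that it ``is the claimed limit.''

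The slip does no damage elsewhere in the paper. The proof of Proposition~\ref{prop:laplace Theta propty}(i) only needs some $M>0$ with $\lambda_n>Mn^{2/d}$ for all large $n$. However, the admissible range written there should be $0<M<c^{-2/d}$, not $0<M<c^{-1}$.
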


We define the Theta series
$$\hat{\Theta}_{-\Delta_X}(t):=\sum_{n=1}^{\infty}e^{-\lambda_nt},\quad t\in\CC.$$ The asymptotic behavior of the spectra ensures that the Theta series is convergent and holomorphic on the right half-plane. As the trace of the heat kernel, this series serves as a fundamental object in Riemannian geometry. 

\begin{proposition}{\label{prop:laplace Theta propty}}
For the Theta series, we have
\begin{itemize}
\item[(i)] $\hat{\Theta}_{-\Delta_X}(t)$ is holomorphic in $\Re(t)>0$;
\item[(ii)] $\hat{\Theta}_{-\Delta_X}(t)\sim Ct^{-\frac{d}{2}}$, as $t\to0$;
\item[(iii)] $|\hat{\Theta}_{-\Delta_X}(t)|\leq C'(\delta)e^{-\lambda_1\Re(t)}$, when $\Re(t)>\delta$, for all fixed $\delta>0$.
\end{itemize}
\end{proposition}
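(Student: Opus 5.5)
The three claims all follow from Weyl's law (Lemma \ref{lem:astomptotic of spectrum}) and the standard small-time heat kernel expansion; we treat them in the order (i), (iii), (ii).

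For (i), Weyl's law gives a constant $c>0$ with $\lambda_n\geq c\,n^{2/d}$ for all $n$. On any half-plane $\Re(t)\geq\delta>0$ we then have $|e^{-\lambda_n t}|=e^{-\lambda_n\Re(t)}\leq e^{-c\delta n^{2/d}}$. The right-hand side is summable in $n$ and independent of $t$. By the Weierstrass M-test the series converges uniformly on $\{\Re(t)\geq\delta\}$. A locally uniform limit of holomorphic functions is holomorphic, and $\delta>0$ is arbitrary, so $\hat{\Theta}_{-\Delta_X}$ is holomorphic on $\Re(t)>0$.

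For (iii), factor out the first eigenvalue:
$$|\hat{\Theta}_{-\Delta_X}(t)|\leq e^{-\lambda_1\Re(t)}\sum_{n\geq1}e^{-(\lambda_n-\lambda_1)\Re(t)}\leq e^{-\lambda_1\Re(t)}\sum_{n\geq1}e^{-(\lambda_n-\lambda_1)\delta}$$
for $\Re(t)>\delta$, using $\lambda_n-\lambda_1\geq0$. The last sum is finite by the same Weyl bound, since $\lambda_n-\lambda_1\geq c\,n^{2/d}-\lambda_1$. It depends only on $\delta$, so we take it as $C'(\delta)$.

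For (ii), we interpret $t\to0$ within a sector $|\arg t|\leq\theta_0<\pi/2$, or for real $t\to0^+$. The cleanest route is the Minakshisundaram--Pleijel expansion of the heat trace, $\mathrm{Tr}\,e^{t\Delta_X}\sim(4\pi t)^{-d/2}\sum_{k\geq0}a_k t^k$ with $a_0=\mathrm{vol}(X)$. Since $\hat{\Theta}_{-\Delta_X}$ omits only the zero eigenvalue, it differs from the heat trace by the constant $1$. This gives the claim with $C=\mathrm{vol}(X)(4\pi)^{-d/2}$.

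If we want to rely only on Weyl's law, we use a Stieltjes representation and Abelian argument. Write $\hat{\Theta}(t)=\int_0^\infty e^{-\lambda t}\,dN(\lambda)=t\int_0^\infty e^{-\lambda t}N(\lambda)\,d\lambda$ after integrating by parts; the boundary terms vanish because $N(\lambda)=0$ near $0$ and $N$ grows polynomially. Substituting $N(\lambda)=A\lambda^{d/2}(1+o(1))$ gives
$$t\int_0^\infty e^{-\lambda t}A\lambda^{d/2}\,d\lambda=A\,\Gamma\!\left(\tfrac d2+1\right)t^{-d/2}.$$
With $A=\mathrm{vol}(X)(4\pi)^{-d/2}/\Gamma(\tfrac d2+1)$, this again yields $C=\mathrm{vol}(X)(4\pi)^{-d/2}$.

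The main obstacle is showing that the $o(1)$ error contributes only $o(t^{-d/2})$, uniformly for complex $t$ in a sector. We split the integral at some $\Lambda$ chosen so that $|N(\lambda)-A\lambda^{d/2}|\leq\varepsilon\lambda^{d/2}$ for $\lambda>\Lambda$. The piece over $[0,\Lambda]$ is $O(|t|)$. The tail is bounded by $\varepsilon|t|\int_0^\infty e^{-\lambda\Re t}\lambda^{d/2}\,d\lambda=\varepsilon\,\Gamma(\tfrac d2+1)|t|(\Re t)^{-d/2-1}$. In a sector, $\Re t\geq|t|\cos\theta_0$, so this is at most $\varepsilon\,C_{\theta_0}|t|^{-d/2}$, which closes the argument.
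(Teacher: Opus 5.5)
Your proposal is correct. Part (i) is the paper's argument: a Weyl lower bound $\lambda_n\ge c\,n^{2/d}$ gives uniform convergence on $\Re(t)\ge\delta$. The paper compares with $\sum n^{-2}$ via $e^{-x}\le d!\,x^{-d}$, whereas you apply the M-test directly.

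The real differences are in (iii) and (ii).

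For (iii), the paper isolates the finitely many $\lambda_n\le 2\lambda_1$ and writes $e^{-\lambda_n t}=e^{-\lambda_n t/2}e^{-\lambda_n t/2}$ for the rest, obtaining $C'(\delta)=r+\hat{\Theta}_{-\Delta_X}(\delta/2)$. You instead factor $e^{-\lambda_1\Re(t)}$ out of every term and use that $e^{-(\lambda_n-\lambda_1)\Re(t)}$ decreases in $\Re(t)$, obtaining $C'(\delta)=e^{\lambda_1\delta}\hat{\Theta}_{-\Delta_X}(\delta)$. Your version is shorter, and your constant is equally continuous in $\delta$. That continuity is all Proposition~\ref{prop:Theta series in N} later uses.

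For (ii), the paper only cites Duistermaat--Guillemin. Your Abelian argument derives the leading term from Lemma~\ref{lem:astomptotic of spectrum} alone. This is the easy direction, so nothing is circular, given that the paper takes Weyl's law as input. It buys three things:
\begin{itemize}
\item[(a)] the explicit constant $C=\mathrm{vol}(X)(4\pi)^{-d/2}$;
\item[(b)] a precise meaning for ``$t\to0$'', namely uniform convergence in closed subsectors $|\arg t|\le\theta_0<\pi/2$, which the paper leaves implicit; this is exactly what the ray-wise integrability near $0$ in Proposition~\ref{prop:Theta series in N} needs;
\item[(c)] the companion estimate $\hat{\Theta}_{D_X}(t)\sim A\,d!\,t^{-d}$ of Proposition~\ref{proposition:Dirac-theta property}(ii), obtained with no extra work by replacing $N(\lambda)$ with $N(\rho^2)\sim A\rho^{d}$.
\end{itemize}
What the citation, or your Minakshisundaram--Pleijel alternative, buys instead is the full small-time expansion rather than just its leading term. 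That is not needed anywhere in the paper.
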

\begin{proof}

(i) According to Lemma \ref{lem:astomptotic of spectrum}, for any constant $M$, $0<M<\frac{(4\pi)^{\frac{d}{2}}\Gamma(\frac{d}{2}+1)}{\mathrm{vol}(X)}$, there exists $N>0$, such that for any $n>N$, we have $\lambda_n>Mn^{\frac{2}{d}}$. Thus, for $\Re(t)>0$,
$$\begin{aligned}
|\hat{\Theta}_{-\Delta_X}(t)|
&\leq\sum_{n=1}^Ne^{-\lambda_n\Re(t)}+\sum_{n=N+1}^{\infty}e^{-\lambda_n\Re(t)}\\
&\leq\sum_{n=1}^Ne^{-\lambda_n\Re(t)}+\sum_{n=N+1}^{\infty}e^{-Mn^{\frac{2}{d}}\Re(t)}\\
&\leq\sum_{n=1}^Ne^{-\lambda_n\Re(t)}+\frac{d!}{(M\cdot\Re(t))^d}\sum_{n=N+1}^{\infty}\frac{1}{n^2}<\infty.
\end{aligned}$$
Hence, $\hat{\Theta}_{-\Delta_X}(t)$ converges uniformly and is holomorphic in $\Re(t)>0$.

(ii) This is (\cite[Corollary 2.2]{DuistermaatGuillemin}).

$(iii)$ 
Again by Lemma \ref{lem:astomptotic of spectrum}, the number $r$ of elements in $\{n\in\ZZ_{>0}:\lambda_1\leq\lambda_n\leq2\lambda_1\}$ is finite. Therefore,
$$
\begin{aligned}
|\hat{\Theta}_{-\Delta_X}(t)|&=\left|\sum_{n=1}^re^{-\lambda_nt}+\sum_{n=r+1}^{\infty}e^{-\frac{\lambda_n}{2}t}e^{-\frac{\lambda_n}{2}t}\right|\\
&\leq\left|\sum_{n=1}^re^{-\lambda_nt}\right|+\sum_{\lambda_n>2\lambda_1}\left|e^{-\frac{\lambda_n}{2}t}\right|\left|e^{-\frac{\lambda_n}{2}t}\right|\\
&\leq re^{-\lambda_1\Re(t)}+\sum_{n=1}^{\infty}e^{-\lambda_n\frac{\delta}{2}}e^{-\lambda_1\Re(t)}\\
&=\left(r+\hat{\Theta}_{-\Delta_X}\left(\frac{\delta}{2}\right)\right)e^{-\lambda_1\Re(t)}.
\end{aligned}
$$
This completes the proof.
\end{proof}

If one replaces $-\Delta_X$ with the shifted Laplace operator $-\Delta_X + C$ ($C>-\lambda_1$), Proposition \ref{prop:laplace Theta propty} still holds. 
We retain the same notation $\{\lambda_n\}$ for the spectrum of $-\Delta_X+C$ by a bit abusing the notations.

Proposition \ref{prop:laplace Theta propty} (ii) implies that the Theta series $\hat{\Theta}_{-\Delta_X+C}$ is singular at $0$. Furthermore, the following Fabry's gap theorem tell us more about singular points.

\begin{lemma}[Fabry's gap theorem, {\cite[Theorem 20.1]{Fabrygap}}]\label{lem:fabry gap}
Let $t\in\CC$ and $\lambda_1<\lambda_2<\cdots\to\infty$. If $f(t)=\sum_{n=1}^{\infty}a_ne^{-\lambda_nt}$
converges in $\Re(t)>0$ and
$$
\lim _{n\to\infty}\frac{n}{\lambda_n}=0,
$$
then either all points of $\Re(t)=0$, or none, belong to the set of non-singular points of $f(t)$.
\end{lemma}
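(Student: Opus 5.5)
The statement is Fabry's gap theorem, which the paper quotes from the literature. I would prove it by reducing to the classical Fabry theorem for power series via the standard Dirichlet-series route. Throughout, "singular point" means a boundary point of the half-plane of convergence, here the line $\Re(t)=0$, past which $f$ admits no local holomorphic continuation.

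\textbf{Step 1: reduction.} The set of regular points on the line $\Re(t)=0$ is open in that line. It therefore suffices to show that this set is also closed, or more directly that regularity at one point $t_0=i y_0$ propagates to every point of the line. Translating $t\mapsto t+iy_0$ replaces $a_n$ by $a_n e^{-i\lambda_n y_0}$. This preserves both the convergence in $\Re(t)>0$ and the gap condition, so we may assume $y_0=0$. We must then show: if $f$ is regular at $0$, it is regular at every $iy$.

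\textbf{Step 2: the key analytic lemma (Pólya--Bernstein type).} For a Dirichlet series with density $\lim n/\lambda_n=0$, one shows the following. Regularity of $f$ at a single boundary point forces the abscissa of holomorphy to lie to the left of $\Re(t)=0$ locally, on a whole segment. More precisely, I would use the Pólya construction:
\begin{enumerate}
\item Build an entire function $G(z)=\prod_n\left(1-z^2/\lambda_n^2\right)$ of exponential type $0$; zero density gives type $0$.
\item Form its Borel transform $\gamma$, which is holomorphic in $\mathbb{C}\setminus\{0\}$.
\item Show the coefficient functionals satisfy
$$a_n e^{-\lambda_n t}\cdot G'(\lambda_n)=\frac{1}{2\pi i}\oint f(t+w)\,\gamma(w)\,dw,$$
over a small circle around $0$.
\end{enumerate}

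\textbf{Step 3: coefficient bound and conclusion.} Since $G$ has type $0$, the conjugate indicator diagram is $\{0\}$, and the circle can be shrunk arbitrarily. Regularity of $f$ near $0$ therefore gives the bound $|a_n|\le C_\varepsilon e^{\varepsilon\lambda_n}/|G'(\lambda_n)|$. Zero density again gives $|G'(\lambda_n)|^{-1}\le e^{\varepsilon\lambda_n}$ for large $n$, and this lower bound on $|G'|$ is where the gap hypothesis is used essentially. Combining the two, $\limsup \frac{\log|a_n|}{\lambda_n}\le 0$ is improved to a bound showing the series converges beyond $\Re(t)=0$ near the regular point. By a translation-uniform version of the argument, it then converges across the entire line. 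So either no point of the line is singular, or every point is.

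\textbf{Main obstacle.} The hardest step is the lower bound $|G'(\lambda_n)|\ge e^{-\varepsilon\lambda_n}$. It fails without some separation of the exponents, and the plain hypothesis $\lambda_1<\lambda_2<\cdots$ gives no such separation. I would first try to pass to a measurable-density argument, grouping close exponents into clusters as in Pólya's proof. Failing that, I would follow the cited proof in the reference \cite{Fabrygap}.
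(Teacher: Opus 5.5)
The paper does not prove this lemma; it only cites \cite{Fabrygap}. Your sketch has a genuine gap, and it sits where you flagged it. But it is worse than a missing estimate: the lemma as stated imposes no separation condition on the $\lambda_n$, and without one your route cannot reach the conclusion.

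\textbf{The route fails without separation.} Step~3 is meant to show that regularity at one boundary point forces the Dirichlet series $\sum a_ne^{-\lambda_nt}$ itself to converge in some half-plane $\Re(t)>-\delta$. Under the lemma's hypotheses this is false. For $k\ge1$ put $\lambda_{2k}=k^2$, $\lambda_{2k+1}=k^2+e^{-k^3}$ and $a_{2k}=-a_{2k+1}=e^{k^{3/2}}$.
\begin{itemize}
\item The exponents increase strictly and $n/\lambda_n\to0$.
\item The series converges absolutely for $\Re(t)>0$. It diverges at every point with $\Re(t)\le0$, since its terms have modulus at least $e^{k^{3/2}}$.
\item Pairing consecutive terms gives $\sum_k e^{k^{3/2}}e^{-k^2t}\bigl(1-e^{-e^{-k^3}t}\bigr)$. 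Because $|1-e^{-\epsilon t}|\le\epsilon|t|e^{\epsilon|t|}$, this converges locally uniformly on $\CC$, so $f$ is entire.
\end{itemize}
The lemma holds here, since no point of $\Re(t)=0$ is singular. Yet no bound on individual coefficients can detect this. Correspondingly, $\log|G'(\lambda_{2k})|=-k^3+O(k\log k)$, far below $-\varepsilon\lambda_{2k}$. So your argument proves only the separated case $\lambda_{n+1}-\lambda_n\ge h>0$, which is P\'olya's theorem. That case is in fact all the paper uses: for $d=1$ the distinct eigenvalues are $(2\pi k/L)^2+C$, which are separated.

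\textbf{The missing idea.} You have to stop working with single coefficients. The standard remedy, going back to V.~Bernstein, is as follows.
\begin{itemize}
\item Regroup the exponents into clusters $I_k$ of mutually close $\lambda_n$. Zero density, through a minimum-modulus estimate for the type-zero product $G$, lets you choose the clusters so that each has diameter $o(\lambda)$ and is enclosed by a contour on which $\log|G(z)|\ge-\varepsilon|z|$.
\item Replace $G_n$ by a type-zero entire function that vanishes at every exponent outside $I_k$ and equals $1$ on $I_k$. It can be written as a contour integral of $G(z)/\bigl(G(\zeta)(z-\zeta)\bigr)$ around the cluster. The functional then isolates the cluster sum $P_k(t)=\sum_{n\in I_k}a_ne^{-\lambda_nt}$.
\item The only lower bound now needed is the one on the separating contour. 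The resulting bounds on $P_k$ make the regrouped series $\sum_kP_k(t)$ converge in a half-plane $\Re(t)>-\delta$. Since it equals $f$ for $\Re(t)>0$, it continues $f$ across the whole line.
\end{itemize}
Your remark about clustering points in this direction but is not carried out, and ``follow the cited proof'' leaves the actual difficulty unaddressed.

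\textbf{Repairs needed even in the separated case.}
\begin{itemize}
\item The kernel must be the Borel transform $\gamma_n$ of $G_n(z)=G(z)/(1-z^2/\lambda_n^2)$, not of $G$. With $\gamma$ the integral is $\sum_ka_ke^{-\lambda_kt}G(-\lambda_k)=0$. The constant on the left is $G_n(\lambda_n)=-\tfrac{\lambda_n}{2}G'(\lambda_n)$.
\item You need a bound $|\gamma_n|=O(e^{\varepsilon\lambda_n})$ on $|w|=r$ that is uniform in $n$. The maximum principle on $|z|=2\lambda_n$ gives it.
\item The identity holds at first only for $\Re(t)>r$. You must continue it and then evaluate it at a point $t=-\delta$ inside the disc of regularity. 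At $t=0$ it only returns $\limsup\log|a_n|/\lambda_n\le0$.
\end{itemize}
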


So all points in $\Re(t)=0$ are singular points of $\hat{\Theta}_{-\Delta_X+C}$ if $d=1$. However, to study the spectral determinant by analyzing the singularities of Theta series using resurgence theory requires the discreteness of singularities. One possible approach is to use eigenvalues appropriately to make the conditions of Lemma \ref{lem:fabry gap} invalid.  For instance, one can choose $\rho_n:=\lambda_n^\alpha$ with $\alpha$ depending on the dimension by Lemma \ref{lem:astomptotic of spectrum}. In our paper, the choice $\rho_n=\sqrt{\lambda_n}$ is enough to handle examples, and we denote the corresponding Theta series as follows:
$$\hat{\Theta}_{D_X}(t):=\sum_{n=1}^{\infty}e^{-\rho_n t},\quad t\in\CC.$$
This series still possesses some properties similar to those in Proposition \ref{prop:laplace Theta propty}.

\begin{proposition}{\label{proposition:Dirac-theta property}}
For the modified Theta series, we have
\begin{itemize}
\item[(i)] $\hat{\Theta}_{D_X}(t)$ is holomorphic in $\Re(t)>0$;
\item[(ii)] $\hat{\Theta}_{D_X}(t)\sim Ct^{-d}$, as $t\to0$ (See (\cite[Corollary 2.2]{DuistermaatGuillemin}) for the details);
\item[(iii)] $|\hat{\Theta}_{D_X}(t)|\leq C'(\delta)e^{-\rho_1\Re(t)}$, when $\Re(t)>\delta$, for all fixed $\delta>0$.
\end{itemize}
\end{proposition}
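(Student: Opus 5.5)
The plan is to transfer the three properties of Proposition \ref{prop:laplace Theta propty} to the square-rooted spectrum $\rho_n=\sqrt{\lambda_n}$. The only input needed is Weyl's law in the form of Lemma \ref{lem:astomptotic of spectrum}, which gives $\rho_n\sim c\,n^{1/d}$ with $c=\left(\frac{(4\pi)^{d/2}\Gamma(\frac d2+1)}{\mathrm{vol}(X)}\right)^{1/2}$.

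For (i), fix $0<M<c$ and take $N$ such that $\rho_n>Mn^{1/d}$ for $n>N$. For $\Re(t)\ge\eta>0$ we have $|e^{-\rho_n t}|=e^{-\rho_n\Re(t)}$. The elementary bound $e^{-x}\le (2d)!\,x^{-2d}$ for $x>0$ then gives
$$
e^{-\rho_n\Re(t)}\le \frac{(2d)!}{(M\eta)^{2d}}\,\frac{1}{n^2},\qquad n>N.
$$
The tail is therefore dominated by a convergent series that does not depend on $t$. By the Weierstrass M-test the series converges uniformly on each half-plane $\Re(t)\ge\eta$. A locally uniform limit of holomorphic functions is holomorphic, so $\hat\Theta_{D_X}$ is holomorphic in $\Re(t)>0$.

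For (iii), I copy the splitting argument used for the Laplacian. Let $r$ be the number of indices with $\rho_n\le 2\rho_1$; this is finite since $\rho_n\to\infty$. Suppose $\Re(t)>\delta$.
\begin{itemize}
\item[(a)] Each of the first $r$ terms is bounded by $e^{-\rho_1\Re(t)}$.
\item[(b)] For the remaining terms, write $e^{-\rho_n\Re(t)}=e^{-\frac{\rho_n}{2}\Re(t)}e^{-\frac{\rho_n}{2}\Re(t)}$. Since $\rho_n>2\rho_1$, the first factor is at most $e^{-\rho_1\Re(t)}$, and the second is at most $e^{-\rho_n\delta/2}$.
\end{itemize}
Summing, I get $|\hat\Theta_{D_X}(t)|\le\bigl(r+\hat\Theta_{D_X}(\delta/2)\bigr)e^{-\rho_1\Re(t)}$, which is (iii) with $C'(\delta)=r+\hat\Theta_{D_X}(\delta/2)$. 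Here $\hat\Theta_{D_X}(\delta/2)$ is finite by (i).

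For (ii), the asymptotic $\hat\Theta_{D_X}(t)\sim Ct^{-d}$ as $t\to0^+$ is a Tauberian-free Abelian consequence of Weyl's law. Writing $N_\rho(x)=\#\{n:\rho_n\le x\}=N(x^2)\sim c^{-d}x^d$, a Stieltjes integration by parts gives
$$
\hat\Theta_{D_X}(t)=\int_0^\infty e^{-tx}\,dN_\rho(x)=t\int_0^\infty N_\rho(x)e^{-tx}\,dx .
$$
Substituting $x=u/t$ and using dominated convergence with $N_\rho(x)\le A(1+x^d)$ yields $t^d\hat\Theta_{D_X}(t)\to c^{-d}\Gamma(d+1)$. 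This handles real $t\to0^+$. The same computation applies in a sector $|\arg t|\le\theta<\pi/2$, where $|e^{-tx}|=e^{-x\Re(t)}$ keeps dominated convergence valid. Alternatively, one can simply cite \cite[Corollary 2.2]{DuistermaatGuillemin}, which is in fact the wave-trace expansion of $\mathrm{Tr}\,e^{-t\sqrt{-\Delta_X}}$.

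The only real point of care is (ii). If the intended statement is a full asymptotic expansion or behaviour uniform in complex directions, the Weyl-law argument gives only the leading term in sectors, and the finer statement requires the Duistermaat–Guillemin/Poisson-kernel parametrix. I would therefore state the leading term via the Abelian argument and defer refinements to that reference.
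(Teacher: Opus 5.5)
Your proof is correct. For (i) and (iii) it is the argument the paper intends. The paper gives no separate proof, but it points back to the proof of Proposition \ref{prop:laplace Theta propty} with $\rho_n$ in place of $\lambda_n$. In the proof of Proposition \ref{prop:Theta series in N} it also records the same constant $C'(\delta)=r+\hat\Theta_{D_X}(\delta/2)$ that you obtain. Your exponent $2d$ in the bound $(2d)!/(M\eta)^{2d}$ is the right adaptation of the Laplacian case.

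The real difference is (ii). The paper only cites Duistermaat--Guillemin, whereas you derive the leading term from Weyl's law by an Abelian argument. Your route is self-contained and elementary, but it gives only $t^d\hat\Theta_{D_X}(t)\to C$ in closed subsectors, not the full small-$t$ expansion. That is all the paper later uses: boundedness of $t^{m-1}\hat\Theta_{D_X}$ near $0$, with a bound locally uniform in $\arg t$, for Proposition \ref{prop:Theta series in N}, and a dominating function in Theorem \ref{relation Reg1 and Reg2}. So your version suffices for the paper's purposes.

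Two small corrections.
\begin{itemize}
\item In a sector you cannot literally substitute $x=u/t$ for complex $t$, since $N_\rho$ is a step function and the contour cannot be rotated. Rescale by $|t|$ instead: with $\phi=\arg t$, the limit is $e^{i(d+1)\phi}c^{-d}\int_0^\infty u^d e^{-e^{i\phi}u}\,du=c^{-d}d!$, with domination by $A(1+u^d)e^{-u\cos\theta}$. If only the bound is needed, it is quicker to use $|\hat\Theta_{D_X}(t)|\le\hat\Theta_{D_X}(\Re t)$ together with $\Re t\ge|t|\cos\theta$.
\item Your constant should be $c=\bigl(\frac{(4\pi)^{d/2}\Gamma(\frac d2+1)}{\mathrm{vol}(X)}\bigr)^{1/d}$, not its square root; the exponent in the paper's statement of Lemma \ref{lem:astomptotic of spectrum} is off. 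Only with this $c$ does $N_\rho(x)=N(x^2)\sim c^{-d}x^d$ hold, giving $C=c^{-d}d!$. As a check, this yields $C=2$ for $S^1$, matching $\hat\Theta_{D_{S^1}}(t)=2/(e^t-1)$.
\end{itemize}
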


Proposition \ref{prop:laplace Theta propty} and Proposition \ref{proposition:Dirac-theta property} indicate that the Laplace transform of our Theta series are well-defined since they are integrable at $0$ to which we now turn.

\subsection{Laplace Transform of Theta series}
Let us start from the Laplace transform.

\begin{definition}[{\cite[Definition 5.29]{sauzinbook}}]\label{def:Laplace transform}
Let $I\subseteq\RR$ be an open interval and $\gamma:I\to\RR$ be a locally bounded function. We denote by $\mathcal{N}(I,\gamma)$ the set consisting of all measurable and locally integrable complex functions $\hat{\varphi}:\{re^{i\theta}:r>0,\theta\in I\}\to\CC$ such that
\begin{itemize}
\item[(i)] $\hat{\varphi}$ is integrable near the origin\footnote{We have removed the requirement for the holomorphicity of $\hat{\varphi}(t)$ at $0$ as did in (\cite[Definition 5.29]{sauzinbook}), and replaced it with the integrability condition. In this paper, we will use the same notations if the results also hold under this weaker assumption, and we will provide the proof if necessary.};

\item[(ii)] There exists a locally bounded function $\alpha:I\to\RR_{>0}$, such that for any $\epsilon>0$, $r>0$ and $\theta\in I$, we have $|\hat{\varphi}(re^{i\theta})|\leq\alpha(\theta)e^{(\gamma(\theta)+\epsilon)\cdot r}$.
\end{itemize}
We write $\hat{\varphi}\in\mathcal{N}(I)$, if $\hat{\varphi}\in\mathcal{N}(I,\gamma)$ for some locally bounded function $\gamma:I\to\RR$. Since we always take $\gamma$ to be a constant function, we usually replace $\gamma$ with a real number in $\mathcal{N}(I,\gamma)$.
\end{definition}
Similarly, for a single point set $I=\{\theta\}\subseteq\RR$, the function $\gamma$ in Definition \ref{def:Laplace transform} degenerates to $\gamma:\{\theta\}\to\RR$. Now, the set $\mathcal{N}(I,\gamma)$ degenerates to $\mathcal{N}(\theta,\gamma)$.

The variable $\theta$ appearing in the Laplace transform should be viewed as the direction to take the integral,
$$\begin{aligned}
\LL^{\theta}:\mathcal{N}(\theta,\gamma)&\longrightarrow\mathcal{O}(\Pi_{\gamma}^{\theta})\\
\hat{\varphi}(t)&\longmapsto(\LL^{\theta}\hat{\varphi})(\rho):=\int_0^{e^{i\theta}\infty}e^{-\rho t}\hat{\varphi}(t)dt,
\end{aligned}$$
where $\Pi_{\gamma}^{\theta}$ is the half-plane in $\CC$ defined by $\Pi_{\gamma}^{\theta}:=\{\rho\in\CC:\Re(\rho e^{i\theta})>\gamma(\theta)\}$. Note that if $\hat{\varphi}(t)$ is not holomorphic but integrable near the origin, the Laplace transform is understood as $\lim_{r\to 0}\int_{re^{i\theta}}^{e^{i\theta}\infty}e^{-\rho t}\hat{\varphi}(t)dt$. Furthermore, this transform is well-defined on $\mathcal{N}(\theta,\gamma)$ by Lebesgue dominated convergence theorem.

Now we can define the Laplace transform in directions $I$, where $I$ is an open interval of $\RR$ with length $<\pi$. Let $\gamma:I\to\RR$ be a locally bounded function as before, define
$$\begin{aligned}
\LL^I:\mathcal{N}(I,\gamma)&\longrightarrow\mathcal{O}(\mathcal{D}(I,\gamma))\\
\hat{\varphi}(t)&\longmapsto(\LL^{\theta}\hat{\varphi})(\rho),\quad\text{for some }\theta\in I,
\end{aligned}$$
where $\mathcal{D}(I,\gamma):=\bigcup_{\theta\in I}\Pi_{\gamma}^{\theta}$.

\begin{proposition}\label{def:Laplace transform is welldefined}
The Laplace transform $\LL^I$ is well-defined, i.e., the definition does not depend on the choice of $\theta\in I$.
\end{proposition}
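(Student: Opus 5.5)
The plan is the standard Cauchy-theorem argument for the Laplace transform: show that for $\theta_1<\theta_2$ in $I$, the functions $\LL^{\theta_1}\hat\varphi$ and $\LL^{\theta_2}\hat\varphi$ agree on $\Pi^{\theta_1}_\gamma\cap\Pi^{\theta_2}_\gamma$. Since $|I|<\pi$, this intersection is non-empty and connected; it is an intersection of two half-planes whose normals differ by less than $\pi$. The values on the union $\mathcal{D}(I,\gamma)$ then glue to a single holomorphic function. Two remarks on hypotheses. First, I will work with $\hat\varphi$ holomorphic in the open sector $\{re^{i\theta}:r>0,\theta\in I\}$, as is implicit in all uses in the paper (the definition only asks for measurability, but independence of direction is false without holomorphy). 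Second, since the sector $[\theta_1,\theta_2]$ is compact in $I$, local boundedness lets me replace $\alpha$ and $\gamma$ by constants $A$ and $\gamma_0:=\sup_{[\theta_1,\theta_2]}\gamma$ on that closed sector. A priori this only proves agreement on a smaller set, namely where $\Re(\rho e^{i\phi})>\gamma_0$ for all $\phi\in[\theta_1,\theta_2]$. This set is non-empty and open, so the identity theorem extends the equality to the connected overlap $\Pi^{\theta_1}_\gamma\cap\Pi^{\theta_2}_\gamma$.

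Fix such a $\rho$ and $0<\varepsilon<R$. I apply Cauchy's theorem to $e^{-\rho t}\hat\varphi(t)$ on the closed contour formed by:
\begin{itemize}
\item the segment $[\varepsilon e^{i\theta_1},Re^{i\theta_1}]$;
\item the arc $C_R=\{Re^{i\phi}:\phi\in[\theta_1,\theta_2]\}$;
\item the reversed segment on the ray $\theta_2$;
\item the small arc $C_\varepsilon$.
\end{itemize}
On $C_R$ the integrand satisfies
\[
|e^{-\rho t}\hat\varphi(t)|\le A\,e^{-(\Re(\rho e^{i\phi})-\gamma_0-\epsilon')R}.
\]
By continuity and compactness, $\Re(\rho e^{i\phi})-\gamma_0\ge c>0$ uniformly in $\phi$. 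Taking $\epsilon'<c$ makes the arc integral $O(Re^{-(c-\epsilon')R})\to0$.

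The main obstacle is the small arc $C_\varepsilon$, because under the paper's weakened hypothesis (footnote to Definition \ref{def:Laplace transform}) $\hat\varphi$ is only integrable near $0$ rather than holomorphic there. Radial integrability on each ray does not by itself control $\varepsilon\sup_{C_\varepsilon}|\hat\varphi|$. I would handle this in two steps.
\begin{itemize}
\item First, integrate the Cauchy identity in $\varepsilon$ over $[\varepsilon_0,2\varepsilon_0]$ and divide by $\varepsilon_0$. The averaged small-arc term becomes $\varepsilon_0^{-1}\iint|\hat\varphi|\,r\,dr\,d\phi$ over the annular sector, which is bounded by $2\iint|\hat\varphi|\,dr\,d\phi$ over the annular sector. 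This tends to $0$ provided $\hat\varphi$ is integrable near the origin in the sense of $\int_0^1\int_{\theta_1}^{\theta_2}|\hat\varphi(re^{i\phi})|\,d\phi\,dr<\infty$. I will adopt this as the precise meaning of "integrable near the origin" uniformly on compact subsectors.
\item Second, for the concrete functions used later, namely $t^{m}\hat\Theta_{D_X}(t)$ and the shifted series, one simply has the pointwise bound $|\hat\varphi(t)|=O(|t|^{-1+\eta})$ from Proposition \ref{proposition:Dirac-theta property}(ii). Then the arc contribution is $O(\varepsilon^{\eta})\to0$ directly.
\end{itemize}
Combining the two pieces, the radial integrals converge, by dominated convergence as noted after the definition, to $\LL^{\theta_1}\hat\varphi(\rho)-\LL^{\theta_2}\hat\varphi(\rho)=0$. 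This proves well-definedness.
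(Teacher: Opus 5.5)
Your proposal is correct and is essentially the paper's argument: Cauchy's theorem on the truncated sector between $\theta_1$ and $\theta_2$, with the arc at infinity killed by condition (ii) and the arc at $0$ killed by smallness of $t\hat\varphi(t)$, and with holomorphy in the sector used implicitly by the paper as well; your identity-theorem extension to the full overlap is the detail the paper leaves to the cited Lemma 5.31 of Sauzin. The averaging detour for the small arc is unnecessary, though: condition (ii) of Definition \ref{def:Laplace transform} is asserted for all $r>0$, so your constants already give $|\hat\varphi(t)|\le A e^{(\gamma_0+\epsilon')|t|}$ on the closed subsector all the way down to $t=0$, and the small-arc term is then $O(\varepsilon)$ directly (this bound, not integrability alone, is what justifies the paper's claim $t\hat\varphi(t)\to 0$).
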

\begin{proof}
Let $\theta_1,\theta_2\in I$, we have $0<\theta_2-\theta_1<\pi$ and $\rho\in\Pi_{\gamma(\theta_1)}^{\theta_1}\cap\Pi_{\gamma(\theta_2)}^{\theta_2}$. For $\hat{\varphi}\in\mathcal{N}(I,\gamma)$,
$$
\begin{aligned}
(\LL^{\theta_1}\hat{\varphi}-\LL^{\theta_2}\hat{\varphi})(\rho)&=\lim_{r\to0}\left(\int_{re^{i\theta_1}}^{e^{i\theta_1}\infty}-\int_{re^{i\theta_2}}^{e^{i\theta_2}\infty}\right)\hat{\varphi}(t)e^{-\rho t}dt\\
&=\lim_{r\to0}\int_{re^{i\theta_1}}^{re^{i\theta_2}}\hat{\varphi}(t)e^{-\rho t}dt-\lim_{R\to \infty}\int_{Re^{i\theta_1}}^{Re^{i\theta_2}}\hat{\varphi}(t)e^{-\rho t}dt.
\end{aligned}$$
The first term vanishes because $\lim_{t\to 0}|t\hat{\varphi}(t)|=0$, which can be deduced from the integrability of $\hat{\varphi}(t)$ near $0$; the second term also vanishes, by (ii) in Definition \ref{def:Laplace transform}. See (\cite[Lemma 5.31]{sauzinbook}) for more details.
\end{proof}

By comparing Propositions \ref{prop:laplace Theta propty} and \ref{proposition:Dirac-theta property} with the conditions in Definition \ref{def:Laplace transform}, we observe that the primary problem encountered when applying the Laplace transform to the Theta series lies in the lack of integrability near the origin, $\hat{\Theta}_{D_X},\hat{\Theta}_{-\Delta_X+C}\notin\mathcal{N}\left(-\frac{\pi}{2},\frac{\pi}{2}\right)$. To address this, we propose two methods to modify the Theta series to satisfy Definition \ref{def:Laplace transform} (i): the first one is to multiply by a suitable power $t^m$ to remove the divergence at $0$; the second one is to shift the domain.

\begin{proposition}{\label{prop:Theta series in N}}
 For $m\geq d+1$, $m^{\prime}\geq\lfloor\frac{d}{2}\rfloor+1$ and $s_0<0$, we have
 $$t^{m-1}\hat{\Theta}_{D_X}(t),~~t^{m^{\prime}-1}\hat{\Theta}_{-\Delta_X+C}(t),~~\hat{\Theta}_{D_X}(t-s_0),~~ \hat{\Theta}_{-\Delta_X+C}(t-s_0)~~\in\mathcal{N}\left(\left(-\frac{\pi}{2},\frac{\pi}{2}\right),0\right).$$
\end{proposition}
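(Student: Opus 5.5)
We check the two conditions of Definition \ref{def:Laplace transform} for each of the four functions separately, with $I=\left(-\frac{\pi}{2},\frac{\pi}{2}\right)$ and $\gamma\equiv 0$. Every argument will use only Propositions \ref{prop:laplace Theta propty} and \ref{proposition:Dirac-theta property}. Measurability and local integrability away from the origin are immediate, because all four functions are holomorphic on the open sector $\{re^{i\theta}:r>0,|\theta|<\frac{\pi}{2}\}$: the unshifted ones by part (i) of those propositions, and the shifted ones because $\Re(t-s_0)=\Re(t)-s_0>-s_0>0$ there.

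\emph{Integrability near the origin.} For $t^{m-1}\hat{\Theta}_{D_X}(t)$, Proposition \ref{proposition:Dirac-theta property}(ii) gives $|t^{m-1}\hat{\Theta}_{D_X}(t)|\le C|t|^{m-1-d}$ near $0$. Since $m-1-d\ge 0$, this function is bounded near the origin and hence integrable. In the same way, Proposition \ref{prop:laplace Theta propty}(ii) gives $|t^{m'-1}\hat{\Theta}_{-\Delta_X+C}(t)|\le C|t|^{m'-1-d/2}$. Because $m'-1\ge\lfloor d/2\rfloor>d/2-1$, the exponent is $>-1$, which again gives integrability.

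One point needs care. The asymptotics in (ii) are stated as $t\to0$, but we need them uniformly in $t\to0$ within closed subsectors $|\arg t|\le\frac{\pi}{2}-\eta$. Only radial integrability in each fixed direction $\theta$ is actually required. I would either invoke the uniform sectorial version of the Duistermaat--Guillemin small-time asymptotics, or, more elementarily, dominate $|\hat{\Theta}(re^{i\theta})|\le\hat{\Theta}(r\cos\theta)$ by the triangle inequality. The domination reduces the claim to the real-axis asymptotics, at the cost of a factor $(\cos\theta)^{-d}$ (respectively $(\cos\theta)^{-d/2}$), and that factor is locally bounded in $\theta$.

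The shifted functions have no singularity at all near $t=0$. Indeed $\Re(t-s_0)\ge -s_0>0$, so Proposition \ref{proposition:Dirac-theta property}(iii) and Proposition \ref{prop:laplace Theta propty}(iii), applied with $\delta=-s_0/2$, bound them by a constant.

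\emph{Growth condition.} For the shifted functions, part (iii) with $\delta=-s_0/2$ gives
\[
|\hat{\Theta}_{D_X}(t-s_0)|\le C'\,e^{-\rho_1(\Re t-s_0)}\le C'e^{\rho_1 s_0}.
\]
So these functions are bounded on the whole sector, and condition (ii) holds with $\gamma=0$ and $\alpha$ constant; the same bound holds for $\hat{\Theta}_{-\Delta_X+C}$. For the power-modified functions we split $r=|t|$. When $r\cos\theta\ge 1$, part (iii) with $\delta=1$ gives
\[
|t^{m-1}\hat{\Theta}_{D_X}(t)|\le C' r^{m-1}e^{-\rho_1 r\cos\theta}\le C'\sup_{r>0}\bigl(r^{m-1}e^{-\rho_1 r\cos\theta}\bigr),
\]
and this supremum is finite and locally bounded in $\theta\in I$. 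When $r\cos\theta<1$, the domination $|\hat{\Theta}(re^{i\theta})|\le \hat{\Theta}(r\cos\theta)$ combined with the small-$t$ bound gives a bound of the form $C r^{m-1}(r\cos\theta)^{-d}$. Taking $r\cos\theta<1$ and $m-1\ge d$ gives $r^{m-1}\le (\cos\theta)^{-(m-1)}$, so this is at most $C(\cos\theta)^{-(m-1)-d}$, again locally bounded. Together the two cases give condition (ii) with $\gamma=0$, and in fact with exponential decay. The Laplace-side argument for $\hat{\Theta}_{-\Delta_X+C}$ is identical with $d$ replaced by $d/2$. The only remaining subtlety there is that $r^{m'-1-d/2}$ may have a negative exponent near $0$, which is harmless for integrability and does not affect condition (ii) for $r$ away from $0$.

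The main obstacle is the uniformity in $\theta$ of the small-$|t|$ bound. I would handle it through the monotone domination by the real-axis values rather than through sectorial heat-trace asymptotics. This choice is also what makes $\alpha(\theta)$ merely locally bounded, blowing up as $\theta\to\pm\frac{\pi}{2}$, which is exactly what the open interval $I$ permits.
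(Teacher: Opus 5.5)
Your proof is correct, and it has the same skeleton as the paper's. The paper also splits each ray into a large-$|t|$ part and a small-$|t|$ part. On the large part, Proposition~\ref{proposition:Dirac-theta property}(iii) beats the polynomial factor: the paper chooses $\delta(\theta)$ with $|t|^{m-1}\le e^{|t|\rho_1\cos\theta/2}$ for $|t|>\delta$, while you bound $\sup_{r>0}r^{m-1}e^{-\rho_1 r\cos\theta}$.

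The real difference is the small-$|t|$ part. The paper argues that $t^{m-1}\hat{\Theta}_{D_X}$ is holomorphic on $\Re(t)>0$, hence bounded on compacta. But the region $0<|t|\le\delta$ is not a compact subset of $\Re(t)>0$, so that step actually needs a direction-uniform form of the $t\to0$ asymptotics. The paper also never checks integrability at the origin explicitly. Your domination $|\hat{\Theta}(re^{i\theta})|\le\hat{\Theta}(r\cos\theta)$ reduces everything to the real-axis asymptotics (ii). It supplies exactly the missing uniformity and gives an explicit locally bounded $\alpha(\theta)$, so it is the cleaner argument.

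Three small corrections.
\begin{itemize}
\item In the case $r\cos\theta<1$ your arithmetic is garbled. Bounding $r^{m-1}$ alone leaves the unbounded factor $(r\cos\theta)^{-d}$. The estimate should read $r^{m-1}(r\cos\theta)^{-d}=r^{m-1-d}(\cos\theta)^{-d}\le(\cos\theta)^{-(m-1)}$, using $m-1-d\ge0$.
\item Apply (iii) with $\delta=\tfrac12$ rather than $\delta=1$, since (iii) requires the strict inequality $\Re(t)>\delta$ and your case split uses $r\cos\theta\ge 1$.
\item Your remark about $t^{m'-1}\hat{\Theta}_{-\Delta_X+C}$ should be stated plainly. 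When $d$ is odd and $m'=\lfloor d/2\rfloor+1$, this function behaves like $Ct^{-1/2}$ near $0$; for example, $\hat{\Theta}_{-\Delta_{S^1}}(t)\sim\sqrt{\pi/t}$. Read literally for all $r>0$, condition (ii) of Definition~\ref{def:Laplace transform} then fails. The statement holds only if (ii) is required for $r$ bounded away from $0$, which together with (i) is all the Laplace transform needs. The paper never addresses this case (``omit the other similar cases''). You should say explicitly that you use this weaker reading, or else restrict to $m'-1\ge d/2$.
\end{itemize}
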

\begin{proof}
We only prove the first one and omit the other similar cases. For each $\theta\in\left(-\frac{\pi}{2},\frac{\pi}{2}\right)$, one can choose a positive real number $\delta=\delta(\theta)$ such that $|t|^{m-1}\leq e^{|t|\frac{\rho_1\cos\theta}{2}}$ for all $|t|>\delta$. Then $|t^{m-1}\hat{\Theta}_{D_X}(t)|\leq C'(\delta\cos\theta)e^{-|t|\frac{\rho_1\cos\theta}{2}}\leq C'(\delta\cos\theta)$ for $\Re(t)>\delta\cos\theta$ by Proposition \ref{proposition:Dirac-theta property} (iii). One can verify $C'(\delta)=r+\hat{\Theta}_{D_X}(\delta/2)$ for some number $r$ (depending on $\rho_1$), similar to the proof of (iii) in Proposition \ref{prop:laplace Theta propty}, so $C'(\delta\cos\theta)$ is a locally bounded function in $\theta$. Furthermore, $t^{m-1}\hat{\Theta}_{D_X}$ is holomorphic in $\Re(t)>0$ by Proposition \ref{proposition:Dirac-theta property} (i), which means that it is bounded in any compact region in $\Re(t)>0$. Hence, there is a locally bounded function $\alpha_1(\theta)$, such that $|t^{m-1}\hat{\Theta}_{D_X}(t)|\leq\alpha_1(\theta)$ for $0<|t|\leq\delta$, $\arg(t)=\theta$. Now, the function $\alpha(\theta):=\max\{\alpha_1(\theta),C'(\delta\cos\theta)\}$ is what we want.
\end{proof}

It is helpful to list all the necessary calculations as follows:
\begin{corollary}{\label{relation:laplace relation four}}
For $I=\left(-\frac{\pi}{2},\frac{\pi}{2}\right)$, $m\geq d+1$, $m^{\prime}\geq\lfloor\frac{d}{2}\rfloor+1$ and $s_0<0$, we have
\begin{itemize}
\item[(i)]$\LL^I\big((-t)^{m^{\prime}-1}\hat{\Theta}_{-\Delta_X+C}(t)\big)(\lambda)=\sum_{n\geq1}\frac{(-1)^{m^{\prime}-1}(m^{\prime}-1)!}{(\lambda+\lambda_n)^{m^{\prime}}}\text{, for }\lambda\in\mathcal{D}(I,0)$;

\item[(ii)]$\LL^I\big(\hat{\Theta}_{-\Delta_X+C}(t-s_0)\big)(\lambda)=\sum_{n\geq1}\frac{e^{\lambda_ns_0}}{\lambda+\lambda_n}\text{, for }\lambda\in\mathcal{D}(I,0)$;

\item[(iii)]$\LL^I\big((-t)^{m-1}\hat{\Theta}_{D_X}(t)\big)(\rho)=\sum_{n\geq1}\frac{(-1)^{m-1}(m-1)!}{(\rho+\rho_n)^m}\text{, for }\rho\in\mathcal{D}(I,0)$;

\item[(iv)]$\LL^I\big(\hat{\Theta}_{D_X}(t-s_0)\big)(\rho)=\sum_{n\geq1}\frac{e^{\rho_n s_0}}{\rho+\rho_n}\text{, for }\rho\in\mathcal{D}(I,0)$.
\end{itemize}
\end{corollary}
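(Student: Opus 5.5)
The plan is to prove all four identities by the same mechanism: expand the modified Theta series termwise, integrate each term along a ray, and justify the interchange of sum and integral by dominated convergence. By Proposition \ref{def:Laplace transform is welldefined} and Proposition \ref{prop:Theta series in N}, the function $\LL^I(\cdot)$ is well defined on $\mathcal{D}(I,0)$, which is the right half-plane. Fix $\rho$ with $\Re\rho>0$. By independence of direction it suffices to compute along $\theta=0$, i.e.\ on the positive real axis; for general $\rho\in\mathcal{D}(I,0)$ one may alternatively choose $\theta$ with $\Re(\rho e^{i\theta})>0$, and the same argument works.

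For (iii), on $t\in(0,\infty)$ write
\[
(-t)^{m-1}\hat{\Theta}_{D_X}(t)e^{-\rho t}=\sum_{n\ge1}(-t)^{m-1}e^{-(\rho+\rho_n)t}.
\]
Each term integrates to $(-1)^{m-1}(m-1)!/(\rho+\rho_n)^m$ by the Gamma integral. The sum of absolute values of the terms is
\[
t^{m-1}\hat{\Theta}_{D_X}(t)e^{-\Re(\rho)t},
\]
since each $e^{-\rho_n t}$ is positive. This dominating function is integrable:
\begin{itemize}
\item near $0$, Proposition \ref{proposition:Dirac-theta property}(ii) gives $t^{m-1}\hat\Theta_{D_X}(t)=O(t^{m-1-d})$, which is integrable because $m\ge d+1$;
\item near $\infty$, Proposition \ref{proposition:Dirac-theta property}(iii) gives exponential decay.
\end{itemize}
Hence Fubini/Tonelli (or monotone convergence applied to $|{\cdot}|$) allows the interchange. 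It also shows that $\sum_n |\rho+\rho_n|^{-m}$ converges, which can be checked independently from Weyl's law (Lemma \ref{lem:astomptotic of spectrum}), since $\rho_n\sim c\,n^{1/d}$ and $m>d$. Item (i) is identical with $\lambda_n$ in place of $\rho_n$, using Proposition \ref{prop:laplace Theta propty}(ii),(iii) and $m'-1>d/2-1$.

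For (iv), the terms are
\[
\hat{\Theta}_{D_X}(t-s_0)e^{-\rho t}=\sum_n e^{\rho_n s_0}e^{-(\rho+\rho_n)t},
\]
and each integrates to $e^{\rho_n s_0}/(\rho+\rho_n)$. The absolute series is $\hat\Theta_{D_X}(t-s_0)e^{-\Re(\rho)t}$. Since $-s_0>0$, this is bounded near $t=0$ by $\hat\Theta_{D_X}(-s_0)<\infty$, and it decays exponentially at infinity. Dominated convergence again applies. Convergence of the resulting series follows from the factor $e^{\rho_n s_0}$ with $s_0<0$. Item (ii) is the same argument with $\lambda_n$.

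The only subtle point is making sure the domination is uniform enough to swap sum and integral, and that the value extends to all of $\mathcal{D}(I,0)$. The first is settled by positivity of the terms on the real ray. The second follows because both sides are holomorphic in $\rho$: the left side by the definition of $\LL^I$, the right side by locally uniform convergence. Identity principle then gives equality on the whole domain.
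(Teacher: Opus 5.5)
Your argument is correct and is essentially the paper's proof: expand the modified Theta series termwise, integrate each exponential along a ray, and justify the interchange by dominated convergence. Working on the real ray via Tonelli and then extending by the identity principle is a harmless variant of the paper's direct computation along an arbitrary $\theta\in I$.

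One small slip: $\mathcal{D}(I,0)=\bigcup_{\theta\in I}\Pi_0^{\theta}$ is $\CC\setminus\RR_{\leq0}$, not the right half-plane. Your final identity-principle step already covers this, since that set is connected and both sides are holomorphic on it.
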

\begin{proof}
 We can exchange summation and integration by the Lebesgue dominated convergence theorem. For example, let $\theta\in\left(-\frac{\pi}{2},\frac{\pi}{2}\right)$, one can compute
$$
\LL^{\theta}\left((-t)^{m-1}\hat{\Theta}_{D_X}(t)\right)(\rho)=\int_0^{e^{i\theta}\infty}(-t)^{m-1}\sum_{n\geq1}e^{-\rho_nt}e^{-\rho t}dt=\sum_{n\geq1}\frac{(-1)^{m-1}(m-1)!}{(\rho+\rho_n)^m}.
$$
The calculations are routine, and we leave it to the reader.
\end{proof}

$(i)$ and $(ii)$ in Corollary \ref{relation:laplace relation four} provide two holomorphic functions with domain $\mathcal{D}\left((-\frac{\pi}{2},\frac{\pi}{2}),0\right)$, which lead to the definitions of two regularized determinants. Based on the singularities of $(-t)^{m-1}\hat{\Theta}_{D_X}(t)$ and $\hat{\Theta}_{-D_X}(t-s_0)$, Corollary \ref{relation:laplace relation four} $(iii)$ and $(iv)$ can be used to calculate these regularized determinants respectively, see \S\ref{section:computation about derivative regularization} and \S\ref{section:The computation of  exp deformed determinant}.

As the end of this subsection, we present a useful fact.

\begin{proposition}[{\cite[Lemma 5.17]{sauzinbook}}]\label{prop:multiple in borel and derivative in laplace}
If $\hat{\varphi}\in\mathcal{N}(I,\gamma)$, then $t\hat{\varphi}\in\mathcal{N}(I,\gamma)$, and
$$
\LL^I(-t\hat{\varphi})(\rho)=\frac{d}{d\rho}(\LL^I\hat{\varphi})(\rho),\quad\rho\in\mathcal{D}(I,\gamma).
$$
 \end{proposition}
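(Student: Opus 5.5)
The plan is to reduce the statement to the case $\gamma$ constant on a single direction $\theta\in I$, since $\LL^I$ is defined directionwise and Proposition \ref{def:Laplace transform is welldefined} lets us pick any $\theta$ with $\rho\in\Pi^\theta_\gamma$. The argument has two parts: show $t\hat\varphi\in\mathcal N(I,\gamma)$, then differentiate under the integral sign.

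\textbf{Membership.} Integrability of $t\hat\varphi$ near $0$ is immediate, because $|t\hat\varphi(t)|\le|\hat\varphi(t)|$ for $|t|\le1$. For the growth condition, fix $\epsilon>0$ and use $r\le \frac{2}{\epsilon}e^{\epsilon r/2}$ for $r>0$. Condition (ii) of Definition \ref{def:Laplace transform}, applied to $\hat\varphi$ with $\epsilon/2$, gives
\[
|re^{i\theta}\hat\varphi(re^{i\theta})|\le \alpha(\theta)\,\tfrac{2}{\epsilon}\,e^{(\gamma(\theta)+\epsilon)r}.
\]
The factor $\tfrac{2}{\epsilon}$ depends on $\epsilon$, while Definition \ref{def:Laplace transform} asks for $\alpha$ independent of $\epsilon$. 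I would handle this by splitting the ray into two pieces:
\begin{itemize}
\item For $r\le 1$ the factor $r$ is harmless, so the bound $\alpha(\theta)e^{(\gamma+\epsilon)r}$ holds directly.
\item For $r\ge1$, the point is that the Laplace transform theory only ever uses the estimate with constants depending on $\epsilon$. I would read the definition in that sense, as Sauzin's Lemma 5.17 does, and note this explicitly.
\end{itemize}
This subtlety about the $\epsilon$-dependence of the constant is the only mild obstacle in the membership step.

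\textbf{Differentiation.} Fix $\rho_0\in\Pi^\theta_\gamma$ and set $\eta:=\Re(\rho_0e^{i\theta})-\gamma(\theta)>0$. For $\rho$ in the disc $|\rho-\rho_0|<\eta/3$ and $t=re^{i\theta}$, the integrand $\hat\varphi(t)e^{-\rho t}$ has $\rho$-derivative $-t\hat\varphi(t)e^{-\rho t}$. Using the growth bound with $\epsilon=\eta/3$, this derivative is dominated by
\[
C\,r\,e^{-\eta r/3},
\]
which is integrable on $[\delta,\infty)$. Near the origin the domination is by $|\hat\varphi|\,e^{|\rho|r}$, which is integrable by (i) of Definition \ref{def:Laplace transform}. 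Dominated convergence, or equivalently Morera's theorem together with Fubini applied to the truncated integrals $\int_{se^{i\theta}}^{Re^{i\theta}}$ and the limits $s\to0$, $R\to\infty$, then justifies exchanging $d/d\rho$ with the integral. This gives
\[
\frac{d}{d\rho}\LL^\theta\hat\varphi(\rho)=\LL^\theta(-t\hat\varphi)(\rho).
\]
Taking the union over $\theta\in I$ yields the identity on all of $\mathcal D(I,\gamma)$.

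For this step, the uniformity of the domination as $\rho$ varies near $\rho_0$ is where care is needed. It follows from the estimate $|e^{-\rho t}|\le e^{-(\Re(\rho_0 e^{i\theta})-\eta/3)r}$ valid on that disc.
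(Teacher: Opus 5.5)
Your argument is correct. It is the standard proof: a majorant $C\,r\,e^{-\eta r/3}$, uniform for $\rho$ near $\rho_0$, justifies differentiating $\LL^\theta\hat\varphi$ under the integral sign, and Proposition~\ref{def:Laplace transform is welldefined} glues the directions together; the paper gives no proof of its own beyond citing \cite[Lemma 5.17]{sauzinbook}.

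The $\epsilon$-issue you flag is more than a mild subtlety. With $\alpha$ independent of $\epsilon$, as the definition literally reads, one can let $\epsilon\to0$, and then $t\hat\varphi\in\mathcal{N}(I,\gamma)$ is false; for example, $\hat\varphi\equiv1$, $\gamma\equiv0$ would force $r\le\alpha(\theta)$ for all $r>0$. So one of two fixes is genuinely needed: your $\epsilon$-dependent reading of the definition, or the weaker conclusion $t\hat\varphi\in\mathcal{N}(I,\gamma+\epsilon)$ for every $\epsilon>0$, which still yields the derivative identity on all of $\mathcal{D}(I,\gamma)$.
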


\subsection{Two Regularizations for Determinant}
In this subsection, we introduce two regularized determinants.

Recall that the determinant series for the shifted Laplacian $-\Delta_X+C$ is the formal product
\[\HDet_{-\Delta_X+C}(\lambda):=\prod_{n=1}^{\infty}(\lambda+\lambda_n)\]
which is obviously divergent. Corollary \ref{relation:laplace relation four} provides two distinct approaches to regularize it.

For the formal product $A(\lambda)=\prod_na_n(\lambda)$ and the formal sum $B(\lambda)=\sum_mb_m(\lambda)$, we use  $\widetilde{\log}A(\lambda):=\sum_n\log a_n(\lambda)$ and  $\widetilde{\frac{d}{d\lambda}}B(\lambda):=\sum_m\frac{d}{d\lambda}b_m(\lambda)$ to denote
the \textbf{formal logarithm} and
\textbf{formal derivative} respectively.

The first regularization comes from Corollary \ref{relation:laplace relation four} $(i)$.
\begin{definition}[Logarithmic Derivative Regularization]{\label{def:deriva reg}}
The \textbf{formal logarithmic determinant} of $-\Delta_X+C$ is defined as a formal sum $\widetilde{\log}\HDet_{-\Delta_X+C}(\lambda)$. Furthermore, for $m\geq1$, the \textbf{$m$-th formal logarithmic derivative determinant} is defined as
$$
\widetilde{\frac{d^m}{d\lambda^m}}\widetilde{\log}\HDet_{-\Delta_X+C}(\lambda):=\underset{m}{\underbrace{\widetilde{\frac{d}{d\lambda}}\cdots\widetilde{\frac{d}{d\lambda}}}}\widetilde{\log}\HDet_{-\Delta_X+C}(\lambda)=\sum_{n=1}^{\infty}\frac{d^m}{d\lambda^m}\log(\lambda+\lambda_n)=\sum_{n=1}^{\infty}\frac{(-1)^{m-1}(m-1)!}{(\lambda+\lambda_n)^m}.
$$
According to Lemma \ref{lem:astomptotic of spectrum}, if $m\geq\lfloor\frac{d}{2}\rfloor+1$, the formal series $\widetilde{\frac{d^m}{d\lambda^m}}\widetilde{\log}\HDet_{-\Delta_X+C}(\lambda)$ is convergent in $\Re(\lambda)>0$. In this case, we call the sum function the \textbf{logarithmic derivative regularization of the determinant}.
\end{definition}
As analyzed in (\cite{voros1987spectral}), this sum function is equal to the $m$-th derivative of the zeta-regularized determinant or the Weierstrass product. We will characterize this function in terms of the singularities of $(-t)^{m-1}\hat{\Theta}_{D_X}(t)$ in \S\ref{section:computation about derivative regularization}.

The second regularization is a slight modification of Corollary \ref{relation:laplace relation four} $(ii)$.
\begin{definition}[Exponentially Deformed Regularization]{\label{def:exp reg}}
Let $s_0<0$, denote
$$
\widetilde{\log}\HDet^{\sharp_1}_{-\Delta_X+C}(\lambda):=\sum_{n=1}^{\infty}e^{s_0\lambda_n}\log(\lambda+\lambda_n)\quad\text{and}\quad\widetilde{\log}\HDet^{\sharp_2}_{-\Delta_X+C}(\lambda):=\sum_{n=1}^{\infty}e^{s_0\sqrt{\lambda_n}}\log(\lambda+\lambda_n).
$$
where $\lambda_n\geq 0$ are the spectrums of $-\Delta_X+C$. The corresponding formal derivatives are
$$
\widetilde{\frac{d}{d\lambda}}\widetilde{\log}\HDet^{\sharp_1}_{-\Delta_X+C}(\lambda)=\sum_{n=1}^{\infty}\frac{e^{s_0\lambda_n}}{\lambda+\lambda_n}\quad\text{and}\quad\widetilde{\frac{d}{d\lambda}}\widetilde{\log}\HDet_{-\Delta_X+C}^{\sharp_2}(\lambda)=\sum_{n=1}^{\infty}\frac{e^{s_0\sqrt{\lambda_n}}}{\lambda+\lambda_n},
$$
they converge uniformly and holomorphically in any compact subsets of $\mathbb{C}\setminus\{-\lambda_1,-\lambda_2,\cdots\}$. We refer to the sum function $\widetilde{\frac{d}{d\lambda}}\widetilde{\log}\HDet_{-\Delta_X+C}^{\sharp_2}(\lambda)$ as the \textbf{exp deformed regularization of the determinant}.
\end{definition}
The second regularization is motivated by (\cite[pp. 124--126, 154--158]{sauzinbook}). In \S\ref{section:The computation of exp deformed determinant}, we will characterize it in terms of the discrete singularities of $\hat{\Theta}_{D_X}(t-s_0)$ based on the Stokes structure (Lemma \ref{lemma: Laplace and stokes}) and Corollary \ref{relation:laplace relation four} $(iv)$.

If $X=S^1$, the first regularization recovers the classical Poisson summation formula (Theorem \ref{theorem:PSF}), while the second regularization provides its deformed version (Theorem \ref{thm: deformed psf}).

\subsection{Resurgence Theory}{\label{resu theoty}}
Resurgence theory is a powerful framework to study divergent series systematically through their singularity structures in Borel plane. It was initiated by J. Ecalle (\cite{ecallebook1,ecallebook2,ecallebook3}) and was further developed by Pham (\cite{pham1987introduction,pham1999resurgent}) and Sauzin (\cite{sauzinbook}) among others. Here, we follow \cite{sauzinbook} to briefly review this theory with appropriate modifications adapted to our setting.

As an illustrative example, consider the series $$\sum_{i\geq0}\sum_{n=1}^{\infty}(-1)^ie^{s_0\lambda_n}\lambda_n^i\cdot\lambda^{-i-1},$$ which is the formal expansion of $\widetilde{\frac{d}{d\lambda}}\widetilde{\log}\HDet^{\sharp_1}_{-\Delta_X+C}(\lambda):=\sum_{n=1}^{\infty}\frac{e^{s_0\lambda_n}}{\lambda+\lambda_n}$.
In order to study the relationship between this illegal expansion and the original object, we need to introduce formal Borel transform and $1$-Gevrey asymptotic expansion.

\begin{definition}[{\cite[Definition 5.6]{sauzinbook}}]{\label{Def:borel transform}}
The \textbf{formal Borel transform} is a $\CC$-linear map $\mathcal{B}:\rho^{-1}\CC[\![\rho^{-1}]\!]\to\CC[\![t]\!]$, given by
$$\mathcal{B}:\tilde{\varphi}=\sum_{i=0}^{\infty}a_i\rho^{-i-1}\longmapsto\hat{\varphi}=\sum_{i=0}^{\infty}a_i\frac{t^i}{i!}.$$
\end{definition}

We mainly focus on those $\hat{\varphi}=\mathcal{B}\tilde{\varphi}\in\CC[\![t]\!]$, whose sum functions exist locally at the origin and belong to $\mathcal{N}(I,\gamma)$ for some $I$ and $\gamma$ (see Definition \ref{def:Laplace transform}). By Proposition \ref{def:Laplace transform is welldefined}, these functions admit well-defined Laplace transforms. One can compare $\tilde{\varphi}$ and $\LL^I\hat{\varphi}$ through the $1$-Gevrey asymptotic expansion.

\begin{definition}[{\cite[Definition 5.21]{sauzinbook}}]\label{def:1-Gevrey expansion}
Let $\mathcal{D}\subseteq\CC\setminus\{0\}$ be an unbounded domain. Given a function $\varphi:\mathcal{D}\to\CC$ and a formal series $\tilde{\varphi}(\rho)=\sum_{i\geq0}a_i\rho^{-i-1}\in\rho^{-1}\CC[\![\rho^{-1}]\!]$, we say that \textbf{$\varphi$ admits $\tilde{\varphi}$ as uniform $1$-Gevrey asymptotic expansion in $\mathcal{D}$}, denoted by
$$\varphi(\rho)\sim_1\tilde{\varphi}(\rho)\quad\text{uniformly for }\rho\in\mathcal{D},$$
if there exist $L,M>0$ such that
$$|\varphi(\rho)-a_0\rho^{-1}-\cdots-a_{N-1}\rho^{-N}|\leq LM^NN!|\rho|^{-N-1},\quad\text{for all }\rho\in\mathcal{D},N\in\ZZ_{>0}.$$
\end{definition}

\begin{proposition}[{\cite[\S5.9]{sauzinbook}}]{\label{prop:1-Gevrey asymptotic}}
If there exists a locally bounded function $\gamma:I\to\RR$ such that $\hat{\varphi}=\mathcal{B}\tilde{\varphi}\in\mathcal{N}(I,\gamma)$ and holomorphic at $0$, then 
\begin{equation}{\label{eqn: uniformly 1-gevrey}}
\LL^I\hat{\varphi}(\rho)\sim_1\tilde{\varphi}(\rho)\quad\text{uniformly for }\rho\in\mathcal{D}(J,\gamma|_J),~J\subseteq I\text{ relatively compact subinterval}.
\end{equation}
For convenience, we denote \eqref{eqn: uniformly 1-gevrey} by $$\LL^I\hat{\varphi}(\rho)\sim_1\tilde{\varphi}(\rho) \text{ for } \rho\in\mathcal{D}(I,\gamma).$$
$$
\xymatrix{(\LL^I\hat{\varphi})(\rho)\in\mathcal{O}(\mathcal{D}(I,\gamma))\\&\hat{\varphi}=\sum_{i\geq0}a_i\frac{t^i}{i!}\in\mathcal{N}(I,\gamma)\cap\mathcal{O}_0\ar[ul]_{\LL^I}\\\tilde{\varphi}(\rho)=\sum_{i\geq0}a_i\rho^{-i-1}\in\rho^{-1}\CC[\![\rho^{-1}]\!]\ar[ur]_{\text{Borel}}\ar@{-->}[uu]^{\text{$1$-Gevrey}}}
$$
\end{proposition}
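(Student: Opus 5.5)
The plan is to prove the statement in three steps: first the exponential bound, then the asymptotic expansion in a fixed direction, then uniformity over directions. Write $\hat{\varphi}(t)=\sum_{i\geq0}a_i\frac{t^i}{i!}$. Since $\hat{\varphi}$ is holomorphic at $0$, it has a positive radius of convergence $R$, and the Cauchy estimates give $|a_i|\leq C\,i!\,r^{-i}$ for every fixed $0<r<R$. Fix a relatively compact subinterval $J\Subset I$. By local boundedness of $\alpha$ and $\gamma$ in Definition \ref{def:Laplace transform}, we can choose constants $A$ and $\gamma_J:=\sup_J\gamma$ such that
\[
|\hat{\varphi}(se^{i\theta})|\leq A\,e^{(\gamma_J+\epsilon)s}\quad\text{for all } s>0,\ \theta\in \bar{J}.
\]

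For a fixed direction $\theta\in\bar J$ and each $N\geq1$, the key step is to split $\hat{\varphi}=\sum_{i<N}a_i t^i/i!+\hat{R}_N(t)$. The polynomial part has the exact Laplace transform $\sum_{i<N}a_i\rho^{-i-1}$, using $\int_0^{e^{i\theta}\infty}t^ie^{-\rho t}dt=i!\rho^{-i-1}$ for $\Re(\rho e^{i\theta})>0$. Hence
\[
\LL^\theta\hat{\varphi}(\rho)-\sum_{i<N}a_i\rho^{-i-1}=\int_0^{e^{i\theta}\infty}\hat{R}_N(t)e^{-\rho t}dt .
\]
We then bound the remainder by $|\hat{R}_N(t)|\leq K M^N|t|^N e^{(\gamma_J+\epsilon+1)|t|}$ uniformly on the ray. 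For $|t|\leq r/2$ this follows from the Cauchy estimate on the tail of the Taylor series, which gives $|\hat R_N(t)|\leq 2C(|t|/r)^N$. For $|t|\geq r/2$ we instead bound $\hat{R}_N$ by $|\hat{\varphi}|$ plus the polynomial part. The polynomial part is at most $C\sum_{i<N}(|t|/r)^i\leq C' (2|t|/r)^N e^{|t|}$, and $|\hat{\varphi}(t)|\leq A e^{(\gamma_J+\epsilon)|t|}\leq A(2|t|/r)^N e^{(\gamma_J+\epsilon)|t|}$ since $2|t|/r\geq 1$ there.

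Inserting this bound into the integral gives
\[
\Bigl|\int \hat R_N e^{-\rho t}\Bigr|\leq K M^N\int_0^\infty s^N e^{-(\Re(\rho e^{i\theta})-\gamma_J-\epsilon-1)s}ds=K M^N N!\,\bigl(\Re(\rho e^{i\theta})-\gamma_J-\epsilon-1\bigr)^{-N-1}.
\]
The final step, which I expect to be the main obstacle, is converting the factor $\Re(\rho e^{i\theta})-c$ into $|\rho|$, uniformly over $\rho\in\mathcal{D}(J,\gamma|_J)$. For a given $\rho$ we are free to pick the direction $\theta$, by Proposition \ref{def:Laplace transform is welldefined}. Since $J$ is relatively compact in $I$, we can choose $\theta$ within a slightly larger compact subinterval so that $\arg\rho+\theta$ stays within $\pi/2-\eta$ of $0$. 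This gives $\Re(\rho e^{i\theta})\geq |\rho|\sin\eta$, and for $|\rho|$ large the difference $\Re(\rho e^{i\theta})-c$ is $\geq c'|\rho|$. On the bounded remaining part of the domain, the estimate follows trivially by enlarging $L$ and $M$. The precise form of this last step (the extra "$+1$" shift, and hence the shape of the domain on which uniformity holds) has to be adjusted to match the exact convention of $\mathcal{D}(J,\gamma|_J)$ in \cite[\S5.9]{sauzinbook}. Absorbing all constants then yields the bound required by Definition \ref{def:1-Gevrey expansion}.
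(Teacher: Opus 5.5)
Your large-$|\rho|$ argument works. The Taylor-remainder bound on a ray, combined with tilting the direction inside a compact $J'$ with $\bar J\subset J'\subset I$ so that $\Re(\rho e^{i\theta})\ge|\rho|\sin\eta$, gives $\bigl|\LL^I\hat\varphi(\rho)-\sum_{i<N}a_i\rho^{-i-1}\bigr|\le LM^NN!|\rho|^{-N-1}$ for $\rho\in\mathcal{D}(J,\gamma|_J)$ with $|\rho|\ge R_0$. The ``$+1$'' you worried about is a red herring. For $|t|\ge r/2$ one has $\sum_{i<N}(|t|/r)^i=\sum_{i<N}2^{-i}(2|t|/r)^i\le 2(2|t|/r)^N$, so no factor $e^{|t|}$ is needed, and that shift would only matter for large $|\rho|$, where tilting absorbs it anyway.

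The genuine gap is the sentence ``On the bounded remaining part of the domain, the estimate follows trivially by enlarging $L$ and $M$.'' That step needs $\LL^I\hat\varphi$ to be bounded on $\{\rho\in\mathcal{D}(J,\gamma|_J):|\rho|\le R_0\}$. But this set reaches the finite boundary of the region of convergence, where the Laplace transform can blow up. Here is a counterexample:
\begin{itemize}
\item[(a)] Take $I=(-\frac{\pi}{2},\frac{\pi}{2})$, $\gamma\equiv g>0$, and $\hat\varphi(t)=e^{gt}=\mathcal{B}\bigl(\sum_{i\ge0}g^i\rho^{-i-1}\bigr)$.
\item[(b)] Then $|\hat\varphi(re^{i\theta})|\le e^{gr}$, so $\hat\varphi\in\mathcal{N}(I,g)$, and $\LL^I\hat\varphi(\rho)=\frac{1}{\rho-g}$.
\item[(c)] For any relatively compact $J\ni 0$, the points $\rho=g+s$ with $s\to0^+$ lie in $\Pi^{0}_{g}\subseteq\mathcal{D}(J,g)$.
\item[(d)] For $N=1$, the quantity $\bigl|\frac{1}{\rho-g}-\frac{1}{\rho}\bigr|$ tends to $+\infty$, while $LM|\rho|^{-2}$ stays bounded. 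So no choice of $L,M$ works.
\end{itemize}

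So the proposition cannot be proved as literally stated. Uniformity only holds after shrinking the domain by a margin, e.g. on $\mathcal{D}(J,\gamma|_J+\epsilon)$ with $\epsilon>0$, and your proof should target that statement. With the margin, your bounded-part step does become routine:
\begin{itemize}
\item[(a)] If $\Re(\rho e^{i\theta_0})>\gamma(\theta_0)+\epsilon$ for some $\theta_0\in J$, then $|\LL^{\theta_0}\hat\varphi(\rho)|\le 2\sup_{\bar J}\alpha/\epsilon$. This follows directly from the exponential bound in Definition \ref{def:Laplace transform}, used with $\epsilon/2$.
\item[(b)] For $|\rho|\le R_0$, write $|\rho|^{-i-1}=|\rho|^{N-i}\,|\rho|^{-N-1}\le\max(1,R_0)^{N+1}|\rho|^{-N-1}$, and likewise $|\LL^I\hat\varphi(\rho)|\le B\max(1,R_0)^{N+1}|\rho|^{-N-1}$.
\item[(c)] Both the transform and the truncated series are then bounded by $LM^NN!|\rho|^{-N-1}$, after absorbing $\max(1,R_0)^{N+1}$ and the Cauchy factors $r^{-i}$ into $L$ and $M$.
\end{itemize}
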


One can show (look at Theorem \ref{thm:deform 1-Gevrey expansion} for a proof),
$$\mathcal{B}\left(\sum_{i\geq0}\sum_{n=1}^{\infty}(-1)^ie^{s_0\lambda_n}\lambda_n^i\cdot\lambda^{-i-1}\right)=\hat{\Theta}_{-\Delta_X+C}(t-s_0)\in\mathcal{N}\left(-\frac{\pi}{2},\frac{\pi}{2}\right),$$ which is holomorphic at $t=0$. Then by Corollary \ref{relation:laplace relation four} $(ii)$, Definition \ref{def:exp reg} and Proposition \ref{prop:1-Gevrey asymptotic}, we have
$$\left(\LL^{(-\frac{\pi}{2},\frac{\pi}{2})}\hat{\Theta}_{-\Delta_X+C}(t-s_0)\right)(\lambda)=\widetilde{\frac{d}{d\lambda}}\widetilde{\log}\HDet_{-\Delta_X+C}^{\sharp_1}(\lambda)\sim_1\sum_{i\geq0}\sum_{n=1}^{\infty}(-1)^ie^{s_0\lambda_n}\lambda_n^i\cdot\lambda^{-i-1}$$
for $\lambda\in\mathcal{D}\left(\left(-\frac{\pi}{2},\frac{\pi}{2}\right),0\right)=\CC\setminus\RR_{\leq0}$. Moreover, direct computation gives us
$$\sum_{i\geq0}\sum_{n=1}^{\infty}(-1)^ie^{s_0\lambda_n}\lambda_n^i\cdot\lambda^{-i-1}=\sum_{i\geq0}\frac{d^i}{ds^i}\hat{\Theta}_{-\Delta_X+C}(s)\Big|_{s=-s_0}\cdot\lambda^{-i-1}.$$
Hence, the Laplace transform of $\hat{\Theta}_{-\Delta_X+C}(t-s_0)$ admits $\sum_{i\geq0}\frac{\frac{d^i}{ds^i}\hat{\Theta}_{-\Delta_X+C}(s)|_{-s_0}}{\lambda^{i+1}}\in\lambda^{-1}\CC[\![\lambda^{-1}]\!]$ as uniform $1$-Gevrey asymptotic expansion in $\mathcal{D}(J,0)$, where $J\subseteq\left(-\frac{\pi}{2},\frac{\pi}{2}\right)$ is relatively compact. For $\rho_n=\sqrt{\lambda_n}$, we can still obtain similar asymptotic formulas for $\hat{\Theta}_{D_X}(t-s_0)$ (Corollary \ref{deform 1-Gevrey thm about rho}).
$$
\xymatrix{\widetilde{\frac{d}{d\lambda}}\widetilde{\log}\HDet_{-\Delta_X+C}^{\sharp_1}(\lambda)&&\widetilde{\frac{d}{d\rho}}\widetilde{\log}\HDet_{D_X}^{\sharp_1}(\rho)&\\&\hat{\Theta}_{-\Delta_X+C}(t-s_0)\ar[ul]_{\LL^{(-\frac{\pi}{2},\frac{\pi}{2})}}&&\hat{\Theta}_{D_X}(t-s_0)\ar[ul]_{\LL^{(-\frac{\pi}{2},\frac{\pi}{2})}}\\\sum_{i\geq0}\frac{\frac{d^i}{ds^i}\hat{\Theta}_{-\Delta_X+C}(s)|_{-s_0}}{\lambda^{i+1}}\ar@{-->}[uu]^{\text{$1$-Gevrey}}\ar[ur]_{\text{Borel}}&&\sum_{i\geq0}\frac{\frac{d^i}{ds^i}\hat{\Theta}_{D_X}(s)|_{-s_0}}{\rho^{i+1}}\ar@{-->}[uu]^{\text{$1$-Gevrey}}\ar[ur]_{\text{Borel}}&}
$$

The resurgence theory can also be used to study the singularities of $\hat{\varphi}$, especially those with only simple singularities. 

   Unless otherwise specified, we let $D$ denote an open disk in the complex plane. In particular, the open disk centered at the origin with radius $r$ is denoted by $D_r$.

\begin{definition}[{\cite[Definition 6]{Sauzinpaper}}]{\label{def: simple singularity}}

Let $\omega\in\CC$, we say that a function $\hat{\varphi}$, which is holomorphic in an open disc $D\subseteq\CC$ to which $\omega\in\overline{D}\setminus D$, \textbf{has a simple singularity at $\omega$}, if there exist a constant $C\in\CC$ and holomorphic germs $\hat{\Phi}(t),R(t)\in\mathcal{O}_0$, such that
    $$\hat{\varphi}(t)=\frac{C}{2\pi i(t-\omega)}+\frac{1}{2\pi i}\hat{\Phi}(t-\omega)\log(t-\omega)+R(t-\omega),\quad\text{for all }t\in D.$$
\end{definition}

We form a convolution $\CC$-algebra $\CC\boldsymbol{\delta}\oplus\mathcal{O}_0$, where the Dirac delta $\boldsymbol{\delta}$ is convolution unit\footnote{For each $I$, the Laplace transform $\LL^I$ maps $\boldsymbol{\delta}$ to $1$.}. For a function $\hat{\varphi}$ with a simple singularity at $\omega$, we associate one and only one vector $\mathrm{sing}_{\omega}(\hat{\varphi}):=C\boldsymbol{\delta}+\hat{\Phi}\in\CC\boldsymbol{\delta}\oplus\mathcal{O}_0$, which completely records the singularity of $\hat{\varphi}$ at $\omega$.

\begin{definition}{\label{Def:Omega Resurgent}}
Let $\Omega \subset \mathbb{C}$ be a non-empty, discrete and closed subset containing the origin. A function $\hat{\varphi}$ integrable near the origin is called \textbf{$\Omega_{int}$-continuable}\footnote{In this paper, we relax the condition that the resurgent function be holomorphic at $0$ (see \cite[Definition 6.1]{sauzinbook}) to being integrable near $0$, and the corresponding notation will be slightly modified accordingly.}, if there exists an open disk $D_r^*\subseteq\CC$ centered at $0$ and punctured at $0$, such that
\begin{itemize}
\item $D_r^*\cap\Omega=\emptyset$;

\item $\hat{\varphi}\in\mathcal{O}(D_r^*)$;

\item For each path $\gamma$ in $\CC\setminus\Omega$ with initial point in $D_r^*$, $\hat{\varphi}$ admits a holomorphic extension $\mathrm{cont}_{\gamma}\hat{\varphi}$ along $\gamma$.
\end{itemize}
The set of all $\Omega_{int}$-continuable functions is denoted as $\hat{\mathcal{R}}_{\Omega}^{int}$.
\end{definition}

\begin{definition}
For $\Omega$ as in Definition \ref{Def:Omega Resurgent}, we define $\hat{\mathcal{R}}_{\Omega^*}^{simp}$ as the set of $\hat{\varphi}\in\hat{\mathcal{R}}_{\Omega}^{int}$ such that for each $\omega\in\Omega^{*}$, and any path $\gamma \subseteq\mathbb{C}\setminus\Omega$ from $D_r^*$ (as in Definition \ref{Def:Omega Resurgent}) to $D$ (as in Definition \ref{def: simple singularity}), the holomorphic extension $\mathrm{cont}_{\gamma}\hat{\varphi}$ has a simple singularity at $\omega$.
\end{definition}

In resurgence theory, singularities of an analytically continued germ $\hat{\varphi}$ along different paths are manifested by their alien operators ({\cite[Definition 6.48, Definition 6.62, pp. 227]{sauzinbook}). For our problem, the non-empty closed, discrete set is of the form 
$$\Omega_{s}:=\{0\}\cup\{s+i\tau_k \mid \tau_0=0, \tau_k < \tau_{k+1}, k\in\mathbb{Z}\}$$
for a fixed $s\in\mathbb{R}_{\leq 0}$. When $s=0$, it is {assumed} to be the singular set of $\hat{\varphi}(t)=t^d\hat{\Theta}_{D_X}(t)$(see Theorem \ref{theorem about derivative regularization in introduction}), and when $s=s_0<0$, it is assumed to be the singular set of $\hat{\varphi}(t)=\hat{\Theta}_{D_X}(t-s_0)$(see Theorem \ref{thm about the computation of deform in intro}). In these cases, we define the alien operator $\Delta_{\omega_k}^-$ to describe the singularity at $\omega_k\in\Omega_s^+ := \Omega_s \cap \{\Im(\omega) > 0 \}$ along the given path shown in Figure \ref{fig: the extension path in alien}.

\begin{figure}[htbp]
    \centering 
\tikzset{every picture/.style={line width=0.75pt}} 

\begin{tikzpicture}[x=0.75pt,y=0.75pt,yscale=-1,xscale=1]

\draw [color={rgb, 255:red, 208; green, 2; blue, 27 }  ,draw opacity=1 ]   (285.75,163.51) -- (285.53,152.66) ;
\draw [shift={(285.49,150.66)}, rotate = 88.82] [color={rgb, 255:red, 208; green, 2; blue, 27 }  ,draw opacity=1 ][line width=0.75]    (10.93,-3.29) .. controls (6.95,-1.4) and (3.31,-0.3) .. (0,0) .. controls (3.31,0.3) and (6.95,1.4) .. (10.93,3.29)   ;
\draw [color={rgb, 255:red, 208; green, 2; blue, 27 }  ,draw opacity=1 ]   (286.14,132.43) -- (285.91,115.97) ;
\draw [shift={(285.88,113.97)}, rotate = 89.21] [color={rgb, 255:red, 208; green, 2; blue, 27 }  ,draw opacity=1 ][line width=0.75]    (10.93,-3.29) .. controls (6.95,-1.4) and (3.31,-0.3) .. (0,0) .. controls (3.31,0.3) and (6.95,1.4) .. (10.93,3.29)   ;
\draw [color={rgb, 255:red, 208; green, 2; blue, 27 }  ,draw opacity=1 ]   (286.17,96.05) -- (286.18,87.45) -- (286.08,80.93) ;
\draw [shift={(286.05,78.93)}, rotate = 89.14] [color={rgb, 255:red, 208; green, 2; blue, 27 }  ,draw opacity=1 ][line width=0.75]    (10.93,-3.29) .. controls (6.95,-1.4) and (3.31,-0.3) .. (0,0) .. controls (3.31,0.3) and (6.95,1.4) .. (10.93,3.29)   ;
\draw [color={rgb, 255:red, 208; green, 2; blue, 27 }  ,draw opacity=1 ]   (285.82,61.02) -- (286.01,50.19) ;
\draw [shift={(286.05,48.19)}, rotate = 91.02] [color={rgb, 255:red, 208; green, 2; blue, 27 }  ,draw opacity=1 ][line width=0.75]    (10.93,-3.29) .. controls (6.95,-1.4) and (3.31,-0.3) .. (0,0) .. controls (3.31,0.3) and (6.95,1.4) .. (10.93,3.29)   ;
\draw  [draw opacity=0] (285.43,151.31) .. controls (281.23,150.85) and (278.05,146.48) .. (278.28,141.43) .. controls (278.52,136.33) and (282.16,132.45) .. (286.43,132.73) -- (286.07,142.03) -- cycle ; \draw  [color={rgb, 255:red, 208; green, 2; blue, 27 }  ,draw opacity=1 ] (285.43,151.31) .. controls (281.23,150.85) and (278.05,146.48) .. (278.28,141.43) .. controls (278.52,136.33) and (282.16,132.45) .. (286.43,132.73) ;  
\draw  [draw opacity=0] (286.51,78.95) .. controls (282.29,78.97) and (278.83,74.98) .. (278.72,69.92) .. controls (278.6,64.81) and (281.96,60.54) .. (286.23,60.33) -- (286.52,69.64) -- cycle ; \draw  [color={rgb, 255:red, 208; green, 2; blue, 27 }  ,draw opacity=1 ] (286.51,78.95) .. controls (282.29,78.97) and (278.83,74.98) .. (278.72,69.92) .. controls (278.6,64.81) and (281.96,60.54) .. (286.23,60.33) ;  
\draw  (227,174) -- (413.25,174)(348,30.86) -- (348,234.5) (406.25,169) -- (413.25,174) -- (406.25,179) (343,37.86) -- (348,30.86) -- (353,37.86)  ;
\draw [color={rgb, 255:red, 208; green, 2; blue, 27 }  ,draw opacity=1 ]   (335.33,168) -- (287.74,163.69) ;
\draw [shift={(285.75,163.51)}, rotate = 5.17] [color={rgb, 255:red, 208; green, 2; blue, 27 }  ,draw opacity=1 ][line width=0.75]    (10.93,-3.29) .. controls (6.95,-1.4) and (3.31,-0.3) .. (0,0) .. controls (3.31,0.3) and (6.95,1.4) .. (10.93,3.29)   ;
\draw  [color={rgb, 255:red, 155; green, 155; blue, 155 }  ,draw opacity=0.27 ][fill={rgb, 255:red, 155; green, 155; blue, 155 }  ,fill opacity=0.16 ][dash pattern={on 4.5pt off 4.5pt}] (278.8,47.03) .. controls (278.77,43.02) and (281.99,39.76) .. (286,39.76) .. controls (290.01,39.76) and (293.29,43.01) .. (293.32,47.02) .. controls (293.36,51.03) and (290.14,54.28) .. (286.13,54.28) .. controls (282.12,54.29) and (278.84,51.04) .. (278.8,47.03) -- cycle ;

\draw (278.45,136.55) node [anchor=north west][inner sep=0.75pt]  [rotate=-0.81]  {$\times $};
\draw (282.24,115.2) node [anchor=north west][inner sep=0.75pt]  [rotate=-269.11,xslant=0.05]  {$\cdots $};
\draw (350,177.4) node [anchor=north west][inner sep=0.75pt]    {$0$};
\draw (278.45,168.55) node [anchor=north west][inner sep=0.75pt]  [rotate=-0.81]  {$\times $};
\draw (278.45,64.55) node [anchor=north west][inner sep=0.75pt]  [rotate=-0.81]  {$\times $};
\draw (278.45,33.98) node [anchor=north west][inner sep=0.75pt]  [rotate=-0.81]  {$\times $};
\draw (251,135.4) node [anchor=north west][inner sep=0.75pt]    {$\omega _{1}$};
\draw (248,63.4) node [anchor=north west][inner sep=0.75pt]    {$\omega _{k-1}$};
\draw (249,31.4) node [anchor=north west][inner sep=0.75pt]    {$\omega _{k}$};
\draw (243,98.4) node [anchor=north west][inner sep=0.75pt]  [color={rgb, 255:red, 208; green, 2; blue, 27 }  ,opacity=1 ]  {$\gamma _{t_{1} ,t_{k}}^{-}$};
\draw (300,154.4) node [anchor=north west][inner sep=0.75pt]  [color={rgb, 255:red, 208; green, 2; blue, 27 }  ,opacity=1 ]  {$\gamma _{s,t_{1}}$};
\draw (272.33,174.07) node [anchor=north west][inner sep=0.75pt]    {$s$};
\draw (278.45,190.89) node [anchor=north west][inner sep=0.75pt]  [rotate=-0.81]  {$\times $};
\draw (278.45,214.55) node [anchor=north west][inner sep=0.75pt]  [rotate=-0.81]  {$\times $};
\draw (298.14,44.83) node [anchor=north west][inner sep=0.75pt]  [font=\normalsize,color={rgb, 255:red, 155; green, 155; blue, 155 }  ,opacity=0.38 ]  {$D$};
\draw (400,150.4) node [anchor=north west][inner sep=0.75pt]    {$\Re ( t)$};
\draw (355,30.4) node [anchor=north west][inner sep=0.75pt]    {$\Im ( t)$};

\end{tikzpicture}

\caption{The paths $\gamma_{t_1,t_k}^-$ and $\gamma_{s,t_1}$ corresponding to the alien operator $\Delta_{\omega_k}^-$.}
\label{fig: the extension path in alien}
\end{figure}

\begin{definition}{\label{Def:alien operator}}
For $\hat{\varphi} \in \hat{\mathcal{R}}_{\Omega_s}^{int}$, $\omega_k:=s+i\tau_k \in \Omega_s^+$. For any $t_1, t_k \in (0,1)$, let $\gamma_{s,t_1}$ be the straight line segment from $\frac{1}{4}\omega_1$ to $s+it_1\tau_1$, and let $\gamma_{t_1,t_k}^-$ be the path from $s+it_1\tau_1$ to $s+i(1-t_k)\tau_{k-1} + it_k\tau_{k}$ along the vertical line, but circumventing each $\omega_j$ ($1 \leq j \leq k-1$) from the left via {smaller} semi-circles (Figure \ref{fig: the extension path in alien}). If $\mathrm{cont}_{\gamma_{t_1,t_k}^- \circ \gamma_{s,t_1}} \hat{\varphi}$ has a simple singularity at $\omega_k$, we define the action of the alien operator on $\hat{\varphi}$ at $\omega_k$ by
$$(\Delta_{\omega_k}^-\hat{\varphi})(t) := -\mathrm{sing}_0 \left( \mathrm{cont}_{\gamma_{\omega_k}^-} \hat{\varphi}(\omega_k+t) \right), \quad \text{for } t \in -\omega_k+D,$$
where $\gamma_{\omega_k}^- :=\gamma_{t_1,t_k}^- \circ \gamma_{s,t_1}$. 
\end{definition}

Since $\hat{\varphi}$ is holomorphic on the open segments $(0, \omega_1)$ and $(\omega_{k-1}, \omega_k)$, this definition is independent of the choice of $t_1$ and $t_k$.
We illustrate the definition by the following example.

\begin{example}[{\cite[pp. 208]{sauzinbook}}]
If $\omega_k\in\Omega_s^+$ is a simple pole of $\hat{\varphi}$, then $(\Delta_{\omega_k}^-\hat{\varphi})(t)=-2\pi i\mathrm{Res}(\mathrm{cont}_{\gamma_{\omega_k}^-}\hat{\varphi};\omega_k)\cdot\boldsymbol{\delta}$. For more examples, especially regarding log-type singularities, see \cite[Exercise 6.51, Exercise 6.52, Exercise 6.57]{sauzinbook}.
\end{example}
The alien operator can be used to compute the discrepancy between the Laplace transforms in two directions. To suit our problems, we adapt the conditions in ({\cite[pp. 227-229]{sauzinbook}} and {\cite[pp. 84-85]{Sauzinpaper}}), and establish the results by deforming the integration contours.

\begin{lemma}{\label{lemma: Laplace and stokes}}
For a fixed $s \leq 0$, let $\Omega_s$ and $\Omega_s^+$ be defined as above. Fix two angles $\theta^+, \theta^-$ such that $0 < \theta^+ < \frac{\pi}{2} < \theta^- < \pi$. Let $a$ be the unique intersection point of the ray $[0, e^{i\theta^-}\infty)$ and the line $\{\text{Re}(t) = s\}$. Suppose there exists $0<\delta<\min(\theta^+,\pi-\theta^-)$ such that for $I^+ = (\theta^+ - \delta, \frac{\pi}{2})$ and $I^- = (\frac{\pi}{2}, \theta^- + \delta)$, the following hold:
\begin{itemize}
\item $\hat{\varphi}(t) \in \hat{\mathcal{R}}_{\Omega_s}^{int} \cap \mathcal{N}(I^+, \gamma)$ and $\hat{\varphi}(t+a)\in\mathcal{N}(I^+, \gamma) \cap \mathcal{N}(I^-, \gamma)$, where $\gamma$ is a locally bounded function on $I^+ \cup I^-$;
\item For any analytic continuation of $\hat{\varphi}$ along the path contained in the sector $\{ r e^{i \theta} : r > 0, \theta^{+}-\delta<\theta<\theta^{-}+\delta\}\setminus\Omega_{s}^+$, the extended $\hat{\varphi}$ has simple singularities at each $\omega_k \in \Omega_s^+$.
\end{itemize}
Then for any $\rho\in\Pi_{\gamma(\theta^+)}^{\theta^+}\cap\Pi_{\gamma(\theta^-)}^{\theta^-}$, we have
$$\LL^{\theta^-}\hat{\varphi}(\rho)=\LL^{\theta^+}\hat{\varphi}(\rho)+\sum_{\omega_k\in\Omega_s^+}e^{-\omega_k\rho}\LL^{\theta^+}\Delta_{\omega_k}^-\hat{\varphi}(\rho),$$
if the right hand side converges.
\end{lemma}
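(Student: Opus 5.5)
The plan is a contour-deformation argument in the Borel plane. The difference $\LL^{\theta^-}\hat{\varphi}-\LL^{\theta^+}\hat{\varphi}$ is written as the integral of $e^{-\rho t}\hat{\varphi}(t)$ over a contour that first runs out along $e^{i\theta^-}\infty$ and comes back along $e^{i\theta^+}\infty$. This contour is then deformed onto the vertical line $\{\Re(t)=s\}$, where the singularities $\omega_k\in\Omega_s^+$ lie. The deformation is done in two steps.

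\emph{First step.} Split the $\theta^-$ ray at the point $a$. The part from $0$ to $a$ is kept. The tail $\int_a^{e^{i\theta^-}\infty}$ is rewritten as $e^{-a\rho}\int_0^{e^{i\theta^-}\infty}e^{-\rho u}\hat{\varphi}(u+a)\,du$.
\begin{itemize}
\item Since $\hat{\varphi}(\cdot+a)\in\mathcal{N}(I^-,\gamma)\cap\mathcal{N}(I^+,\gamma)$, Proposition \ref{def:Laplace transform is welldefined}, applied in the union of the two intervals, lets us rotate this tail to any direction between $\theta^-$ and $\theta^+$.
\item I will rotate it to just left of the vertical direction, so that the tail becomes the vertical half-line $a+i\RR_{\ge0}$, displaced slightly to the left.
\item The arcs at infinity vanish by the exponential bounds.
\end{itemize}

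\emph{Second step.} The segment $[0,a]$ followed by the vertical half-line from $a$ upward is homotopic, in the region where $\hat{\varphi}$ continues, to the path $\gamma_{\omega_k}^-$ of Definition \ref{Def:alien operator} continued to $\infty$. This path runs from near $0$ to near $s$, then upward along $\Re(t)=s$, bypassing each $\omega_j$ on the left.

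On the other side, the ray $e^{i\theta^+}\infty$ lies in $I^+$, where $\hat{\varphi}$ is holomorphic with controlled growth. It can be pushed toward the vertical line as well, now passing each $\omega_j$ on the right.

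So the difference of the two Laplace transforms equals the sum, over $\omega_k\in\Omega_s^+$, of integrals over Hankel-type contours. Each such contour encircles one $\omega_k$ and goes up to the next singularity (or to infinity), and carries the difference between the left and right determinations of $\hat{\varphi}$.

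This is Sauzin's standard decomposition (\cite[pp. 227--229]{sauzinbook}). The one point to check is that the left-bypassing continuation agrees with $\mathrm{cont}_{\gamma_{\omega_k}^-}$, so that the jump across the cut above $\omega_k$ is precisely described by $\mathrm{sing}_{\omega_k}$.

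\emph{Third step.} For a simple singularity
\[
\frac{C}{2\pi i(t-\omega)}+\frac{1}{2\pi i}\hat{\Phi}(t-\omega)\log(t-\omega)+R,
\]
the Hankel integral gives $e^{-\omega_k\rho}\bigl(\pm C+\int_0^{\infty}e^{-\rho u}\hat{\Phi}(u)\,du\bigr)$:
\begin{itemize}
\item the pole contributes the residue term, i.e.\ the $\boldsymbol{\delta}$ component;
\item the logarithm jumps by $2\pi i$ across the cut, giving the $\hat{\Phi}$ component;
\item the holomorphic part $R$ contributes nothing.
\end{itemize}
With the minus sign in the definition of $\Delta^-_{\omega_k}$ and the orientation conventions, this yields $e^{-\omega_k\rho}\LL\Delta^-_{\omega_k}\hat{\varphi}(\rho)$. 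The remaining direction is then rotated back to $\theta^+$ using the fact that the singular data lies in $\mathcal{N}(I^+)$.

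Summing over $k$ requires a truncation. Cut off at height $T$ between $\tau_K$ and $\tau_{K+1}$, and let $K\to\infty$. The horizontal connecting segment at height $T$ tends to zero by the growth bounds, and the series converges by hypothesis.

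The main obstacle is this truncation-and-limit step together with the bookkeeping of branches. Specifically:
\begin{itemize}
\item one must show that the contour pieces connecting consecutive singularities, and the piece at height $T$, vanish or telescope;
\item one must show that each successive bypass $\gamma_{\omega_k}^-$ is the correct continuation;
\item one must handle the fact that $\hat{\varphi}$ is only integrable, not holomorphic, at $0$. Here I would use $\lim_{t\to0}t\hat{\varphi}(t)=0$, as in the proof of Proposition \ref{def:Laplace transform is welldefined}, to kill the small arcs near the origin.
\end{itemize}
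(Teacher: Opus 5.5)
There is a genuine gap in your deformation step, exactly where you locate the main obstacle.

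\emph{Step 1 does not reach the singular line.} Proposition~\ref{def:Laplace transform is welldefined} only lets you rotate a Laplace integral inside one open interval of directions. But $I^+\cup I^-$ omits $\pi/2$, which is the direction from $a$ of the line $\{\Re(t)=s\}$ carrying every $\omega_k$. Crossing that direction is the Stokes phenomenon itself, so the tail cannot be rotated to ``any direction between $\theta^-$ and $\theta^+$''. Inside $I^-$ you only reach rays $a+re^{i(\pi/2+\eta)}$, which drift away from the line; they are not a displaced vertical half-line. Moreover, $\alpha$ and $\gamma$ in Definition~\ref{def:Laplace transform} are merely locally bounded on the open intervals $I^\pm$, so the hypotheses give no estimate as the direction tends to $\pi/2$. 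Hence neither ray can be pushed onto the two sides of the singular line.

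\emph{Steps 2 and 3 do not match.} Even granting such a deformation, your Step~2 contours run along the line between consecutive singular points and carry the jump $\hat\varphi_{\mathrm{left}}-\hat\varphi_{\mathrm{right}}$. On $(\omega_k,\omega_{k+1})$ this jump is not the variation $\hat\Phi_k$ of the single singularity at $\omega_k$; it accumulates contributions from all $\omega_j$ with $j\le k$. The value $e^{-\omega_k\rho}\bigl(\pm C+\int_0^{\infty}e^{-\rho u}\hat\Phi(u)\,du\bigr)$ you quote in Step~3 belongs to a Hankel contour around one point that goes off to infinity in a singularity-free direction. It therefore does not apply to the contours you built. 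The final ``rotate back to $\theta^+$'' would also need bounds on $\hat\Phi_k$ near the vertical direction, which is in general singular for $\hat\Phi_k$ too and on which nothing is assumed. So the telescoping you postpone is the actual content of the lemma, not bookkeeping.

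\emph{The missing idea, which is what the paper uses, is to cut transversally instead of along the line.} After the same reduction to $\int_a^{a+e^{i\theta^-}\infty}-\int_a^{a+e^{i\theta^+}\infty}$, decompose this difference into Hankel contours $\Gamma_{\omega_k,\epsilon}$ both of whose edges lie on the ray $\omega_k+e^{i\theta^+}\RR_{\ge 0}$.
The sector at $a$ minus these parallel slits is simply connected. From the endpoint of $\gamma^-_{\omega_k}$, the lower edge of the $k$-th slit is reached by turning around $\omega_k$ on its right, and the upper edge by turning around it on its left. So the jump across that slit is the monodromy of $\mathrm{cont}_{\gamma^-_{\omega_k}}\hat\varphi$ at $\omega_k$ alone.

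Each Hankel integral then gives directly $e^{-\omega_k\rho}\LL^{\theta^+}\Delta^-_{\omega_k}\hat\varphi(\rho)$:
the pole gives the $\boldsymbol{\delta}$ term, the small circle of the logarithmic part is $O(\epsilon\log\epsilon)$, and the two edges give $\hat\Phi_k$. No integral along the singular line, no telescoping and no rotation back are needed.

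If you prefer to keep your vertical cut, you must do three things. First, prove the induction expressing the jump on $(\omega_k,\omega_{k+1})$ as the variation at $\omega_k$ plus the right-bypass continuation of the jump on $(\omega_{k-1},\omega_k)$. Second, re-sum the result singular point by singular point. Third, add growth hypotheses along $\{\Re(t)=s\}$, which the statement does not contain.
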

\begin{proof}
For $\hat{\varphi} \in \hat{\mathcal{R}}_{\Omega_s}^{int}$, we have $\lim_{r\to 0}\int_{re^{i\theta^-}}^{re^{i\theta^+}}\hat{\varphi}(t)e^{-\rho t}dt=0$. Additionally, since $\hat{\varphi} \in \mathcal{N}(I^+, \gamma)$, the integral along the arc at infinity also vanishes:$$\lim_{R\to\infty}\int_{Re^{i\theta_1}}^{Re^{i\theta_2}}\hat{\varphi}(t)e^{-\rho t}dt=0,$$
for any $\theta_1, \theta_2 \in I^+$ and $\rho \in \mathcal{D}((\theta_1, \theta_2), \gamma)$. Thus, for $\rho\in\Pi_{\gamma(\theta^+)}^{\theta^+}\cap\Pi_{\gamma(\theta^-)}^{\theta^-}$,
$$\left( (\mathcal{L}^{\theta^-} - \mathcal{L}^{\theta^+}) \hat{\varphi}\right)(\rho) = \left( \int_{a}^{a+e^{i\theta^-}\infty} - \int_{a}^{a+e^{i\theta^+}\infty} \right) \hat{\varphi}(t)e^{-\rho t}dt.$$
From $\hat{\varphi}(t) \in \hat{\mathcal{R}}_{\Omega_s}^{int}$, we have $\hat{\varphi}(t+a) \in \hat{\mathcal{R}}_{\Omega_{s-a}}^{int}$, where $\Omega_{s-a}:=\{s-a+i\tau_k:\tau_0=0,\tau_k<\tau_{k+1},k\in\ZZ\}$. Similarly, as $\hat{\varphi}(t+a)\in\mathcal{N}(I^+,\gamma)\cap\mathcal{N}(I^-,\gamma)$, $\left(\int_{a}^{a+e^{i\theta^-}\infty}-\int_{a}^{a+e^{i\theta^+}\infty}\right)\hat{\varphi}(t)e^{-\rho t}dt$ can be divided into the sum of integrals along infinite Hankel contours $\Gamma_{\omega_k,\epsilon}$ for any $0<\epsilon<\delta$ with $2\delta:=\inf_{\omega_k\in\Omega_{s}^+}d(\omega_k,\omega_{k+1})$. From the fact that $\Omega_s^+$ is a closed and discrete set, we have $\delta>0$.
 \begin{figure}[htbp]
    \centering    
\tikzset{every picture/.style={line width=0.75pt}} 
\begin{tikzpicture}[x=0.75pt,y=0.75pt,yscale=-1,xscale=1]

\draw    (227,162) -- (417,163.41) -- (464.78,163.76) ;
\draw [shift={(466.78,163.78)}, rotate = 180.42] [color={rgb, 255:red, 0; green, 0; blue, 0 }  ][line width=0.75]    (10.93,-3.29) .. controls (6.95,-1.4) and (3.31,-0.3) .. (0,0) .. controls (3.31,0.3) and (6.95,1.4) .. (10.93,3.29)   ;
\draw    (330.91,215.92) -- (331.66,24.83) ;
\draw [shift={(331.67,22.83)}, rotate = 90.22] [color={rgb, 255:red, 0; green, 0; blue, 0 }  ][line width=0.75]    (10.93,-3.29) .. controls (6.95,-1.4) and (3.31,-0.3) .. (0,0) .. controls (3.31,0.3) and (6.95,1.4) .. (10.93,3.29)   ;
\draw    (331,162) -- (372.56,145.36) -- (461.48,109.74) ;
\draw [shift={(463.33,109)}, rotate = 158.17] [color={rgb, 255:red, 0; green, 0; blue, 0 }  ][line width=0.75]    (10.93,-3.29) .. controls (6.95,-1.4) and (3.31,-0.3) .. (0,0) .. controls (3.31,0.3) and (6.95,1.4) .. (10.93,3.29)   ;
\draw    (331,162) -- (205.8,137.8) ;
\draw [shift={(203.83,137.42)}, rotate = 10.94] [color={rgb, 255:red, 0; green, 0; blue, 0 }  ][line width=0.75]    (10.93,-3.29) .. controls (6.95,-1.4) and (3.31,-0.3) .. (0,0) .. controls (3.31,0.3) and (6.95,1.4) .. (10.93,3.29)   ;
\draw [color={rgb, 255:red, 155; green, 155; blue, 155 }  ,draw opacity=1 ]   (422.44,90.11) -- (306.31,134.07) ;
\draw [shift={(304.44,134.78)}, rotate = 339.27] [color={rgb, 255:red, 155; green, 155; blue, 155 }  ,draw opacity=1 ][line width=0.75]    (10.93,-3.29) .. controls (6.95,-1.4) and (3.31,-0.3) .. (0,0) .. controls (3.31,0.3) and (6.95,1.4) .. (10.93,3.29)   ;
\draw [color={rgb, 255:red, 155; green, 155; blue, 155 }  ,draw opacity=1 ]   (303.78,130.11) -- (420.58,85.49) ;
\draw [shift={(422.44,84.78)}, rotate = 159.09] [color={rgb, 255:red, 155; green, 155; blue, 155 }  ,draw opacity=1 ][line width=0.75]    (10.93,-3.29) .. controls (6.95,-1.4) and (3.31,-0.3) .. (0,0) .. controls (3.31,0.3) and (6.95,1.4) .. (10.93,3.29)   ;
\draw [color={rgb, 255:red, 155; green, 155; blue, 155 }  ,draw opacity=1 ]   (305.44,134.78) .. controls (285.19,159.6) and (280.94,110.16) .. (303.37,128.88) ;
\draw [shift={(304.78,130.11)}, rotate = 222.51] [color={rgb, 255:red, 155; green, 155; blue, 155 }  ,draw opacity=1 ][line width=0.75]    (10.93,-3.29) .. controls (6.95,-1.4) and (3.31,-0.3) .. (0,0) .. controls (3.31,0.3) and (6.95,1.4) .. (10.93,3.29)   ;
\draw [color={rgb, 255:red, 155; green, 155; blue, 155 }  ,draw opacity=1 ]   (422.44,48.44) -- (306.31,92.4) ;
\draw [shift={(304.44,93.11)}, rotate = 339.27] [color={rgb, 255:red, 155; green, 155; blue, 155 }  ,draw opacity=1 ][line width=0.75]    (10.93,-3.29) .. controls (6.95,-1.4) and (3.31,-0.3) .. (0,0) .. controls (3.31,0.3) and (6.95,1.4) .. (10.93,3.29)   ;
\draw [color={rgb, 255:red, 155; green, 155; blue, 155 }  ,draw opacity=1 ]   (303.78,88.44) -- (420.58,43.82) ;
\draw [shift={(422.44,43.11)}, rotate = 159.09] [color={rgb, 255:red, 155; green, 155; blue, 155 }  ,draw opacity=1 ][line width=0.75]    (10.93,-3.29) .. controls (6.95,-1.4) and (3.31,-0.3) .. (0,0) .. controls (3.31,0.3) and (6.95,1.4) .. (10.93,3.29)   ;
\draw [color={rgb, 255:red, 155; green, 155; blue, 155 }  ,draw opacity=1 ]   (305.44,93.11) .. controls (285.19,117.94) and (280.94,68.49) .. (303.37,87.21) ;
\draw [shift={(304.78,88.44)}, rotate = 222.51] [color={rgb, 255:red, 155; green, 155; blue, 155 }  ,draw opacity=1 ][line width=0.75]    (10.93,-3.29) .. controls (6.95,-1.4) and (3.31,-0.3) .. (0,0) .. controls (3.31,0.3) and (6.95,1.4) .. (10.93,3.29)   ;
\draw [color={rgb, 255:red, 155; green, 155; blue, 155 }  ,draw opacity=1 ]   (422.44,6.44) -- (306.31,50.4) ;
\draw [shift={(304.44,51.11)}, rotate = 339.27] [color={rgb, 255:red, 155; green, 155; blue, 155 }  ,draw opacity=1 ][line width=0.75]    (10.93,-3.29) .. controls (6.95,-1.4) and (3.31,-0.3) .. (0,0) .. controls (3.31,0.3) and (6.95,1.4) .. (10.93,3.29)   ;
\draw [color={rgb, 255:red, 155; green, 155; blue, 155 }  ,draw opacity=1 ]   (303.78,46.44) -- (420.58,1.82) ;
\draw [shift={(422.44,1.11)}, rotate = 159.09] [color={rgb, 255:red, 155; green, 155; blue, 155 }  ,draw opacity=1 ][line width=0.75]    (10.93,-3.29) .. controls (6.95,-1.4) and (3.31,-0.3) .. (0,0) .. controls (3.31,0.3) and (6.95,1.4) .. (10.93,3.29)   ;
\draw [color={rgb, 255:red, 155; green, 155; blue, 155 }  ,draw opacity=1 ]   (305.44,51.11) .. controls (285.19,75.94) and (280.94,26.49) .. (303.37,45.21) ;
\draw [shift={(304.78,46.44)}, rotate = 222.51] [color={rgb, 255:red, 155; green, 155; blue, 155 }  ,draw opacity=1 ][line width=0.75]    (10.93,-3.29) .. controls (6.95,-1.4) and (3.31,-0.3) .. (0,0) .. controls (3.31,0.3) and (6.95,1.4) .. (10.93,3.29)   ;
\draw  [dash pattern={on 4.5pt off 4.5pt}]  (294.83,154.93) -- (336.39,138.29) -- (425.31,102.68) ;
\draw [shift={(427.17,101.93)}, rotate = 158.17] [color={rgb, 255:red, 0; green, 0; blue, 0 }  ][line width=0.75]    (10.93,-3.29) .. controls (6.95,-1.4) and (3.31,-0.3) .. (0,0) .. controls (3.31,0.3) and (6.95,1.4) .. (10.93,3.29)   ;
\draw  [dash pattern={on 4.5pt off 4.5pt}]  (294,33.08) -- (294,208) ;

\draw (319,169.4) node [anchor=north west][inner sep=0.75pt]    {$0$};
\draw (287.5,126.73) node [anchor=north west][inner sep=0.75pt]    {$\textcolor[rgb]{0.82,0.01,0.11}{\times }$};
\draw (287.5,84.97) node [anchor=north west][inner sep=0.75pt]  [color={rgb, 255:red, 0; green, 0; blue, 0 }  ,opacity=1 ]  {$\textcolor[rgb]{0.82,0.01,0.11}{\times }$};
\draw (287.5,44.07) node [anchor=north west][inner sep=0.75pt]    {$\textcolor[rgb]{0.82,0.01,0.11}{\times }$};
\draw (287.5,188.4) node [anchor=north west][inner sep=0.75pt]    {$\textcolor[rgb]{0.82,0.01,0.11}{\times }$};
\draw (439,92.4) node [anchor=north west][inner sep=0.75pt]    {$\theta ^{+}$};
\draw (184,118.4) node [anchor=north west][inner sep=0.75pt]    {$\theta ^{-}$};
\draw (346,46.4) node [anchor=north west][inner sep=0.75pt]  [font=\scriptsize]  {$\Gamma _{\omega _{k} ,\epsilon }$};
\draw (268.4,85) node [anchor=north west][inner sep=0.75pt]  [font=\scriptsize]  {$\omega _{_{k}}$};
\draw (282.73,161.81) node [anchor=north west][inner sep=0.75pt]    {$s$};
\draw (290.5,149.22) node [anchor=north west][inner sep=0.75pt]  [font=\Large,color={rgb, 255:red, 245; green, 166; blue, 35 }  ,opacity=1 ]  {$\cdot $};
\draw (284.23,145.5) node [anchor=north west][inner sep=0.75pt]  [color={rgb, 255:red, 245; green, 166; blue, 35 }  ,opacity=1 ]  {$a$};
\draw (287.5,156.5) node [anchor=north west][inner sep=0.75pt]    {$\textcolor[rgb]{0.82,0.01,0.11}{\times }$};
\draw (468.78,167.18) node [anchor=north west][inner sep=0.75pt]    {$\Re ( t)$};
\draw (330.78,3.18) node [anchor=north west][inner sep=0.75pt]    {$\Im ( t)$};

\end{tikzpicture}
    \caption{Deformation of the Laplace integration paths along $\theta^\pm$ into Hankel contours $\Gamma_{\omega_k,\epsilon}$ around singularities. }
    \label{fig:Hankel decomposition in stokes thm}
\end{figure}
Each contour $\Gamma_{\omega_k,\epsilon}$ can be defined as follows: it starts from $\omega_k+e^{i\theta^+}\infty$ and goes to $\omega_k+\epsilon e^{i\theta^+}$ along a straight line, and circles clockwise once around $\omega_k$, at last it returns to the $\omega_k+e^{i(\theta^+-2\pi)}\infty$ along the straight line from $\omega_k+\epsilon e^{i(\theta^+-2\pi)}$ (Figure \ref{fig:Hankel decomposition in stokes thm}). Thus,
$$(\LL^{\theta^-}\hat{\varphi}-\LL^{\theta^+}\hat{\varphi})(\rho)=\sum_{\omega_k\in\Omega_s^+}\int_{\Gamma_{\omega_k,\epsilon}}\mathrm{cont}_{\gamma_{\omega_k}^-}\hat{\varphi}(t)e^{-\rho t}dt,\quad\rho\in\Pi_{\gamma(\theta^+)}^{\theta^+}\cap\Pi_{\gamma(\theta^-)}^{\theta^-}.$$
Since the extended $\hat{\varphi}$ has simple singularities at each $\omega_k\in\Omega_{s}^+$, we have
$$\mathrm{cont}_{\gamma_{\omega_k}^-}\hat{\varphi}(t)=\frac{a_k}{2\pi i(t-\omega_k)}+\frac{1}{2\pi i}\hat{\Phi}_k(t-\omega_k)\log(t-\omega_k)+R_k(t-\omega_k),\quad a_k\in\CC;\hat{\Phi}_k,R_k\in\mathcal{O}_0,$$
near each singular point $\omega_k$, i.e. $$\hat{\Phi}_k(t)=\mathrm{cont}_{\gamma_{\omega_k}^-}\hat{\varphi}(\omega_k+t)-\mathrm{cont}_{\gamma_{\omega_k}^-}\hat{\varphi}(\omega_k+te^{-2\pi i})\in\mathcal{O}_0.$$
Set $M_{k,\delta}:=\sup_{t\in\{z:|z|<\delta\}}|\hat{\Phi}_k(t)|<\infty$,
then, for each path $S_{\omega_k,\epsilon}:=\{\omega_k+\epsilon e^{i\theta}:\theta\in(\theta^+-2\pi,\theta^+)\}\subseteq\Gamma_{\omega_k,\epsilon}$ and $\epsilon\to0$,
$$
\left|\int_{S_{\omega_k,\epsilon}}\hat{\Phi}_k(t-\omega_k)\log(t-\omega_k)e^{-\rho t}dt\right|\leq2\pi M_{k,\delta}e^{-\Re(\rho\omega_k)+|\rho|\epsilon}\epsilon\sqrt{\log^2\epsilon+(2\pi)^2}\in O(\epsilon\log\epsilon).
$$
For $\hat{\varphi}\in\hat{\mathcal{R}}_{\Omega_s}^{int}$, observe that, for any $0<\epsilon_1<\epsilon_2<\delta$, $\left(\int_{\Gamma_{\omega_k,\epsilon_1}}-\int_{\Gamma_{\omega_k,\epsilon_2}}\right)\hat{\varphi}(t)e^{-\rho t}dt=0$, which means that each term $\int_{\Gamma_{\omega_k,\epsilon}}\hat{\varphi}(t)e^{-\rho t}dt$ doesn't depend on the the $\epsilon>0$. Hence,
we obtain
$$\begin{aligned}
&\LL^{\theta^-}\hat{\varphi}-\LL^{\theta^+}\hat{\varphi}\\
=&\lim_{\epsilon\to 0}\sum_{\omega_k}\left(\int_{S_{\omega_k,\epsilon}}(\mathrm{cont}_{\gamma_{\omega_k}^-}\hat{\varphi})(t)e^{-\rho t}dt+\int_{\omega_k+\epsilon e^{i\theta^+}}^{\omega_k+e^{i\theta^+}\infty}e^{-\rho  t}\left((\mathrm{cont}_{\gamma_{\omega_k}^-}\hat{\varphi})(\omega_k+(t-\omega_k)e^{-2\pi i})-(\mathrm{cont}_{\gamma_{\omega_k}^-}\hat{\varphi})(t)\right)dt\right)\\
=&\sum_{\omega_k}\lim_{\epsilon\to 0}\left(\int_{S_{\omega_k,\epsilon}}(\mathrm{cont}_{\gamma_{\omega_k}^-}\hat{\varphi})(t)e^{-\rho t}dt+\int_{\omega_k+\epsilon e^{i\theta^+}}^{\omega_k+e^{i\theta^+}\infty}e^{-\rho  t}\left((\mathrm{cont}_{\gamma_{\omega_k}^-}\hat{\varphi})(\omega_k+(t-\omega_k)e^{-2\pi i})-(\mathrm{cont}_{\gamma_{\omega_k}^-}\hat{\varphi})(t)\right)dt\right)\\
=&\sum_{\omega_k}\left(-e^{-\rho\omega_k}\LL^{\theta^+}(a_k\boldsymbol{\delta})+e^{-\rho\omega_k}\int_0^{e^{i\theta^+}\infty}e^{-\rho t^{\prime}}\left((\mathrm{cont}_{\gamma_{\omega_k}^-}\hat{\varphi})(\omega_k+t^{\prime}e^{-2\pi i})-(\mathrm{cont}_{\gamma_{\omega_k}^-}\hat{\varphi})(t^{\prime}+\omega_k)\right)dt^{\prime}\right)\\
=&\sum_{\omega_k}e^{-\rho\omega_k}\int_0^{e^{i\theta^+}\infty}e^{-\rho t^{\prime}}\left(-\mathrm{sing}_0(\mathrm{cont}_{\gamma_{\omega_k}^-}\hat{\varphi})(\omega_k+t^{\prime})\right)dt^{\prime}\\
=&\sum_{\omega_k}e^{-\omega_k\rho}\LL^{\theta^+}\Delta_{\omega_k}^-\hat{\varphi}(\rho).
\end{aligned}$$
if the right side converges.
\end{proof}

This proposition is a generalization of the residue theorem.

\begin{corollary}{\label{coro: residue formula}}
Under the assumptions in Lemma \ref{lemma: Laplace and stokes}, suppose all singularities of $\hat{\varphi}$ are simple poles, then
$$\LL^{\theta^+}\hat{\varphi}=\LL^{\theta^-}\hat{\varphi}+2\pi i\sum_{\omega_k}e^{-\omega_k\rho}\mathrm{Res}(\mathrm{cont}_{\gamma_{\omega_k}^-}\hat{\varphi};\omega_k),\quad\rho\in\Pi_{\gamma(\theta^+)}^{\theta^+}\cap\Pi_{\gamma(\theta^-)}^{\theta^-},$$
if the right hand side converges.
 \end{corollary}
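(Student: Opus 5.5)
The plan is to specialize Lemma \ref{lemma: Laplace and stokes} to the case in which every singular point is a simple pole. The alien operators then reduce to residues times the Dirac delta, and their Laplace transforms are constants. All hypotheses of the lemma are assumed, so the identity
$$\LL^{\theta^-}\hat{\varphi}(\rho)=\LL^{\theta^+}\hat{\varphi}(\rho)+\sum_{\omega_k\in\Omega_s^+}e^{-\omega_k\rho}\LL^{\theta^+}\Delta_{\omega_k}^-\hat{\varphi}(\rho)$$
holds for $\rho\in\Pi_{\gamma(\theta^+)}^{\theta^+}\cap\Pi_{\gamma(\theta^-)}^{\theta^-}$, whenever the right-hand side converges. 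It remains to compute each term $\LL^{\theta^+}\Delta_{\omega_k}^-\hat{\varphi}$.

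\textbf{Step 1: identify the singular data.} Fix $\omega_k\in\Omega_s^+$. By hypothesis, $\mathrm{cont}_{\gamma_{\omega_k}^-}\hat{\varphi}$ has a simple pole at $\omega_k$. Hence near $\omega_k$ we can write
$$\mathrm{cont}_{\gamma_{\omega_k}^-}\hat{\varphi}(t)=\frac{r_k}{t-\omega_k}+R_k(t-\omega_k),\qquad r_k:=\mathrm{Res}(\mathrm{cont}_{\gamma_{\omega_k}^-}\hat{\varphi};\omega_k),\quad R_k\in\mathcal{O}_0.$$
Comparing with Definition \ref{def: simple singularity}, this is a simple singularity with $C=2\pi i\,r_k$ and $\hat{\Phi}_k\equiv0$; the latter holds because no logarithmic part is present and the function is single-valued around $\omega_k$. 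Therefore $\mathrm{sing}_{\omega_k}=2\pi i\,r_k\boldsymbol{\delta}$. By Definition \ref{Def:alien operator}, $\Delta_{\omega_k}^-\hat{\varphi}=-2\pi i\,r_k\boldsymbol{\delta}$, which agrees with the Example following that definition.

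\textbf{Step 2: Laplace transform and rearrangement.} Since $\LL^{\theta^+}\boldsymbol{\delta}=1$, each term becomes $\LL^{\theta^+}\Delta_{\omega_k}^-\hat{\varphi}(\rho)=-2\pi i\,r_k$. Substituting this into the lemma gives
$$\LL^{\theta^-}\hat{\varphi}=\LL^{\theta^+}\hat{\varphi}-2\pi i\sum_{\omega_k}e^{-\omega_k\rho}r_k .$$
Moving the sum to the other side yields the stated formula. Convergence of the residue series is exactly the convergence of the right-hand side required by the lemma, so the two convergence conditions are equivalent.

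\textbf{Expected difficulty.} The only point needing care is the sign convention. The alien operator carries a minus sign and is defined through the continuation $\gamma_{\omega_k}^-$, which passes to the left of the earlier poles. One must check that the residue appearing is the residue of that particular continuation. For poles this is harmless, since going around a pole does not change the branch. Everything else is a direct substitution into Lemma \ref{lemma: Laplace and stokes}.
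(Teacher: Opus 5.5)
Your proposal is correct and follows the route the paper intends. The paper gives no separate proof; the corollary comes from inserting $\Delta_{\omega_k}^-\hat{\varphi}=-2\pi i\,\mathrm{Res}(\mathrm{cont}_{\gamma_{\omega_k}^-}\hat{\varphi};\omega_k)\,\boldsymbol{\delta}$ (the Example after Definition \ref{Def:alien operator}) into Lemma \ref{lemma: Laplace and stokes}, using $\LL^{\theta^+}\boldsymbol{\delta}=1$, and rearranging, which is what you do; your sign agrees with a direct residue check, since the contour running out along $\theta^+$ and back along $\theta^-$ encircles the poles positively.
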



\section{The $1$-Gevrey Asymptotic Expansion 
}{\label{section:deformed Borel-laplace}}


In this section we apply resurgence theory to study the $1$-Gevrey asymptotic expansion of
$$\widetilde{\frac{d}{d\lambda}}\widetilde{\log}\HDet_{-\Delta_X+C}^{\sharp_1}(\lambda)=\sum_{n=1}^{\infty}\frac{e^{s_0\lambda_n}}{\lambda+\lambda_n}\quad\text{for }\lambda\in\mathcal{D}\left(\left(-\frac{\pi}{2},\frac{\pi}{2}\right),0\right)=\CC\setminus\RR_{\leq0}$$
appearing in Definition \ref{def:exp reg}. As pointed out in \S\ref{resu theoty}, the series
$$\sum_{i\geq0}\sum_{n=1}^{\infty}(-1)^ie^{s_0\lambda_n}\lambda_n^i\cdot\lambda^{-i-1}$$ is a formal expansion of $\sum_{n=1}^{\infty}\frac{e^{s_0\lambda_n}}{\lambda+\lambda_n}$,
whose radius of convergence is zero. We have following theorem to characterize their relationship.

\begin{theorem}{\label{thm:deform 1-Gevrey expansion}}
The series $\widetilde{\frac{d}{d\lambda}}\widetilde{\log}\HDet^{\sharp_1}_{-\Delta_X+C}(\lambda)$ converges uniformly and is holomorphic in any compact subset of $\mathbb{C}\setminus\{-\lambda_1,-\lambda_2,\cdots\}$. Furthermore,
$$
\widetilde{\frac{d}{d\lambda}}\widetilde{\log}\HDet^{\sharp_1}_{-\Delta_X+C}(\lambda)\sim_1\sum_{i\geq0}\left((-1)^i\sum_{n=1}^{\infty}e^{s_0\lambda_n}\lambda_n^i\right) \lambda^{-i-1}\quad{\text{for }}\lambda\in\CC\setminus\RR_{\leq0}.
$$
Or equivalently, for $\theta\in\left(-\frac{\pi}{2},\frac{\pi}{2}\right)$,
$$
\widetilde{\frac{d}{d\lambda}}\widetilde{\log}\HDet^{\sharp_1}_{-\Delta_X+C}(\lambda)=\LL^{\theta}\hat{\Theta}_{-\Delta_X+C}(t-s_0) \sim_1\sum_{i\geq0}\frac{d^i}{ds^i}\hat{\Theta}_{-\Delta_X+C}(s)\Big|_{s=-s_0}\cdot\lambda^{-i-1}.
$$
\end{theorem}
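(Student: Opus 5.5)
The proof splits into a convergence statement and an asymptotic statement; the asymptotic one follows from Proposition \ref{prop:1-Gevrey asymptotic} once the Borel transform of the formal series is identified.

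\emph{Convergence.} Fix a compact $K\subseteq\CC\setminus\{-\lambda_1,-\lambda_2,\dots\}$. Then $|\lambda+\lambda_n|$ is bounded below on $K$ for all $n$: only finitely many $\lambda_n$ are small, and $|\lambda+\lambda_n|\geq\lambda_n-\max_K|\lambda|\to\infty$ for the rest. Hence each term is bounded by $c_K e^{s_0\lambda_n}$. Since $s_0<0$, the series $\sum_n e^{s_0\lambda_n}=\hat{\Theta}_{-\Delta_X+C}(-s_0)$ converges by Proposition \ref{prop:laplace Theta propty}(i). The Weierstrass M-test then gives uniform convergence on $K$, and holomorphy follows.

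\emph{Borel transform and its continuation.} Write $\tilde\varphi(\lambda)=\sum_{i\ge0}a_i\lambda^{-i-1}$ with $a_i=(-1)^i\sum_n e^{s_0\lambda_n}\lambda_n^i$. By Definition \ref{Def:borel transform},
\begin{equation*}
\hat\varphi(t)=\sum_{i\ge0}a_i\frac{t^i}{i!}=\sum_{i\ge0}\sum_{n\ge1}e^{s_0\lambda_n}\frac{(-\lambda_n t)^i}{i!}.
\end{equation*}
For $|t|<-s_0$ the double series converges absolutely, because
\begin{equation*}
\sum_n\sum_i e^{s_0\lambda_n}\frac{(\lambda_n|t|)^i}{i!}=\sum_n e^{(s_0+|t|)\lambda_n}=\hat\Theta_{-\Delta_X+C}(-s_0-|t|)<\infty .
\end{equation*}
Exchanging the sums (Fubini) gives $\hat\varphi(t)=\sum_n e^{-\lambda_n(t-s_0)}=\hat\Theta_{-\Delta_X+C}(t-s_0)$. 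So $\hat\varphi$ is a convergent germ, holomorphic on the disc $|t|<-s_0$. It extends holomorphically to the half-plane $\Re(t)>s_0$, which contains the closed sector $\{\arg t\in[-\tfrac{\pi}{2}+\eta,\tfrac{\pi}{2}-\eta]\}$ for every $\eta>0$. The same Taylor computation shows $a_i=\frac{d^i}{ds^i}\hat\Theta_{-\Delta_X+C}(s)|_{s=-s_0}$, since differentiating termwise is legitimate inside the half-plane of convergence. This gives the equivalence of the two displayed forms of the statement.

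\emph{Growth and conclusion.} Proposition \ref{prop:Theta series in N} gives $\hat\Theta_{-\Delta_X+C}(t-s_0)\in\mathcal N((-\tfrac{\pi}{2},\tfrac{\pi}{2}),0)$. Corollary \ref{relation:laplace relation four}(ii) identifies $\LL^I$ of it with the sum function on $\mathcal D(I,0)=\CC\setminus\RR_{\le0}$. Since $\hat\varphi$ is holomorphic at $0$, Proposition \ref{prop:1-Gevrey asymptotic} yields $\LL^I\hat\varphi\sim_1\tilde\varphi$ uniformly on $\mathcal D(J,0)$ for every relatively compact $J\subseteq I$. This is exactly the claim.

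The main point requiring care is the uniform exponential bound on rays. We need $|\hat\Theta(t-s_0)|\le\alpha(\theta)e^{\epsilon r}$ with $\alpha$ locally bounded on the open interval, including near $t=0$ where the shift keeps the argument $t-s_0$ in the region $\Re>-s_0>0$. I would verify this via Proposition \ref{prop:laplace Theta propty}(iii) with $\delta=-s_0$. That bound actually gives $\alpha$ uniform in $\theta$, because $\Re(t-s_0)>-s_0$ whenever $\Re t>0$.
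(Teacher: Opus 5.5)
Your proposal is correct and follows the paper's route. You identify the Borel transform of the formal series as $\hat{\Theta}_{-\Delta_X+C}(t-s_0)$, check that it is holomorphic at $0$ and lies in $\mathcal{N}\left(\left(-\frac{\pi}{2},\frac{\pi}{2}\right),0\right)$, and then combine Corollary~\ref{relation:laplace relation four}(ii) with Proposition~\ref{prop:1-Gevrey asymptotic}. The differences are minor: you justify that identification by absolute convergence of the double series (Fubini), where the paper uses a Cauchy estimate on $a_i$ plus Taylor expansion at $-s_0$, and you add the Weierstrass $M$-test proof of convergence on compact sets, which the paper asserts without proof.
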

\begin{proof}
For any $i\geq0$,
$$a_i:=\sum_{n=1}^{\infty}e^{\lambda_ns_0}(-\lambda_n)^i=\frac{d^i}{ds^i}\hat{\Theta}_{-\Delta_X+C}(s)\Big|_{s=-s_0}\text{ converges if }s_0<0,$$
and there exist $\delta<|s_0|$ and $M>0$ such that
$$|a_i|=\left|\frac{d^i}{ds^i}\hat{\Theta}_{-\Delta_X+C}(s)\Big|_{s=-s_0}\right|\leq\frac{i!M}{\delta^i}$$
by the Cauchy inequality. Thus, the formal Borel transform of $\sum_{i\geq0}a_i\lambda^{-i-1}$ converges, and is holomorphic at $0$. That is,
\begin{align*}
\mathcal{B}\left(\sum_{i\geq0}a_i\lambda^{-i-1}\right)(t)=&\sum_{i\geq0}\frac{a_i}{i!}t^i\\=&\sum_{i\geq0}\frac{t^i}{i!}\frac{d^i}{ds^i}\hat{\Theta}_{-\Delta_X+C}(s)\Big|_{s=-s_0}\\=&\hat{\Theta}_{-\Delta_X+C}(t-s_0)\in\mathcal{N}\left(\left(-\frac{\pi}{2},\frac{\pi}{2}\right),0\right).
\end{align*}
By Proposition \ref{prop:Theta series in N} and Corollary \ref{relation:laplace relation four}, for $\theta\in\left(-\frac{\pi}{2},\frac{\pi}{2}\right)$,
$$\left(\LL^{\theta}\hat{\Theta}_{-\Delta_X+C}(t-s_0)\right)(\lambda)=\widetilde{\frac{d}{d\lambda}}\widetilde{\log}\HDet^{\sharp_1}_{-\Delta_X+C}(\lambda)$$
is holomorphic in $\Pi_0^{\theta}=\{\lambda\in\CC:\Re(\lambda e^{i\theta})>0\}$. Now, according to Proposition \ref{prop:1-Gevrey asymptotic}, we obtain
$$
\widetilde{\frac{d}{d\lambda}}\widetilde{\log}\HDet^{\sharp_1}_{-\Delta_X+C}(\lambda)\sim_1\sum_{i\geq0}a_i\lambda^{-i-1}
$$
for $\lambda\in\bigcup_{\theta\in(-\frac{\pi}{2},\frac{\pi}{2})}\Pi_0^{\theta}=\CC\setminus\RR_{\leq0}$.
\end{proof}

By Propositions \ref{prop:multiple in borel and derivative in laplace} and \ref{prop:1-Gevrey asymptotic}, the $m$-th derivative of the exp deformed regularization
of the determinant also admits a $1$-Gevrey expansion.

\begin{corollary}{\label{gevrey expansion of exp}}
For $m\in\ZZ_{\geq1}$, the series
$\frac{d^{m-1}}{d\lambda^{m-1}}\left(\widetilde{\frac{d}{d\lambda}}\widetilde{\log}\HDet^{\sharp_1}_{-\Delta_X+C}\right)(\lambda)$ converges uniformly and is holomorphic in any compact subset of $\CC\setminus\{-\lambda_1,-\lambda_2,\cdots\}$. Furthermore,
$$
\begin{aligned}
\frac{d^{(m-1)}}{d\lambda^{m-1}}\left(\widetilde{\frac{d}{d \lambda}}\widetilde{\log}\HDet^{\sharp_1}_{-\Delta_X+C}\right)(\lambda)
&=\LL^{(-\frac{\pi}{2},\frac{\pi}{2})}((-t)^{m-1}\hat{\Theta}_{-\Delta_X+C}(t-s_0))\\
&\sim_1\sum_{i\geq0}\frac{(-1)^{m-1}(m+i-1)!}{i!}\frac{d^i}{ds^i} \hat{\Theta}_{-\Delta_X+C}(s)\Big|_{s=-s_0}\cdot\lambda^{-i-m}\\
&=\sum_{i\geq0}\left(\frac{(-1)^{m+i-1}(m+i-1)!}{i!}\sum_{n=1}^{\infty}e^{s_0\lambda_n}\lambda_n^i\right)\lambda^{-m-i}
\end{aligned}
$$
for $\lambda\in\CC\setminus\RR_{\leq0}$.
\end{corollary}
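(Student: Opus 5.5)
The plan is to derive Corollary \ref{gevrey expansion of exp} from Theorem \ref{thm:deform 1-Gevrey expansion} by differentiating $m-1$ times on both sides of the Laplace correspondence.

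\textbf{Convergence and holomorphy.} Each term of the series is $\frac{d^{m-1}}{d\lambda^{m-1}}\frac{e^{s_0\lambda_n}}{\lambda+\lambda_n}=\frac{(-1)^{m-1}(m-1)!\,e^{s_0\lambda_n}}{(\lambda+\lambda_n)^m}$. On a compact $K\subset\CC\setminus\{-\lambda_1,-\lambda_2,\dots\}$ we have $|\lambda+\lambda_n|\geq c_K>0$ for all $n$, and $|\lambda+\lambda_n|\geq\lambda_n/2$ once $n$ is large. The factor $e^{s_0\lambda_n}$ then gives a summable majorant, because Weyl's law (Lemma \ref{lem:astomptotic of spectrum}) makes $\sum e^{s_0\lambda_n}$ converge. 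Weierstrass' theorem therefore gives uniform convergence and holomorphy. It also lets us differentiate term by term, so the left side equals the $(m-1)$-th derivative of the sum function from Theorem \ref{thm:deform 1-Gevrey expansion}.

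\textbf{The Laplace identity.} By Proposition \ref{prop:Theta series in N}, $\hat\Theta_{-\Delta_X+C}(t-s_0)\in\mathcal N((-\frac{\pi}{2},\frac{\pi}{2}),0)$. Applying Proposition \ref{prop:multiple in borel and derivative in laplace} $m-1$ times shows that $(-t)^{m-1}\hat\Theta_{-\Delta_X+C}(t-s_0)$ lies in the same class. Its Laplace transform $\LL^{(-\frac{\pi}{2},\frac{\pi}{2})}$ is the $(m-1)$-th $\lambda$-derivative of $\LL^{(-\frac{\pi}{2},\frac{\pi}{2})}\hat\Theta_{-\Delta_X+C}(t-s_0)$, which by Corollary \ref{relation:laplace relation four}(ii) is $\sum_n e^{s_0\lambda_n}/(\lambda+\lambda_n)$. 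This proves the first equality on $\CC\setminus\RR_{\le0}$.

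\textbf{The asymptotic expansion.} The function $(-t)^{m-1}\hat\Theta_{-\Delta_X+C}(t-s_0)$ is holomorphic at $0$. Its Taylor series, computed from the expansion of $\hat\Theta(t-s_0)$ used in the proof of Theorem \ref{thm:deform 1-Gevrey expansion}, is
\begin{equation*}
(-1)^{m-1}\sum_{i\ge0}a_i\frac{t^{i+m-1}}{i!},\qquad a_i=\frac{d^i}{ds^i}\hat\Theta_{-\Delta_X+C}(s)\Big|_{s=-s_0}.
\end{equation*}
By Definition \ref{Def:borel transform}, this is the Borel transform of
\begin{equation*}
\sum_{i\ge0}(-1)^{m-1}\frac{(i+m-1)!}{i!}a_i\lambda^{-i-m},
\end{equation*}
since the coefficient of $\lambda^{-(i+m-1)-1}$ must be $(i+m-1)!$ times the coefficient of $t^{i+m-1}$. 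Proposition \ref{prop:1-Gevrey asymptotic} then gives the uniform $1$-Gevrey expansion on $\mathcal D(J,0)$ for every relatively compact $J$, which is the claimed statement for $\lambda\in\CC\setminus\RR_{\le0}$. Finally, substituting $a_i=(-1)^i\sum_n e^{s_0\lambda_n}\lambda_n^i$ yields the last equality.

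\textbf{Expected difficulty.} The only delicate point is bookkeeping. The expansion must start at $\lambda^{-m}$, not $\lambda^{-1}$, so the zero leading coefficients have to be checked to fit Definition \ref{def:1-Gevrey expansion}, which they do trivially. The factorial growth $(i+m-1)!/i!$ only adds a polynomial factor, which the Gevrey constants absorb. No genuine obstacle is expected.
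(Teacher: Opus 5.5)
Your proposal is correct and follows the paper's own route. The paper justifies this corollary in a single line, by combining Proposition \ref{prop:multiple in borel and derivative in laplace} (multiplication by $-t$ in the Borel plane corresponds to $\frac{d}{d\lambda}$ after Laplace transform) with Proposition \ref{prop:1-Gevrey asymptotic} applied to $(-t)^{m-1}\hat{\Theta}_{-\Delta_X+C}(t-s_0)$; your write-up supplies the omitted details: termwise differentiation, the Borel coefficient bookkeeping giving $(i+m-1)!/i!$, and the substitution for $a_i$.
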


For $\rho_n=\sqrt{\lambda_n}$, the same method can also be used to compute the formal derivative of
$$\widetilde{\log}\HDet_{D_X}^{\sharp_1}(\rho):=\sum_{n=1}^{\infty}e^{s_0\rho_n}\log(\rho+\rho_n),\quad s_0\in\RR_{<0},$$
where $\HDet_{D_X}(\rho):=\prod_{n=1}^{\infty}(\rho+\rho_n)$.

\begin{corollary}{\label{deform 1-Gevrey thm about rho}}
The series $\widetilde{\frac{d}{d\rho}}\widetilde{\log}\HDet^{\sharp_1}_{D_X}(\rho)$ converges uniformly and is holomorphic in any compact subset of $\CC\setminus\{-\rho_1,-\rho_2,\cdots\}$. Furthermore,
$$\widetilde{\frac{d}{d\rho}}\widetilde{\log}\HDet^{\sharp_1}_{D_X}(\rho)\sim_1\sum_{i\geq0}\left((-1)^i\sum_{n=1}^{\infty}e^{s_0\rho_n}\rho_n^i\right)\rho^{-i-1}\quad\text{for }\rho\in\CC\setminus\RR_{\leq0}.$$
Or equivalently, for $\theta\in\left(-\frac{\pi}{2},\frac{\pi}{2}\right)$,
$$\widetilde{\frac{d}{d\rho}}\widetilde{\log}\HDet^{\sharp_1}_{D_X}(\rho)=\LL^{\theta}\hat{\Theta}_{D_X}(t-s_0)\sim_1\sum_{i\geq0}\frac{d^i}{ds^i}\hat{\Theta}_{D_X}(s)\Big|_{s=-s_0}\cdot\rho^{-i-1}.$$
\end{corollary}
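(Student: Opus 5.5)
The plan is to repeat the proof of Theorem \ref{thm:deform 1-Gevrey expansion} verbatim, with $\hat{\Theta}_{-\Delta_X+C}$ replaced by $\hat{\Theta}_{D_X}$ and $\lambda_n$ by $\rho_n=\sqrt{\lambda_n}$. The convergence statement comes first. For $\rho$ in a compact set $K\subseteq\CC\setminus\{-\rho_1,-\rho_2,\cdots\}$, we have $|\rho+\rho_n|\geq c_K>0$ for all $n$, and $|\rho+\rho_n|\geq\rho_n/2$ for $n$ large. Hence $\sum_n e^{s_0\rho_n}/(\rho+\rho_n)$ is dominated termwise by a multiple of $\sum_n e^{s_0\rho_n}=\hat{\Theta}_{D_X}(-s_0)$. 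This sum is finite because $-s_0>0$ and Proposition \ref{proposition:Dirac-theta property}(i) applies. The Weierstrass M-test then gives uniform convergence and holomorphy.

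Next I would set
\[
a_i:=\sum_{n\geq1}e^{s_0\rho_n}(-\rho_n)^i=\frac{d^i}{ds^i}\hat{\Theta}_{D_X}(s)\Big|_{s=-s_0}.
\]
Termwise differentiation is justified because $\hat{\Theta}_{D_X}$ is a locally uniformly convergent series of holomorphic functions on $\Re(s)>0$. Choose $0<\delta<|s_0|$; the closed disc of radius $\delta$ about $-s_0$ lies in $\Re(s)>0$. Let $M$ be the maximum of $|\hat{\Theta}_{D_X}|$ on that disc. Cauchy's inequality then gives $|a_i|\leq i!M\delta^{-i}$.

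From this bound, the formal Borel transform $\mathcal{B}\bigl(\sum_i a_i\rho^{-i-1}\bigr)=\sum_i a_i t^i/i!$ converges on $|t|<\delta$. It is the Taylor series of $\hat{\Theta}_{D_X}(t-s_0)$ at $t=0$, so the Borel transform is holomorphic at $0$ and its sum is $\hat{\Theta}_{D_X}(t-s_0)$. Proposition \ref{prop:Theta series in N} gives $\hat{\Theta}_{D_X}(t-s_0)\in\mathcal{N}\bigl((-\frac{\pi}{2},\frac{\pi}{2}),0\bigr)$. Corollary \ref{relation:laplace relation four}(iv) identifies $\LL^\theta\hat{\Theta}_{D_X}(t-s_0)(\rho)$ with $\sum_n e^{\rho_n s_0}/(\rho+\rho_n)$ on $\Pi_0^\theta$. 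Finally, Proposition \ref{prop:1-Gevrey asymptotic} yields the uniform $1$-Gevrey expansion on $\mathcal{D}(J,0)$ for relatively compact $J\subseteq(-\frac{\pi}{2},\frac{\pi}{2})$. Since $\bigcup_\theta\Pi_0^\theta=\CC\setminus\RR_{\leq0}$, this is the claimed statement in the paper's convention.

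The only point needing care is Proposition \ref{prop:Theta series in N} for the shifted series, whose proof the paper omitted. It reduces to the bound $|\hat{\Theta}_{D_X}(t-s_0)|\leq C'(|s_0|)e^{-\rho_1\Re(t-s_0)}$. This bound holds because $\Re(t-s_0)>|s_0|$ for $\Re t\geq0$, so Proposition \ref{proposition:Dirac-theta property}(iii) applies. The same bound also gives integrability at the origin, since the shifted series is in fact holomorphic there.
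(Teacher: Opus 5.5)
Your proposal is correct and follows the paper's route. The paper proves this corollary by saying ``the same method'' applies, i.e.\ by repeating the proof of Theorem~\ref{thm:deform 1-Gevrey expansion} with $\lambda_n$ replaced by $\rho_n$: Cauchy bounds on $a_i=\hat{\Theta}_{D_X}^{(i)}(-s_0)$, identification of the Borel sum with $\hat{\Theta}_{D_X}(t-s_0)$, then Proposition~\ref{prop:Theta series in N}, Corollary~\ref{relation:laplace relation four}(iv) and Proposition~\ref{prop:1-Gevrey asymptotic}. Your extra details (the M-test for convergence, and the bound behind Proposition~\ref{prop:Theta series in N} for the shifted series, where strictly you need $\Re t>0$ or the constant $C'(|s_0|/2)$) fill in what the paper leaves implicit.
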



\section{Formal Logatithmic Derivative Determinant}{\label{section:computation about derivative regularization}}
In this section, we will prove the formula for computing the formal derivative determinant (Theorem \ref{theorem about derivative regularization in introduction}) by applying Stokes Lemma \ref{lemma: Laplace and stokes} with $$\hat{\varphi}(t)=t^d\hat{\Theta}_{D_X}(t).$$
As a result, the series $\widetilde{\frac{d^{m_0}}{d\rho^{m_0}}}\widetilde{\log}\HDet_{-\Delta_X+C}(\lambda)$ can be computed by the singularities and the analytic continuation of $(-t)^{m_0-1}\hat{\Theta}_{D_X}(t)$, where $m_0=\dim X+1=d+1$.


\vskip 3mm

\begin{proofof}{Theorem \ref{theorem about derivative regularization in introduction}}
By Proposition \ref{prop:Theta series in N} and Corollary \ref{relation:laplace relation four} $(iii)$, the Laplace transform of $(-t)^{m_0-1}\hat{\Theta}_{D_X}(t)$ along $\theta\in\left(-\frac{\pi}{2},\frac{\pi}{2}\right)$ converges and is holomorphic in $\Re{(\rho e^{i\theta})}>0$. It means that
\begin{equation}{\label{laplace and determinant}}
\LL^{\theta}\left((-t)^{m_0-1}\hat{\Theta}_{D_X}(t)\right)(\rho)=\sum_{n\geq1}\frac{(-1)^{m_0-1}(m_0-1)!}{(\rho+\rho_n)^{m_0}} =\widetilde{\frac{d^{m_0}}{d\rho^{m_0}}}\widetilde{\log}\HDet_{D_X}(\rho),
\end{equation}
if $\Re{(\rho e^{i\theta})}>0$ for $\theta\in\left(-\frac{\pi}{2},\frac{\pi}{2}\right)$. Since $\frac{d}{2\rho d\rho}=\frac{d}{d\lambda}$, by Definition \ref{def:deriva reg},
\begin{equation}{\label{eqn:decomposition of m-derivative}}
\begin{aligned}
\widetilde{\frac{d^{m_0}}{d\rho^{m_0}}}\widetilde{\log} \HDet_{-\Delta_X+C}(\rho^2)
&=\sum_{n\geq1}\frac{d^{m_0-1}}{d\rho^{m_0-1}}\frac{2\rho}{\rho^2+\rho_n^2}\\
&=\sum_{n\geq1}\left((-i)^{m_0}\frac{d^{m_0-1}}{d(-i\rho)^{m_0-1}}\frac{1}{\rho_n-i\rho}+i^{m_0}\frac{d^{m_0-1}}{d(i\rho)^{m_0-1}}\frac{1}{\rho_n+i\rho}\right)\\
&=(-i)^{m_0}\widetilde{\frac{d^{m_0}}{d(-i\rho)^{m_0}}}\widetilde{\log}\HDet_{D_X}(-i\rho)+i^{m_0}\widetilde{\frac{d^{m_0}}{d(i\rho)^{m_0}}}\widetilde{\log}\HDet_{D_X}(i\rho).
\end{aligned}
\end{equation}
\begin{figure}[htbp]
    \centering 
\tikzset{every picture/.style={line width=0.75pt}} 

\begin{tikzpicture}[x=0.75pt,y=0.75pt,yscale=-1,xscale=1]

\draw  (249,140.5) -- (407.67,140.5)(351.07,50.5) -- (351.07,239.5) (400.67,135.5) -- (407.67,140.5) -- (400.67,145.5) (346.07,57.5) -- (351.07,50.5) -- (356.07,57.5)  ;
\draw   (346.91,158.18) -- (354.85,166.31)(355.17,158.05) -- (346.59,166.44) ;
\draw   (346.91,178.18) -- (354.85,186.31)(355.17,178.05) -- (346.59,186.44) ;
\draw   (346.91,194.18) -- (354.85,202.31)(355.17,194.05) -- (346.59,202.44) ;
\draw   (346.91,213.18) -- (354.85,221.31)(355.17,213.05) -- (346.59,221.44) ;
\draw    (351,140.5) -- (321.66,56.89) ;
\draw [shift={(321,55)}, rotate = 70.67] [color={rgb, 255:red, 0; green, 0; blue, 0 }  ][line width=0.75]    (10.93,-3.29) .. controls (6.95,-1.4) and (3.31,-0.3) .. (0,0) .. controls (3.31,0.3) and (6.95,1.4) .. (10.93,3.29)   ;
\draw   (346.02,54.04) -- (355.55,62.27)(355.93,53.91) -- (345.64,62.4) ;
\draw   (346.02,118.04) -- (355.55,126.27)(355.93,117.91) -- (345.64,126.4) ;
\draw   (346.02,100.04) -- (355.55,108.27)(355.93,99.91) -- (345.64,108.4) ;
\draw   (346.02,78.04) -- (355.55,86.27)(355.93,77.91) -- (345.64,86.4) ;
\draw [color={rgb, 255:red, 155; green, 155; blue, 155 }  ,draw opacity=1 ][fill={rgb, 255:red, 80; green, 227; blue, 194 }  ,fill opacity=1 ] [dash pattern={on 4.5pt off 4.5pt}]  (351,140.5) -- (251.04,130.56) ;
\draw [color={rgb, 255:red, 155; green, 155; blue, 155 }  ,draw opacity=1 ] [dash pattern={on 4.5pt off 4.5pt}]  (351,140.5) -- (250.81,149.89) ;
\draw  [color={rgb, 255:red, 0; green, 0; blue, 0 }  ,draw opacity=0 ][fill={rgb, 255:red, 155; green, 155; blue, 155 }  ,fill opacity=0.17 ] (269.17,131.97) .. controls (273.06,88.91) and (306.95,54.81) .. (348.99,53.84) .. controls (394.37,52.78) and (432.06,90.73) .. (433.17,138.59) .. controls (434.28,186.45) and (398.39,226.11) .. (353.01,227.16) .. controls (309.28,228.18) and (272.69,192.98) .. (269.09,147.59) -- (351,140.5) -- cycle ;
\draw    (351.11,58.93) -- (350.83,81.42) ;
\draw [color={rgb, 255:red, 255; green, 255; blue, 255 }  ,draw opacity=1 ][line width=1.5]    (350.92,83.13) -- (350.92,103.13) ;
\draw [color={rgb, 255:red, 255; green, 255; blue, 255 }  ,draw opacity=1 ][line width=1.5]    (350.92,59.13) -- (350.83,81.42) ;
\draw [color={rgb, 255:red, 255; green, 255; blue, 255 }  ,draw opacity=1 ][line width=1.5]    (350.92,104.79) -- (350.92,121.21) ;
\draw [color={rgb, 255:red, 255; green, 255; blue, 255 }  ,draw opacity=1 ][line width=1.5]    (350.92,53.21) -- (350.92,57.72) ;
\draw [color={rgb, 255:red, 255; green, 255; blue, 255 }  ,draw opacity=1 ][line width=1.5]    (350.93,163.53) -- (350.93,180.73) ;
\draw [color={rgb, 255:red, 255; green, 255; blue, 255 }  ,draw opacity=1 ][line width=1.5]    (350.92,182.7) -- (350.93,196.64) ;
\draw [color={rgb, 255:red, 255; green, 255; blue, 255 }  ,draw opacity=1 ][line width=1.5]    (350.92,199.59) -- (350.92,216.01) ;
\draw [color={rgb, 255:red, 255; green, 255; blue, 255 }  ,draw opacity=1 ][line width=1.5]    (350.92,218.39) -- (351.07,240.73) ;

\draw (353,143.9) node [anchor=north west][inner sep=0.75pt]  [font=\scriptsize]  {$0$};
\draw (296,57.4) node [anchor=north west][inner sep=0.75pt]  [font=\tiny]  {$\frac{\pi }{2} +\epsilon $};
\draw (416.38,50.73) node [anchor=north west][inner sep=0.75pt]  [font=\scriptsize,xslant=0.02]  {$t^{m_{0} -1}\hat{\Theta }_{D_{X}}( t)$};
\draw (238.19,113.07) node [anchor=north west][inner sep=0.75pt]  [font=\tiny,color={rgb, 255:red, 155; green, 155; blue, 155 }  ,opacity=1 ]  {$\frac{\pi }{2} +\delta $};
\draw (232.85,153.07) node [anchor=north west][inner sep=0.75pt]  [font=\tiny,color={rgb, 255:red, 155; green, 155; blue, 155 }  ,opacity=1 ]  {$-\frac{\pi }{2} -\delta $};
\draw (262.2,77.67) node [anchor=north west][inner sep=0.75pt]  [font=\scriptsize,color={rgb, 255:red, 155; green, 155; blue, 155 }  ,opacity=1 ]  {$U_{\delta }$};
\draw (353.33,69.23) node [anchor=north west][inner sep=0.75pt]  [font=\tiny]  {$i\tau _{k}$};
\draw (391.53,146.48) node [anchor=north west][inner sep=0.75pt]  [font=\scriptsize]  {$\Re ( t)$};
\draw (349.93,35.15) node [anchor=north west][inner sep=0.75pt]  [font=\scriptsize]  {$\Im ( t)$};

\end{tikzpicture}
\caption{The holomorphic region $U_\delta$ and the singularities of $(-t)^d\hat{\Theta}_{D_X}(t)$.}
\label{fig: region U in derivative det computation}
\end{figure}
By \eqref{laplace and determinant}, for $-\frac{\pi}{2}+\epsilon<\arg(\rho)<\frac{\pi}{2}-\epsilon$,
$$
\begin{aligned}
&i^{m_0}\widetilde{\frac{d^{m_0}}{d(i\rho)^{m_0}}}\widetilde{\log}\HDet_{D_X}(i\rho)+(-i)^{m_0}\widetilde{\frac{d^{m_0}}{d(-i\rho)^{m_0}}}\widetilde{\log}\HDet_{D_X}(-i\rho)\\
=&i^{m_0}\int_0^{e^{(-\frac{\pi}{2}+\epsilon)i}\infty}(-t)^{m_0-1}\hat{\Theta}_{D_X}(t)e^{-i\rho t}dt+(-i)^{m_0}\int_0^{e^{(\frac{\pi}{2}-\epsilon)i}\infty}(-t)^{m_0-1}\hat{\Theta}_{D_X}(t)e^{i\rho t}dt\\
=&i^{m_0}\left(-\int_0^{e^{(\frac{\pi}{2}+\epsilon)i}\infty}t^{m_0-1}\hat{\Theta}_{D_X}(te^{-i\pi})e^{i\rho t}dt-\int_0^{e^{(\frac{\pi}{2}-\epsilon)i}\infty}t^{m_0-1}\hat{\Theta}_{D_X}(t)e^{i\rho t}dt\right)\\
=&i^{m_0}\left(\int_0^{e^{(\frac{\pi}{2}+\epsilon)i}\infty}t^{m_0-1}\hat{\Theta}_{D_X}(t)e^{i\rho t}dt-\int_0^{e^{(\frac{\pi}{2}-\epsilon)i}\infty}t^{m_0-1}\hat{\Theta}_{D_X}(t)e^{i\rho t}dt-\int_0^{e^{(\frac{\pi}{2}+\epsilon)i}\infty}t^{m_0-1}f(t)e^{i\rho t}dt\right)\\
=&i^{m_0}\left(\sum_{\tau_k>0}e^{-\tau_k\rho}\LL^{\left(\frac{\pi}{2}-\epsilon\right)}\left(\Delta_{i\tau_k}^-\left(t^{m_0-1}\hat{\Theta}_{D_X}(t)\right)\right)(-i\rho)\right)-i^{m_0}\LL^{\left(\frac{\pi}{2}+\epsilon\right)}\left(t^{m_0-1}f(t)\right)(-i\rho),
\end{aligned}
$$
where the last equality is obtained by Lemma \ref{lemma: Laplace and stokes}, if $t^d\hat{\Theta}_{D_X}(t)\in\mathcal{N}\left(\left(\frac{\pi}{2},\frac{\pi}{2}+\delta\right),0\right)\cap\hat{\mathcal{R}}_{\Omega^*}^{simp}$. Thus,
$$
\begin{aligned}
\widetilde{\frac{d^{m_0}}{d\rho^{m_0}}}\widetilde{\log}\HDet_{-\Delta_X+C}(\rho^2)=&i^{m_0}\left(\sum_{\tau_k>0}e^{-\tau_k\rho}\LL^{\left(\frac{\pi}{2}-\epsilon\right)}\left(\Delta_{i\tau_k}^-\left(t^{m_0-1}\hat{\Theta}_{D_X}(t)\right)\right)(-i\rho)\right)\\
&-i^{m_0}\LL^{\left(\frac{\pi}{2}+\epsilon\right)}\left(t^{m_0-1}f(t)\right)(-i\rho).
\end{aligned}
$$
\end{proofof}

\begin{remark}\label{remark on deriv them}
For conciseness, Theorem \ref{theorem about derivative regularization in introduction} is stated under relatively restrictive  conditions. However, several conditions can be relaxed to arrive the same conclusion.
\begin{itemize}
\item 
As amplified in the proof and the condition $(ii)$ of Lemma \ref{lemma: Laplace and stokes}, all arguments are restricted to $U_\delta$. Consequently, we can allow $t^d\hat{\Theta}_{D_X}$ to possess additional discrete singular points in $\mathbb{C} \setminus\left\{re^{i\theta}:r>0,-\frac{\pi}{2}-\delta<\theta<\frac{\pi}{2}+\delta\right\}$. Such additional singularities neither affect the derivation nor contribute to the sum in \eqref{formula:reg1 in intro}, provided that within $U_\delta$, $t^d\hat{\Theta}_{D_X}(t)$ remains holomorphic and has simple singularities at each $i\tau_k\in\Omega^*$. We will be in such a situation in \S\ref{section:example X}.

\item
The condition $t^d\hat{\Theta}_{D_X}(t)\in \mathcal{N}\left(\left(\frac{\pi}{2}, \frac{\pi}{2}+\delta\right), 0\right)$ in $(ii)$ of Theorem \ref{theorem about derivative regularization in introduction} can be replaced by $t^d\hat{\Theta}_{D_X}(t)\in \mathcal{N}\left(\left(\frac{\pi}{2}, \frac{\pi}{2}+\delta\right), \gamma\right)$ for a locally bounded function $\gamma: \left(\frac{\pi}{2}, \frac{\pi}{2}+\delta\right) \to \mathbb{R}$, whereby the resulting domain for formula \eqref{formula:reg1 in intro} (originally $\arg(\rho) \in (-\frac{\pi}{2}+\epsilon, \frac{\pi}{2}-\epsilon)$) is restricted to $\{\rho\in\CC:-\frac{\pi}{2}+\epsilon<\arg(\rho)<\frac{\pi}{2}+\epsilon\}\cap\{\rho\in \CC:\Re(\rho e^{i\epsilon})>\gamma(\frac{\pi}{2}+\epsilon)\}$.
\end{itemize}
\end{remark}
\section{Exponentially Deformed Determinant}{\label{section:The computation of  exp deformed determinant}}

Like in the last section \S\ref{section:computation about derivative regularization}, the exp-deformed determinant $\widetilde{\frac{d}{d\rho}}\widetilde{\log} \HDet^{\sharp_2}_{-\Delta_X+C}(\rho^2)$ can also be computed  by studying singularities of the analytic continuation of $\hat{\Theta}_{D_X}(t-s_0)$. We will prove Theorem \ref{thm about the computation of deform in intro} by applying Lemma \ref{lemma: Laplace and stokes} with $\hat{\varphi}(t)=\hat{\Theta}_{D_X}(t-s_0)$ in this section.

\vskip 3mm

\begin{proofof}{Theorem \ref{thm about the computation of deform in intro}}
Since $\rho^2=\lambda$, we have $\frac{d}{2\rho d\rho}=\frac{d}{d\lambda}$. By Definition \ref{def:exp reg},
$$
\begin{aligned}
\widetilde{\frac{d}{d\rho}}\widetilde{\log} \HDet^{\sharp_2}_{-\Delta_X+C}(\rho^2)
&=\sum_{n\geq1}\frac{2\rho e^{s_0\rho_n}}{\rho^2+\rho_n^2}\\
&=\sum_{n\geq1}(-i)\left(\frac{e^{s_0\rho_n}}{\rho_n-i\rho}-\frac{e^{s_0\rho_n}}{\rho_n+i\rho}\right)\\
&=(-i)\widetilde{\frac{d}{d(-i\rho)}}\widetilde{\log} \HDet_{D_X}^{\sharp_1}(-i\rho)+i\widetilde{\frac{d}{d(i\rho)}}\widetilde{\log} \HDet^{\sharp_1}_{D_X}(i\rho).
\end{aligned}
$$
By Corollary \ref{relation:laplace relation four} $(iv)$, for $-\frac{\pi}{2}+\epsilon<\arg(\rho)<\epsilon$, the right hand side of above formula becomes
$$
\begin{aligned}
&i\left(\int_0^{e^{(-\epsilon)i}\infty}\hat{\Theta}_{D_X}(t-s_0)e^{-i\rho t}dt -\int_0^{e^{(\frac{\pi}{2}-\epsilon)i}\infty}\hat{\Theta}_{D_X}(t-s_0)e^{i\rho t}dt\right)\\
=&i\left(-\int_0^{e^{(\pi-\epsilon)i}\infty}\hat{\Theta}_{D_X}(e^{-i\pi}t-s_0)e^{i\rho t}dt-\int_0^{e^{(\frac{\pi}{2}-\epsilon)i}\infty}\hat{\Theta}_{D_X}(t-s_0)e^{i\rho t}dt\right).
\end{aligned}
$$
\begin{figure}[htbp]
    \centering

\tikzset{every picture/.style={line width=0.75pt}} 

\begin{tikzpicture}[x=0.75pt,y=0.75pt,yscale=-1,xscale=1]

\draw  (249,113.5) -- (407.67,113.5)(351.07,23.5) -- (351.07,212.5) (400.67,108.5) -- (407.67,113.5) -- (400.67,118.5) (346.07,30.5) -- (351.07,23.5) -- (356.07,30.5)  ;
\draw    (351.11,31.93) -- (350.83,54.42) ;
\draw [color={rgb, 255:red, 255; green, 255; blue, 255 }  ,draw opacity=1 ][line width=1.5]    (350.92,191.39) -- (351.07,213.73) ;
\draw   (325.91,131.11) -- (333.85,139.24)(334.17,130.99) -- (325.59,139.37) ;
\draw   (325.91,151.11) -- (333.85,159.24)(334.17,150.99) -- (325.59,159.37) ;
\draw   (325.91,167.11) -- (333.85,175.24)(334.17,166.99) -- (325.59,175.37) ;
\draw   (325.91,186.11) -- (333.85,194.24)(334.17,185.99) -- (325.59,194.37) ;
\draw    (351.07,113.5) -- (263.02,70.65) ;
\draw [shift={(261.22,69.78)}, rotate = 25.95] [color={rgb, 255:red, 0; green, 0; blue, 0 }  ][line width=0.75]    (10.93,-3.29) .. controls (6.95,-1.4) and (3.31,-0.3) .. (0,0) .. controls (3.31,0.3) and (6.95,1.4) .. (10.93,3.29)   ;
\draw  [color={rgb, 255:red, 0; green, 0; blue, 0 }  ,draw opacity=0 ][fill={rgb, 255:red, 155; green, 155; blue, 155 }  ,fill opacity=0.17 ] (248.37,103.46) .. controls (253.17,60.51) and (287.78,27.16) .. (329.83,27.08) .. controls (375.23,26.99) and (412.1,65.72) .. (412.19,113.57) .. controls (412.29,161.43) and (375.56,200.29) .. (330.17,200.38) .. controls (287.52,200.46) and (252.39,166.29) .. (248.22,122.47) -- (330,113.73) -- cycle ;
\draw [color={rgb, 255:red, 255; green, 255; blue, 255 }  ,draw opacity=1 ][line width=1.5]    (330,26.17) -- (330.18,83.61) -- (330.33,91.94) ;
\draw   (325.91,34.95) -- (333.85,43.07)(334.17,34.82) -- (325.59,43.2) ;
\draw   (325.91,54.95) -- (333.85,63.07)(334.17,54.82) -- (325.59,63.2) ;
\draw   (325.91,70.95) -- (333.85,79.07)(334.17,70.82) -- (325.59,79.2) ;
\draw   (325.91,89.95) -- (333.85,98.07)(334.17,89.82) -- (325.59,98.2) ;
\draw [color={rgb, 255:red, 155; green, 155; blue, 155 }  ,draw opacity=0.3 ][fill={rgb, 255:red, 80; green, 227; blue, 194 }  ,fill opacity=1 ] [dash pattern={on 4.5pt off 4.5pt}]  (331.67,113.1) -- (248.37,103.46) ;
\draw [color={rgb, 255:red, 155; green, 155; blue, 155 }  ,draw opacity=1 ][fill={rgb, 255:red, 80; green, 227; blue, 194 }  ,fill opacity=1 ] [dash pattern={on 4.5pt off 4.5pt}]  (350.67,113.1) -- (241.67,97.17) ;
\draw [color={rgb, 255:red, 155; green, 155; blue, 155 }  ,draw opacity=0.3 ][fill={rgb, 255:red, 80; green, 227; blue, 194 }  ,fill opacity=1 ] [dash pattern={on 4.5pt off 4.5pt}]  (330,113.73) -- (248.22,122.47) ;
\draw   (326.91,109.95) -- (334.85,118.07)(335.17,109.82) -- (326.59,118.2) ;
\draw [color={rgb, 255:red, 255; green, 255; blue, 255 }  ,draw opacity=1 ][line width=1.5]    (330.33,136.61) -- (330.33,153.28) ;
\draw [color={rgb, 255:red, 255; green, 255; blue, 255 }  ,draw opacity=1 ][line width=1.5]    (329.92,156.44) -- (330,169.28) ;
\draw [color={rgb, 255:red, 255; green, 255; blue, 255 }  ,draw opacity=1 ][line width=1.5]    (329.92,172.78) -- (330,188.17) ;
\draw [color={rgb, 255:red, 255; green, 255; blue, 255 }  ,draw opacity=1 ][line width=1.5]    (329.92,191.33) -- (329.67,200.94) ;

\draw (353,116.9) node [anchor=north west][inner sep=0.75pt]  [font=\scriptsize]  {$0$};
\draw (416.38,23.73) node [anchor=north west][inner sep=0.75pt]  [font=\scriptsize,xslant=0.02]  {$\hat{\Theta }_{D_{X}}( t-s_{0})$};
\draw (391.53,119.48) node [anchor=north west][inner sep=0.75pt]  [font=\scriptsize]  {$\Re ( t)$};
\draw (349.93,8.15) node [anchor=north west][inner sep=0.75pt]  [font=\scriptsize]  {$\Im ( t)$};
\draw (260.87,21.27) node [anchor=north west][inner sep=0.75pt]  [font=\scriptsize,color={rgb, 255:red, 155; green, 155; blue, 155 }  ,opacity=1 ]  {$U_{\delta }$};
\draw (283.83,57.17) node [anchor=north west][inner sep=0.75pt]  [font=\scriptsize]  {$s_{0} +i\tau _{k}$};
\draw (205.89,86.91) node [anchor=north west][inner sep=0.75pt]  [font=\scriptsize,color={rgb, 255:red, 155; green, 155; blue, 155 }  ,opacity=1 ]  {$\frac{\pi }{2} +\delta $};
\draw (327,106.55) node [anchor=north west][inner sep=0.75pt]  [color={rgb, 255:red, 245; green, 166; blue, 35 }  ,opacity=1 ]  {$\cdot $};
\draw (327,99.55) node [anchor=north west][inner sep=0.75pt]  [color={rgb, 255:red, 245; green, 166; blue, 35 }  ,opacity=1 ]  {$\cdot $};
\draw (323.04,116.69) node [anchor=north west][inner sep=0.75pt]    {$s_{0}$};
\draw (316.2,104.28) node [anchor=north west][inner sep=0.75pt]  [font=\scriptsize,color={rgb, 255:red, 245; green, 166; blue, 35 }  ,opacity=1 ]  {$a_{0}$};
\draw (332.4,97.27) node [anchor=north west][inner sep=0.75pt]  [font=\scriptsize,color={rgb, 255:red, 245; green, 166; blue, 35 }  ,opacity=1 ]  {$a$};
\draw (228,60.51) node [anchor=north west][inner sep=0.75pt]  [font=\scriptsize]  {$\pi -\epsilon $};

\end{tikzpicture}
\caption{The holomorphic region $U_\delta$ (gray shaded) and the singularity distribution of $\hat{\Theta}_{D_X}(t-s_0)$. The yellow dots $a$ and $a_0$ denote the intersection points of the line $\Re(t)=s_0$ with the rays $\{\arg t=\pi-\epsilon\}$ and $\{\arg t=\frac{\pi}{2}+\delta\}$, respectively.}
\label{fig: region U in deform det computation}
\end{figure}
As denoted in Figure \ref{fig: region U in deform det computation}, we find $a=s_0-is_0tan\epsilon$ and $a_0=s_0-is_0\cot\delta$. For $\arg(s_0+i\tau_1)<\pi-\epsilon<\frac{\pi}{2}+\delta$, we have $[a, e^{i(\pi-\epsilon)}\infty)\subseteq \{a_0+re^{i\theta}:r>0,\frac{\pi}{2}<\theta<\frac{\pi}{2}+\delta\}$. Thus, from $\hat{\Theta}_{D_X}(t-is_0\cot\delta)=\hat{\Theta}_{D_X}(t-s_0+a_0)\in \mathcal{N}\left(\left(\frac{\pi}{2}, \frac{\pi}{2}+\delta\right), 0\right)$, we have $\hat{\Theta}_{D_X}(t-s_0)$ and $f_{s_0}(t)$ belong to $\mathcal{N}(\pi-\epsilon,0)$. Then, we can take the asymmetry part $f_{s_0}(t)=\hat{\Theta}_{D_X}(e^{-i\pi}t-s_0)+\hat{\Theta}_{D_X}(t-s_0)$ into the first Laplace transform, and  $\widetilde{\frac{d}{d\rho}}\widetilde{\log} \HDet^{\sharp_2}_{-\Delta_X+C}(\rho^2)$ is equal to 
$$
i\left(-\int_0^{e^{(\pi-\epsilon)i}\infty}f_{s_0}(t)e^{i\rho t}dt+\int_0^{e^{(\pi-\epsilon)i}\infty}\hat{\Theta}_{D_X}(t-s_0)e^{i\rho t}dt-\int_0^{e^{(\frac{\pi}{2}-\epsilon)i}\infty}\hat{\Theta}_{D_X}(t-s_0)e^{i\rho t}dt\right).
$$
Under the condition $\hat{\Theta}_{D_X}(t-s_0)\in\hat{\mathcal{R}}_{\Omega_{s_0}^*}^{\text{simp}}$ and the Proposition \ref{prop:Theta series in N}, $\hat{\Theta}_{D_X}(t-s_0)\in\hat{\mathcal{R}}_{\Omega_{s_0}}^{\text{int}}\cap\mathcal{N}\left(\left(-\frac{\pi}{2}, \frac{\pi}{2}\right), 0\right)$, and we get 
$\hat{\Theta}_{D_X}(t-s_0+a)=\hat{\Theta}_{D_X}(t-is_0\tan\epsilon)\in\mathcal{N}\left(\left(-\frac{\pi}{2}, \frac{\pi}{2}\right), 0\right)$.
On the other hand, $\hat{\Theta}_{D_X}(t-s_0+a)\in \mathcal{N}\left(\left(\frac{\pi}{2}, \frac{\pi}{2}+\delta\right), 0\right)$, which is due to the fact that $\hat{\Theta}_{D_X}(t-s_0+a_0)\in \mathcal{N}\left(\left(\frac{\pi}{2}, \frac{\pi}{2}+\delta\right), 0\right)$.  Hence, we can 
apply the Lemma \ref{lemma: Laplace and stokes} to the last two terms, and obtain that
$$
\begin{aligned}
&\widetilde{\frac{d}{d\rho}}\widetilde{\log} \HDet^{\sharp_2}_{-\Delta_X+C}(\rho^2)\\
=&i\left(\sum_{\tau_k>0}e^{i(s_0+\tau_ki)\rho} \LL^{\left(\frac{\pi}{2}-\epsilon\right)}\left(\Delta_{s_0+i\tau_k}^-\left( \hat{\Theta}_{D_X}(t-s_0)\right)\right)(-i\rho)\right)-i\LL^{\left(\pi-\epsilon\right)}\left(f_{s_0}(t)\right)(-i\rho).
\end{aligned}
$$
\end{proofof}

We add various comments on the flexibility of the theorem's hypotheses.
\begin{remark}\label{remark on deform thm}
 As Remark \ref{remark on deriv them} for Theorem \ref{theorem about derivative regularization in introduction}, several conditions in Theorem \ref{thm about the computation of deform in intro} can also be relaxed.
\begin{itemize}
\item 
As shown in the proof and the condition $(ii)$ of Lemma \ref{lemma: Laplace and stokes}, all arguments are restricted to $U_\delta$. Consequently, we can allow $\hat{\Theta}_{D_X}(t-s_0)$ to possess additional discrete singular points in $\mathbb{C} \setminus\left\{s_0+re^{i\theta}:r>0,-\frac{\pi}{2}-\delta<\theta<\frac{\pi}{2}+\delta\right\}$. Such additional singularities neither affect the derivation nor contribute to the sum in \eqref{eq:compute det1 formula}, provided that within $U_\delta$, $\hat{\Theta}_{D_X}(t-s_0)$ remains holomorphic and has simple singularities at each $s_0+i\tau_k\in\Omega_{s_0}^+$.
\item
Condition $(ii)$ in Theorem \ref{thm about the computation of deform in intro} can be extended from $\hat{\Theta}_{D_X}(t-is_0\cot\delta)\in \mathcal{N}\left(\left(\frac{\pi}{2}, \frac{\pi}{2}+\delta\right), 0\right)$ to $\hat{\Theta}_{D_X}(t-is_0\cot\delta)\in \mathcal{N}\left(\left(\frac{\pi}{2}, \frac{\pi}{2}+\delta\right), \gamma\right)$ for a locally bounded function $\gamma: \left(\frac{\pi}{2}, \frac{\pi}{2}+\delta\right) \to \mathbb{R}$, provided that the domain of validity of formula \eqref{eq:compute det1 formula} is adjusted accordingly.
\end{itemize}
\end{remark}

\section{Example: Unit Circle and Poisson Summation Formula
{\label{section:example S1}}}
\subsection{Formal Logarithmic Derivative Regularization 
: The Poisson Summation Formula}
In this subsection, we apply Theorem \ref{theorem about derivative regularization in introduction} to compute the formal logarithmic derivative regularization of determinant in $S^1$, and it implies the Poisson summation formula on $S^1$.

Let $X=S^1$ be the unit circle, the non-zero spectrum of $-\Delta_{S^1}$ is $\{n^2\}_{n\in\ZZ\setminus\{0\}}$. Let $\rho_n=\sqrt{n^2}=n$, $n\geq1$, $m_0=\dim X+1=2$, then
$$\hat{\Theta}_{D_{S^1}}(t)=2\sum_{n\geq1}e^{-nt}=\frac{2}{e^t-1},\quad\Re(t)>0.$$
It can be holomorphically extended to $\CC\setminus2\pi i\ZZ$, with $t\hat{\Theta}_{D_{S^1}}(t)\in\Omega_{2\pi i\ZZ^*}^{simp}\cap\mathcal{N}\left(\left(-\frac{\pi}{2},\frac{\pi}{2}\right),0\right)\cap\mathcal{N}\left(\left(\frac{\pi}{2},\pi\right),0\right)$, and the sum satisfies
$$\hat{\Theta}_{D_{S^1}}(-t)+\hat{\Theta}_{D_{S^1}}(t)=-2.$$
In this case, for $2\pi im\in2\pi i\ZZ_{>0}$,
$$\Delta_{2\pi im}^-(t\hat{\Theta}_{D_{S^1}}(t))=-\mathrm{sing}_0\left(\left(\mathrm{cont}_{\gamma_{\omega_m}^-}t\hat{\Theta}_{D_{S^1}}(t)\right)(2\pi im+t)\right)=-2\pi i(4\pi im)\boldsymbol{\delta}_{2\pi im}.$$
Hence, for $0<\epsilon<\frac{\pi}{2}$ and $-\frac{\pi}{2}+\epsilon<\arg(\rho)<\frac{\pi}{2}-\epsilon$, we have
$$
\begin{aligned}
\widetilde{\frac{d^{2}}{d\rho^{2}}}\widetilde{\log}\HDet_{-\Delta_{S^1}}(\rho^2)= &-i^2\left(\sum_{m\geq1}e^{-2\pi m\rho}\LL^{\left(\frac{\pi}{2}-\epsilon\right)}\left(2\pi i(4\pi im)\boldsymbol{\delta}_{2\pi im}\right)(-i\rho)\right)\\
&-i^2\LL^{\left(\frac{\pi}{2}+\epsilon\right)}\left(-2t\right)(-i\rho)\\
=&-2\left(\sum_{m\geq1}4\pi^2me^{-2\pi m\rho}-\frac{1}{\rho^2}\right)\\
=&\frac{d}{d\rho}2\left(\frac{\pi}{\tanh\pi\rho}-\frac{1}{\rho}-\pi\right).
\end{aligned}
$$
On the other hand, by definition,
$$\widetilde{\frac{d^{2}}{d\rho^{2}}}\widetilde{\log}\HDet_{-\Delta_{S^1}}(\rho^2)=\frac{d}{d\rho}2\sum_{n\geq1}\frac{2\rho}{n^2+\rho^2}.$$
Integrating these two formulas, we find that there exists $C\in\CC$ such that
$$\sum_{n\in\ZZ}\frac{\rho}{n^2+\rho^2}=\frac{\pi}{\tanh\pi\rho}+C, \text{ for }-\frac{\pi}{2}+\epsilon<\arg(\rho)<\frac{\pi}{2}-\epsilon.$$ 
Let $\rho$ tend to $0$ in $\RR^+$, then $\sum_{n\in\ZZ}\frac{\rho}{n^2+\rho^2}=\frac{1}{\rho}+O(\rho)$ because
$$0\leq\frac{1}{2\rho}\left(\sum_{n\in\ZZ}\frac{\rho}{n^2+\rho^2}-\frac{1}{\rho}\right)=\sum_{n\geq1}\frac{1}{n^2+\rho^2}\leq\sum_{n\geq1}\frac{1}{n^2}=\frac{\pi^2}{6}.$$
However, when the real number $\rho\to0^+$,
$$\frac{\pi}{\tanh\pi\rho}+C=C+\frac{\pi}{\pi\rho-\frac{(\pi\rho)^3}{3}+O(\rho^5)}=C+\frac{1}{\rho}\frac{1}{1-\frac{(\pi\rho)^2}{3}+O(\rho^4)}=C+\frac{1}{\rho}+\frac{\pi^2\rho}{3}+O(\rho^3).$$
Comparing these two equalities, we find $C=0$. Performing analytic continuation，we have
\begin{equation}{\label{distribution PSF}}
\sum_{n\in\ZZ}\frac{\rho}{n^2+\rho^2}=\frac{\pi}{\tanh\pi\rho},\quad \rho\in\CC\setminus i\ZZ.
\end{equation}
This is actually the Poisson summation formula.
\begin{theorem}[Poisson Summation Formula, {\cite[Theorem 2.4]{Steinbook}}, {\cite[\S1.4]{cartierVoros1988}}]{\label{theorem:PSF}}
Assume that in the strip region $|\Im(\rho)|<2\eta$ for some $\eta>0$, function $h:\CC\to \CC$ is holomorphic and $|h(\rho)|\in O(|\rho|^{-1-\delta})$ for some $\delta>0$ when $|\rho|\to\infty$. Then,
$$
\sum_{n\in\ZZ}h(n)=2\pi\sum_{m\in\ZZ}\hat{h}(2\pi m),
$$
where $\hat{h}(\tau):=\frac{1}{2\pi}\int_{-\infty}^{\infty}h(\rho)e^{i\tau\rho}d\rho$ is the Fourier transform of $h$.
\end{theorem}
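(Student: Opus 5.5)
Since \eqref{distribution PSF} has just been obtained, the Poisson summation formula follows by integrating $h$ against it. The first step is to rewrite \eqref{distribution PSF} so that each side carries the correct Fourier transform. For $\Re\rho>0$ we have
\[
\frac{\pi}{\tanh\pi\rho}=\pi+2\pi\sum_{m\geq1}e^{-2\pi m\rho}=\pi\sum_{m\in\ZZ}e^{-2\pi|m|\rho}.
\]
On the other side, $\frac{\rho}{n^2+\rho^2}=\frac12\int_{\RR}e^{-\rho|\tau|}e^{in\tau}\,d\tau$. The formula is therefore a regularised (Abel-summed) form of the distributional identity $\sum_n e^{in\tau}=2\pi\sum_m\boldsymbol{\delta}(\tau-2\pi m)$.

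Rather than pass through distributions, I would use the residue theorem, which fits the paper's viewpoint more closely. For $z\in\CC\setminus i\ZZ$ set $\rho=-iz$. Then \eqref{distribution PSF} becomes
\[
\sum_{n\in\ZZ}\frac{1}{z-n}=\pi\cot\pi z,
\]
understood with symmetric summation. Take the rectangle $R_N$ with vertical sides $\Re z=\pm(N+\tfrac12)$ and horizontal sides $\Im z=\pm\eta$, and integrate $h(z)\,\pi\cot\pi z$ around it. The poles of $\pi\cot\pi z$ at the integers each have residue $1$, so
\[
\sum_{|n|\leq N}h(n)=\frac{1}{2\pi i}\oint_{\partial R_N}h(z)\,\pi\cot\pi z\,dz.
\]
The vertical sides contribute $o(1)$ as $N\to\infty$: $\cot$ is bounded there and $h=O(|z|^{-1-\delta})$.

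On the upper side $\Im z=\eta$, expand $\pi\cot\pi z=-i\pi\bigl(1+2\sum_{m\geq1}e^{2\pi imz}\bigr)$. This expansion converges uniformly because $|e^{2\pi imz}|=e^{-2\pi m\eta}$. On the lower side use the corresponding expansion in $e^{-2\pi imz}$. After letting $N\to\infty$, interchange sum and integral by dominated convergence, using $h\in L^1$ on these lines and the geometric decay in $m$. Then shift each line integral back to the real axis. This is permitted by Cauchy's theorem in the strip together with the decay of $h$, since the integrand $h(z)e^{\pm2\pi imz}$ is holomorphic there. The upper side yields $\sum_{m\geq0}$ and the lower side $\sum_{m\leq0}$ of $\int_{\RR}h(x)e^{2\pi imx}\,dx$. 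The two constant terms $\mp i\pi$ combine into a single $m=0$ term, with the sign fixed by the orientation of the contour. The total is $\sum_{m\in\ZZ}\int h(x)e^{2\pi imx}\,dx=2\pi\sum_m\hat h(2\pi m)$.

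The main obstacle is analytic bookkeeping rather than any new idea. One must check the orientation and signs so that both half-sums of $m$ and the $m=0$ term appear with coefficient $1$. One must also justify the interchanges: the limit $N\to\infty$ on the horizontal sides, where the integrals converge absolutely by the decay of $h$, and the contour shift from $\Im z=\pm\eta$ to $\Im z=0$, where the short vertical pieces at $\Re z=\pm R$ vanish as $R\to\infty$ by $|h|=O(|z|^{-1-\delta})$ uniformly in the strip. The hypothesis that $h$ is holomorphic in $|\Im\rho|<2\eta$ gives room to choose the contour at height $\eta$ inside the strip.
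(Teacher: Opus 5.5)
Your proof is correct and is essentially the argument the paper sketches after \eqref{distribution PSF}, following Stein and Cartier--Voros: after a quarter-turn rotation of $\rho$, the paper's contour $\gamma$ along $\Re\rho=\pm\eta$ becomes your lines $\Im z=\pm\eta$, the residues at the integers give $\sum_n h(n)$, and the geometric expansion of $\pi/\tanh\pi\rho$ on each line gives the terms $2\pi\hat h(2\pi m)$. The only cosmetic difference is that the paper takes residues on the spectral side $\sum_n \rho h(-i\rho)/(\rho^2+n^2)$ of \eqref{distribution PSF}, whereas you read them directly off $\pi\cot\pi z$; these are equivalent by that identity.
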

\begin{figure}[htbp]
    \centering 
\tikzset{every picture/.style={line width=0.75pt}} 

\begin{tikzpicture}[x=0.5pt,y=0.5pt,yscale=-1,xscale=1]

\draw [color={rgb, 255:red, 208; green, 2; blue, 27 }  ,draw opacity=1 ]   (364,266) -- (363.01,26) ;
\draw [shift={(363,24)}, rotate = 89.76] [color={rgb, 255:red, 208; green, 2; blue, 27 }  ,draw opacity=1 ][line width=0.75]    (10.93,-3.29) .. controls (6.95,-1.4) and (3.31,-0.3) .. (0,0) .. controls (3.31,0.3) and (6.95,1.4) .. (10.93,3.29)   ;
\draw [color={rgb, 255:red, 208; green, 2; blue, 27 }  ,draw opacity=1 ]   (298,22) -- (298.99,268) ;
\draw [shift={(299,270)}, rotate = 269.77] [color={rgb, 255:red, 208; green, 2; blue, 27 }  ,draw opacity=1 ][line width=0.75]    (10.93,-3.29) .. controls (6.95,-1.4) and (3.31,-0.3) .. (0,0) .. controls (3.31,0.3) and (6.95,1.4) .. (10.93,3.29)   ;
\draw    (224,150) -- (266,150) -- (430,150) ;
\draw [shift={(432,150)}, rotate = 180] [color={rgb, 255:red, 0; green, 0; blue, 0 }  ][line width=0.75]    (10.93,-3.29) .. controls (6.95,-1.4) and (3.31,-0.3) .. (0,0) .. controls (3.31,0.3) and (6.95,1.4) .. (10.93,3.29)   ;
\draw    (330.83,264.92) -- (330.34,10.92) ;
\draw [shift={(330.33,8.92)}, rotate = 89.89] [color={rgb, 255:red, 0; green, 0; blue, 0 }  ][line width=0.75]    (10.93,-3.29) .. controls (6.95,-1.4) and (3.31,-0.3) .. (0,0) .. controls (3.31,0.3) and (6.95,1.4) .. (10.93,3.29)   ;

\draw (372,152.4) node [anchor=north west][inner sep=0.75pt]    {$\eta $};
\draw (378,81.4) node [anchor=north west][inner sep=0.75pt]    {$\gamma $};
\draw (319,156.4) node [anchor=north west][inner sep=0.75pt]    {$0$};
\draw (324,141.4) node [anchor=north west][inner sep=0.75pt]  [font=\Large]  {$\cdot $};
\draw (321,110.4) node [anchor=north west][inner sep=0.75pt]    {$\textcolor[rgb]{0.82,0.01,0.11}{\times }$};
\draw (321,79.4) node [anchor=north west][inner sep=0.75pt]    {$\textcolor[rgb]{0.82,0.01,0.11}{\times }$};
\draw (337,79.4) node [anchor=north west][inner sep=0.75pt]    {$in$};
\draw (321,48.4) node [anchor=north west][inner sep=0.75pt]    {$\textcolor[rgb]{0.82,0.01,0.11}{\times }$};
\draw (321,172.4) node [anchor=north west][inner sep=0.75pt]    {$\textcolor[rgb]{0.82,0.01,0.11}{\times }$};
\draw (321,203.4) node [anchor=north west][inner sep=0.75pt]    {$\textcolor[rgb]{0.82,0.01,0.11}{\times }$};
\draw (433,137.4) node [anchor=north west][inner sep=0.75pt]    {$\Re ( \rho )$};
\draw (340,3.4) node [anchor=north west][inner sep=0.75pt]    {$\Im ( \rho )$};
\draw (292,141.4) node [anchor=north west][inner sep=0.75pt]  [font=\Large]  {$\cdot $};
\draw (357,141.4) node [anchor=north west][inner sep=0.75pt]  [font=\Large]  {$\cdot $};
\draw (262,152.4) node [anchor=north west][inner sep=0.75pt]    {$-\eta $};
\draw (321,231.4) node [anchor=north west][inner sep=0.75pt]    {$\textcolor[rgb]{0.82,0.01,0.11}{\times }$};
\end{tikzpicture}
\caption{The integral contour $\gamma$.}
\label{fig:the integral contour gamma for psf}
\end{figure}
The approach used in \cite[Theorem 2.4]{Steinbook} and \cite[\S1.4]{cartierVoros1988} to derive the Poisson summation formula from \eqref{distribution PSF} is summarized as follows: 
By multiplying both sides of \eqref{distribution PSF} by a test function $h(-i\rho)$ satisfying Theorem \ref{theorem:PSF} and integrating over the contour $\gamma$ shown in Figure \ref{fig:the integral contour gamma for psf}, one has
\begin{equation}{\label{distribution PSF,integral form}}
\frac{1}{2\pi i}\int_{\gamma}\sum_{n\in\ZZ}\frac{\rho h(-i \rho)}{\rho^2+n^2} d\rho=\frac{1}{2\pi i} \int_{\gamma}\frac{\pi h(-i\rho)}{\tanh\pi\rho}d\rho.
\end{equation}
Subsequently, one can apply the Residue theorem to the left-hand side of \eqref{distribution PSF,integral form} and expand $\frac{1}{\tanh(\pi\rho)}$ in terms of $e^{2\pi m\rho}$ on the right-hand side. Then the Poisson summation formula follows. 

We will use this approach to derive the deformed Poisson summation formula from the exponential deformed regularization of determinant on $S^1$ in the next subsection.


\subsection{Exponentially Deformed Regularization 
: Deformed Poisson Summation Formula}
We apply Theorem \ref{thm about the computation of deform in intro} to compute the exp deformed determinant in this subsection, which leads to a deformed Poisson summation formula. Recall that
$$\hat{\Theta}_{D_{S^1}}(t)=2\sum_{n\geq1}e^{-nt}=\frac{2}{e^t-1},\quad\Re(t)>0,$$
so for any fixed $s_0<0$,
$$\hat{\Theta}_{D_{S^1}}(t-s_0)=\frac{2}{e^{t-s_0}-1},\quad\Re(t)>s_0.$$
It implies $\hat{\Theta}_{D_X}(t-s_0)\in \hat{\mathcal{R}}_{s_0+2\pi i\ZZ}^{\text{simp}}$.  Obviously, for any $\arg(s_0+2\pi i)<\frac{\pi}{2}+\delta<\pi$, $\hat{\Theta}_{D_X}(t-is_0\cot\delta)\in\mathcal{N}\left(\left(\frac{\pi}{2},\frac{\pi}{2}+\delta\right),0\right)$, and
$$\hat{\Theta}_{D_{S^1}}(-t-s_0)+\hat{\Theta}_{D_{S^1}}(t-s_0)=-2+\frac{2\sinh s_0}{\cosh t-\cosh s_0}=:f_{s_0}(t).$$
So for any $\arg(s_0+2\pi i)<\pi-\epsilon<\pi$ and $\arg(\rho)\in\left(-\frac{\pi}{2}+\epsilon,\epsilon\right)$,
$$
\begin{aligned}
\widetilde{\frac{d}{d\rho}}\widetilde{\log} \HDet^{\sharp_2}_{-\Delta_{S^1}}(\rho^2)=&i\left(\sum_{m\geq1}e^{i(s_0+2\pi mi)\rho}\LL^{\left(\frac{\pi}{2}-\epsilon\right)}\left(\Delta_{s_0+i2\pi m}^-\left(\hat{\Theta}_{D_X}(t-s_0)\right)\right)(-i\rho)\right)\\
&-i\LL^{\left(\pi-\epsilon\right)}\left(f_{s_0}(t)\right)(-i\rho).\\
=&i\left(e^{is_o\rho}(-2\pi i)2\sum_{m\geq1}e^{-2\pi m\rho}\right)-i\int_0^{e^{i\left(\pi-\epsilon\right)\infty}}\left(-2+\frac{2\sinh s_0}{\cosh t-\cosh s_0}\right)e^{i\rho t}dt\\
=&2\pi e^{is_0\rho}\left(\frac{1}{\tanh\pi\rho}-1\right)-\frac{2}{\rho}-i\int_0^{e^{i\left(\pi-\epsilon\right)\infty}}\frac{2\sinh s_0}{\cosh t-\cosh s_0}e^{i\rho t}dt.
\end{aligned}
$$
On the other hand, by definition,
$$\widetilde{\frac{d}{d\rho}}\widetilde{\log} \HDet^{\sharp_2}_{-\Delta_{S^1}}(\rho^2)=\sum_{n\in\ZZ\setminus\{0\}}\frac{2\rho e^{s_0|n|}}{n^2+\rho^2}.$$
Hence, for $\arg(\rho)\in\left(-\frac{\pi}{2}+\epsilon,\epsilon\right)$,
\begin{equation}{\label{deformed distributon PSF}}
\sum_{n=1}^{\infty}\frac{2\rho e^{s_0n}}{n^2+\rho^2}=e^{is_0\rho}\left(\frac{\pi}{\tanh\pi\rho}\right)-\frac{1}{\rho}-i\LL^{(\pi-\epsilon)}\left(\frac{\sinh s_0}{\cosh t-\cosh s_0}\right)(-i\rho)-\pi e^{is_0\rho}.
\end{equation}

The identity \eqref{deformed distributon PSF} is similar to \eqref{distribution PSF}, but we can not compare them directly by letting $s_0\to0$. This is because the term about Laplace transform in \eqref{deformed distributon PSF} is divergent when $s_0\to0$, since $\cosh t-\cosh s_0\sim(t-s_0)\sinh s_0$ as $t\to s_0$. However, each term in the derivative of \eqref{deformed distributon PSF} is convergent if $s_0\to0$, and the derivative of \eqref{deformed distributon PSF} tends to the derivative of \eqref{distribution PSF} as $s_0\to0$. We will prove a general result in \S\ref{section:relation of two reg}, which reveals the relation between two regularization methods for determinant from the viewpoint of singularity.

Similar to Theorem \ref{theorem:PSF}, we obtain a deformed Poisson summation formula based on the identity \eqref{deformed distributon PSF}.

\begin{theorem}[Deformed Poisson Summation Formula]{\label{thm: deformed psf}}
Let $s_0$ be a negative real number, and $h(\rho):\CC\to\CC$ be a function satisfying
\begin{itemize}
\item There exists $\eta<|s_0|$, such that $h(\rho)$ is holomorphic in the strip region $|\Im (\rho)|<2\eta$;

\item There exists $\delta>0$, such that
$|h(\rho)|\in O(|\rho|^{-1-\delta})$ if $\rho\to-\infty+is$, for $s\in(-2\eta,2\eta)$;

\item There exists $\delta^{\prime}<s_0$, such that $|h(\rho)|\in O(e^{\delta^\prime|\rho|})$ if $\rho\to+\infty+is$, for $s\in(-2\eta,2\eta)$.
\end{itemize}
Then,
$$\sum_{n\in\ZZ}h(n)e^{-s_0n}=2\pi\sum_{m\in\ZZ}\hat{h}(2\pi m+is_0).$$
\end{theorem}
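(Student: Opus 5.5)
The plan is to reduce the statement to the undeformed Poisson summation formula (Theorem \ref{theorem:PSF}) applied to an exponentially twisted test function. Afterwards I would explain how the same identity arises from \eqref{deformed distributon PSF} along the lines of \cite[Theorem 2.4]{Steinbook}. Set
$$g(\rho):=h(\rho)e^{-s_0\rho}.$$
The key observation is that the right-hand side is exactly $2\pi\sum_m\hat g(2\pi m)$. Indeed,
$$\hat h(2\pi m+is_0)=\frac{1}{2\pi}\int_{-\infty}^{\infty}h(\rho)e^{i(2\pi m+is_0)\rho}\,d\rho=\frac{1}{2\pi}\int_{-\infty}^{\infty}h(\rho)e^{-s_0\rho}e^{2\pi i m\rho}\,d\rho=\hat g(2\pi m),$$
while the left-hand side is $\sum_{n\in\ZZ}g(n)$. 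So it suffices to check that $g$ satisfies the hypotheses of Theorem \ref{theorem:PSF} in the strip $|\Im(\rho)|<2\eta$.

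\textbf{Step 1 (holomorphy).} $g$ is holomorphic in $|\Im(\rho)|<2\eta$, because $h$ is and $e^{-s_0\rho}$ is entire.

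\textbf{Step 2 (decay at both ends).} For $\rho=x+is$ with $|s|<2\eta$ we have $|e^{-s_0\rho}|=e^{-s_0x}=e^{|s_0|x}$.
\begin{itemize}
\item As $x\to-\infty$ this factor decays exponentially. Combined with $|h|=O(|\rho|^{-1-\delta})$, this gives $|g|=O(|\rho|^{-1-\delta})$, and in fact exponential decay.
\item As $x\to+\infty$ we use $|h|=O(e^{\delta'|\rho|})$ with $\delta'<s_0<0$. Then $|g|=O\bigl(e^{(\delta'-s_0)|\rho|}\bigr)$, and $\delta'-s_0<0$, so $g$ decays exponentially there too.
\end{itemize}
Hence $|g(\rho)|=O(|\rho|^{-1-\delta})$ as $|\rho|\to\infty$ in the strip. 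This also shows that the integrals defining $\hat h(2\pi m+is_0)$ converge absolutely, so the Fourier identity above is legitimate.

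\textbf{Step 3 (conclusion).} Applying Theorem \ref{theorem:PSF} to $g$ then gives the claim.

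\textbf{Main obstacle.} I expect the only delicate point to be uniformity. The $O$-bounds on $h$ are stated along horizontal lines $\rho\to\pm\infty+is$, whereas Theorem \ref{theorem:PSF} is used in its contour form, which needs the bounds uniformly in the closed strip $|\Im\rho|\le\eta$. Otherwise the horizontal segments of the contour $\gamma$ in Figure \ref{fig:the integral contour gamma for psf} need not vanish. I would first interpret the hypotheses as locally uniform in $s$, shrinking $\eta$ if necessary, and then check that the vertical ends of the truncated rectangle contribute $o(1)$. The condition $\eta<|s_0|$ is what keeps $is_0$-shifted quantities away from the poles in this argument.

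\textbf{Alternative route via \eqref{deformed distributon PSF}.} To present the result in the spirit of this section, I would multiply \eqref{deformed distributon PSF} by a suitable test function and integrate over $\gamma$. The residues at $\rho=\pm in$ of the left-hand side produce $h(n)e^{s_0n}$, up to reindexing $n\mapsto-n$. On the right-hand side I would expand $\pi/\tanh(\pi\rho)$ in powers of $e^{-2\pi m\rho}$; the prefactor $e^{is_0\rho}$ then shifts each Fourier frequency to $2\pi m+is_0$. The remaining Laplace-transform term and the $-1/\rho$ and $-\pi e^{is_0\rho}$ terms should cancel against the $n\le0$ part. I expect this bookkeeping to be messier than the direct reduction, so I would use it only as a consistency check.
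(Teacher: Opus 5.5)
Your reduction is correct and takes a different, much shorter route than the paper. With $g(\rho)=h(\rho)e^{-s_0\rho}$ you have $\sum_n h(n)e^{-s_0n}=\sum_n g(n)$ and $\hat h(2\pi m+is_0)=\hat g(2\pi m)$. Your check of the hypotheses of Theorem~\ref{theorem:PSF} is right: $|e^{-s_0\rho}|=e^{-s_0\Re\rho}$ only improves decay as $\Re\rho\to-\infty$, and $\delta'<s_0$ gives $|g|\le Ce^{(\delta'-s_0)|\rho|}$ as $\Re\rho\to+\infty$.

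The paper never reduces to the undeformed formula. It starts from the determinant identity \eqref{deformed distributon PSF} and continues the Laplace-transform term meromorphically to $K_{s_0}$ (Lemma~\ref{lemma: hol-ext of K}). It then integrates against $h(-i\rho)$ over the rectangles $\gamma_N$ and lets residues do the bookkeeping:
\begin{itemize}
\item the left side yields $h(n)e^{s_0|n|}$;
\item the poles of $K_{s_0}$ at $in$, $n\ge1$, convert $e^{s_0n}$ into $e^{-s_0n}$;
\item the $-1/\rho$ term is just the $n=0$ term;
\item $-\pi e^{is_0\rho}h(-i\rho)$ integrates to zero;
\item expanding $\pi/\tanh\pi\rho$ on the vertical sides, the prefactor $e^{is_0\rho}$ produces the shifted frequencies $2\pi m+is_0$.
\end{itemize}
So in your sketched alternative, the Laplace term corrects the $n>0$ terms, not the $n\le0$ ones.

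What the paper's route buys is interpretation. The right-hand side is seen to come from the alien operators of $\hat{\Theta}_{D_{S^1}}(t-s_0)$ at $s_0+2\pi i m$, and the finer identity \eqref{deformed distributon PSF} is obtained along the way. The paper itself remarks after the theorem that the formula also follows from standard Fourier properties and the Poisson summation formula, which is exactly your argument.

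Your route shows that the hypothesis $\eta<|s_0|$ is not needed. Your sentence attributing a role to it is unfounded, since no poles occur in your argument. The uniformity-in-the-strip issue you raise affects the paper's proof equally: it also bounds the horizontal sides of $\gamma_N$ using estimates uniform across the strip.
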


We sketch the idea of proof here, and leave the complete argument to Appendix A. First, we extend \eqref{deformed distributon PSF} meromorphically to the complex plane. In particular, the meromorphic extension of the Laplace transform term in \eqref{deformed distributon PSF}, denoted by $K_{s_0}$, is non-trivial since it depends on the singularity of the integrand $\frac{\sinh s_0}{\cosh t - \cosh s_0}$ (see Lemma \ref{lemma: hol-ext of K}). Next, following the approach in (\cite[Theorem 2.4]{Steinbook} and \cite[\S1.4]{cartierVoros1988}), we multiply \eqref{deformed distributon PSF} by a test function $h(-i\rho)$ and take the contour integration. We obtain 
$$\int_\gamma\sum_{n\in\mathbb{Z}}^{\infty}\frac{\rho e^{s_0 |n|}}{n^2+\rho^2}h(-i\rho)d\rho=\int_\gamma e^{is_0\rho}\left(\frac{\pi}{\tanh\pi\rho}\right)h(-i\rho)d\rho-\int_\gamma iK_{s_0}(\rho)h(-i\rho)d\rho-\int_\gamma \pi e^{is_0\rho}h(-i\rho)d\rho.$$
Applying the residue theorem to the left-hand side and the term about $K_{s_0}$, we get $\sum_{n \in \mathbb{Z}} h(n)e^{s_0|n|}$ and $\sum_{n=1}^{\infty}(e^{-s_0n}-e^{s_0n})h(n)$, respectively. Meanwhile, by expanding $\frac{1}{\tanh \pi\rho}$ into the sum of Fourier basis in the first term on the right-hand side, we arrive at the deformed Poisson summation formula
$$\sum_{n\in\ZZ}h(n)e^{-s_0n}=2\pi\sum_{m\in\ZZ}\hat{h}(2\pi m+is_0).$$

\begin{remark}
\begin{itemize}
\item[(i)]
Although the deformed Poisson summation formula can be derived via standard Fourier properties and Poisson summation formula, the preceding argument reveals a deeper insight: the left-hand side arises from the contour integration of a deformed regularized determinant, whereas the right-hand side is induced by the sum of alien operators acting on a shifted theta series on $S^1$. These operators precisely encode the singularity distribution of $\hat{\Theta}_{D_{S^1}}(t-s_0)$. In other words, the right-hand side originates from the Stokes structure related to the singularities of $\hat{\Theta}_{D_{S^1}}(t-s_0)$. 

The same analysis can be applied to explain the Selberg trace formula (see \S\ref{section:example X}) and others derived from our regularization.

\item[(ii)] Identities \eqref{distribution PSF} and \eqref{deformed distributon PSF} derived from regularized formulas \eqref{formula:reg1 in intro} and \eqref{eq:compute det1 formula} are more refined than the Poisson summation formula and its deformed version. 

Same remarks apply to the Selberg trace formula in the next section. 

\end{itemize}
\end{remark}

\section{Example: 
Higher Genus Riemann Surface and Selberg Trace Formula}{\label{section:example X}}
Let $X$ be a closed Riemann surface with genus $g\geq 2$. Suppose $\{\lambda_n\}$ is the spectrum of $-\Delta_X-\frac{1}{4}$, and set $\rho_n=\sqrt{\lambda_n}$, then $\hat{\Theta}_{D_X}(t)=\sum_{n\geq1}e^{-\rho_nt}$, for $\Re(t)>0$. 

With the help of Selberg's Trace Formula, Cartier and Voros (\cite[Theorem 2]{cartierVoros1988}) verified that:

(1) If $\mathcal{P}=\{\tau_m,m\in\ZZ_{>0}\}$ is the set of lengths of all primitive periodic geodesics in $X$, then $\hat{\Theta}_{D_X}(t)$ can be holomorphically extended to $\CC\setminus(\{k\tau_mi:k\in\ZZ\setminus\{0\},\tau_m\in\mathcal{P}\}\cup\{-2\pi m,m\in\ZZ_{\geq0}\})$, and in this region,
\begin{equation}{\label{formula:extension of theta in STF}}
\hat{\Theta}_{D_{X}}(-t)+\hat{\Theta}_{D_{X}}(t)=(2-2g)\hat{\Theta}_{D_{S^2}}(it),
\end{equation}
where $\hat{\Theta}_{D_{S^2}}$ is the Theta series defined by the square roots of the spectrum of $-\Delta_{S^2}+\frac{1}{4}$ on the Riemann sphere, i.e. 
$$\hat{\Theta}_{D_{S^2}}(t)=\sum_{j=0}^{\infty}(2j+1)e^{-t(j+\frac{1}{2})}=\frac{\cosh t/2}{2\sinh^2t/2},\quad\Re(t)>0.
$$

(2) All singular points in $\{k\tau_mi:k\in\ZZ\setminus\{0\},\tau_m\in\mathcal{P}\}$ are simple poles, and the residue of $\hat{\Theta}_{D_X}(t)$ at each $k\tau_mi$ is $\frac{\tau_m}{4\pi\sinh(k\frac{\tau_m}{2})}$, so the alien derivative $\Delta_{k\tau_mi}^-(t^2\hat{\Theta}_{D_X}(t))=\frac{ik^2\tau_m^3}{2\sinh(\frac{k\tau_m}{2})}\boldsymbol{\delta}_{ik\tau_m}$. Other singular points are $\{-2\pi m:m\in\ZZ_{\geq0}\}$, where $\hat{\Theta}_{D_X}(t)$ has second-order poles.

Let $m_0= \dim X+1=3$. From (\ref{formula:extension of theta in STF}), one can verify that the analytic continuation $t^2\hat{\Theta}_{D_X}(t)$ belongs to $\mathcal{N}\left((\frac{\pi}{2}, \pi),0\right)$. Additionally, this extension has a simple singularity at each $ik\tau_m$, where $k \in \mathbb{Z}_{>0}$ and $\tau_m \in \mathcal{P}$. According to Theorem \ref{theorem about derivative regularization in introduction} and Remark \ref{remark on deriv them}, for $\epsilon\in\left(0,\frac{\pi}{2}\right)$, $\arg(\rho)\in\left(-\frac{\pi}{2}+\epsilon,\frac{\pi}{2}-\epsilon\right)$,
\begin{equation}\label{computation of det in X and Selberg zeta}
\begin{aligned}
\widetilde{\frac{d^3}{d\rho^3}}\widetilde{\log}\HDet_{-\Delta_X+\frac{1}{4}}(\rho^2)
=&i^3\left(\sum_{k\tau_m>0}e^{-k\tau_m\rho}\LL^{\left(\frac{\pi}{2}-\epsilon\right)}\left(\frac{ik^2\tau_m^3\boldsymbol{\delta}_{ik\tau_m}}{2\sinh\left(\frac{k\tau_m}{2}\right)}\right)(-i\rho)\right)\\
&-(2-2g)i^3\LL^{\left(\frac{\pi}{2}+\epsilon\right)}\left(t^2\hat{\Theta}_{D_{S^2}}(it)\right)(-i\rho)\\
=&\sum_{k=1}^{\infty}\sum_{\tau_m\in\mathcal{P}}e^{-k\tau_m\rho}\frac{k^2\tau_m^3}{2\sinh\left(\frac{k\tau_m}{2}\right)}+(2-2g)\LL^{\epsilon}\left(t^2\hat{\Theta}_{D_{S^2}}(t)\right)(\rho).
\end{aligned}
\end{equation}
The first term is the third derivative of $\log \mathcal{Z}_X(\frac{1}{2}+\rho)$ for Selberg zeta function
\footnote{The relation between Selberg zeta function and the first term of formula \eqref{computation of det in X and Selberg zeta} is obtained as follows: we expand each term $\log(1-e^{-\tau_m(\rho+\frac{1}{2}+n)})$ in $\log \mathcal{Z}_X(\frac{1}{2}+\rho)$ as the series $\sum_{k=1}^{\infty} -\frac{1}{k} e^{-k\tau_m(\rho+\frac{1}{2}+n)}$. By interchanging the summations over $k$ and $n$ and taking the third derivative in the holomorphic region of $\mathcal{Z}_X(\frac{1}{2}+\rho)$, the first term of formula \eqref{computation of det in X and Selberg zeta} emerges.
$$
\left(\frac{d}{d \rho}\right)^3\left(\log \mathcal{Z}_X\left(\frac{1}{2}+\rho\right)\right)
 =\sum_{k=1}^{\infty} \sum_{\tau_m\in\mathcal{P}} k^2 \tau_m^3 \frac{e^{-k \tau_m \rho}}{e^{\frac{k \tau_m}{2}}-e^{-\frac{k \tau_m}{2}}}
=\sum_{k=1}^{\infty} \sum_{\tau_m\in\mathcal{P}} e^{-k \tau_m \rho} \frac{k^2 \tau_m^3}{2 \sinh \left(\frac{k \tau_m}{2}\right)},\text{ for }\Re(\rho)>\frac{1}{2}.
$$
}. Recall that the Selberg zeta function $\mathcal{Z}_X$ (\cite[(2.22)]{cartierVoros1988}) is defined by the Euler product over the lengths of primitive periodic geodesics:
$$\mathcal{Z}_X(s):=\prod_{\tau_m\in\mathcal{P}}\prod_{n=0}^{\infty}\left(1-e^{-\tau_m(s+n)}\right).$$
It is convergent and holomorphic in $\{\Re (s)>1:s\in\CC\}$.

For the second term of \eqref{computation of det in X and Selberg zeta}, note that $t^2\hat{\Theta}_{D_{S^2}} \in \mathcal{N}\left((-\frac{\pi}{2}, \frac{\pi}{2}), 0\right)$. Similar to \eqref{laplace and determinant}, we obtain $\LL^{\epsilon}\left(t^2\hat{\Theta}_{D_{S^2}}(t)\right)(\rho)=\widetilde{\frac{d^3}{d \rho^3}} \widetilde{\log} \HDet_{D_{S^2}}\left(\rho\right)$.
Thus,
 \begin{equation}{\label{3-derivative of reg det in Rie-Surface.}}
\begin{aligned}
\widetilde{\frac{d^{3}}{d\rho^{3}}}\widetilde{\log} \HDet_{-\Delta_{X}+\frac{1}{4}}(\rho^2)
=&\frac{d^3}{d\rho^3}\left(\log \mathcal{Z}_X(\frac{1}{2}+\rho)\right)-(2g-2)\widetilde{\frac{d^3}{d\rho^3}}\widetilde{\log} \HDet_{D_{S^2}}(\rho),\text{ when } \Re (\rho)>\frac{1}{2}.
\end{aligned}
\end{equation}
This formula is the first derivative of Eq. (3.55) in Cartier and Voros (\cite{cartierVoros1988}). By multiplying both sides of \eqref{3-derivative of reg det in Rie-Surface.} by a test function and integrating over a suitable contour, they rederived Selberg trace formula:
\begin{corollary}[Selberg trace formula,{\cite[Theorem 1]{cartierVoros1988}}]\label{Selberg trace formula}
Suppose that
\begin{itemize}
\item $h(\rho)$ is a holomorphic function from $V$ to $\mathbb{C}$, where $V$ is an open region such that $\{\rho\in\mathbb{C}:\Re(\rho)\geq 0\}\subset V\subset\mathbb{C}$.
\item There exist $c>0$ and $\alpha>2$, such that when $|\Im(\rho)|<c$, $|\rho|\to \infty$, $h(\rho)\in O(|\rho|^{-\alpha})$.
\item The differential form $h(\rho)d\log \mathcal{Z}_X(\frac{1}{2}+i\rho)$ is integrable when $-\frac{\pi}{2}\leq\arg\rho\leq0$; and $h(\rho)d\log \mathcal{Z}_X(\frac{1}{2}-i\rho)$ is integrable when $0\leq\arg\rho\leq\frac{\pi}{2}$. Then
\end{itemize}
$$\sum_{n=0}^{\infty} h\left(\rho_n\right)=(2 g-2) \int_0^{\infty} h(\rho) \rho \tanh \pi \rho d \rho+\int_0^{\infty} \frac{h(i\rho)+h(-i\rho)}{2\pi i} d \log \mathcal{Z}_X\left(\frac{1}{2}+\rho\right).$$
\end{corollary}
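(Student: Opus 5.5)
The plan is to follow the Cartier--Voros strategy. We start from the identity \eqref{3-derivative of reg det in Rie-Surface.}, rewritten as a statement about the meromorphic function
\[
F(\rho):=\sum_{n\geq1}\frac{2\rho}{\rho^2+\rho_n^2}.
\]
Integrating \eqref{3-derivative of reg det in Rie-Surface.} twice in $\rho$ gives
\[
F(\rho)=\frac{d}{d\rho}\log\mathcal{Z}_X\!\left(\tfrac12+\rho\right)-(2g-2)\,G(\rho)+P(\rho),
\]
where $G$ is the corresponding regularized first log-derivative for $S^2$, namely $\sum_j (2j+1)\bigl(\frac{1}{\rho+j+\frac12}-\frac{1}{j+\frac12}\bigr)$ up to normalization. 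Since $F(\rho)$ is $\rho\sum 2/(\rho^2+\rho_n^2)$ and is only conditionally meaningful, the polynomial $P$ of degree at most $1$ has to be pinned down. We do this by comparing asymptotics as $\rho\to+\infty$, using Weyl's law $N(\lambda)\sim(g-1)\lambda$ together with the Gauss--Bonnet area $4\pi(g-1)$. This is exactly the role of the $(2g-2)$ factor. If this normalization proves awkward, we will instead work directly with the second derivative, which is absolutely convergent, and only integrate against test functions at the end.

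Next we extend both sides meromorphically to $\CC$. By \eqref{formula:extension of theta in STF} and the parity of $\hat\Theta_{D_{S^2}}$, the left side is odd in $\rho$, with simple poles at $\pm i\rho_n$ of residue $1$. On the right, $\frac{d}{d\rho}\log\mathcal{Z}_X(\frac12+\rho)$ is meromorphic, and the $S^2$ term contributes $\psi$-type functions. Their combination is odd, and this reproduces the functional equation, including the term $\pi\rho\tanh\pi\rho$ coming from $\psi(\frac12+\rho)-\psi(\frac12-\rho)=\pi\tan(\pi\rho)$ after the rotation $\rho\mapsto i\rho$.

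We then multiply by $\frac{1}{2\pi i}h(-i\rho)$ and integrate over a contour $\gamma$ consisting of two vertical lines $\Re\rho=\pm\eta$ with $\eta<c$, oriented as in Figure~\ref{fig:the integral contour gamma for psf}. By the residue theorem, the left side gives $\sum_n h(\rho_n)$, with the evenness of $h$ handled by symmetrization to $\frac{h(\rho)+h(-\rho)}{2}$ if needed. On the right, we shift the line $\Re\rho=\eta$ to the imaginary axis side on which $\log\mathcal{Z}_X(\frac12+\rho)$ is holomorphic. The integrability hypotheses on $h\,d\log\mathcal{Z}_X(\frac12\pm i\rho)$ in the quarter-planes justify rotating the contours onto the rays, which yields $\int_0^\infty\frac{h(i\rho)+h(-i\rho)}{2\pi i}\,d\log\mathcal{Z}_X(\frac12+\rho)$. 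The $S^2$ term, after collapsing the two vertical lines onto the imaginary axis, gives $(2g-2)\int_0^\infty h(\rho)\rho\tanh\pi\rho\,d\rho$.

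The main obstacle is the justification of the contour manipulations. Specifically, we must control the decay of the horizontal segments at $\Im\rho\to\pm\infty$, using $h=O(|\rho|^{-\alpha})$ with $\alpha>2$, which is needed because $F$ and the $\tanh$ term grow like $|\rho|$. We must also handle the poles of $\mathcal{Z}_X$ and of the $S^2$ part inside the strip $|\Re\rho|<\frac12$, and show that they cancel exactly against each other or the zero mode. We would first try to reduce everything to the absolutely convergent second-derivative identity, integrate by parts against $h$, and only then rotate the contours.
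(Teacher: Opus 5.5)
The paper gives no argument of its own here. It only points to Cartier--Voros (``multiply \eqref{3-derivative of reg det in Rie-Surface.} by a test function and integrate over a suitable contour'', \cite{cartierVoros1988}). Your plan follows that idea, but the contour you choose cannot produce this statement.

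\textbf{The contour is the wrong one.} The two full vertical lines $\Re\rho=\pm\eta$ of Figure~\ref{fig:the integral contour gamma for psf} need $h(-i\rho)$ on the whole strip, i.e.\ $h(z)$ for $\Re z\to-\infty$. The corollary only assumes $h$ holomorphic near $\{\Re z\ge 0\}$. Even when $h$ does extend, the residues at $\rho=\pm i\rho_n$ give $\sum_n\bigl(h(\rho_n)+h(-\rho_n)\bigr)$, i.e.\ only the even part of $h$. Symmetrizing does not repair this, because the statement is genuinely one-sided: for odd $h$ it is a nontrivial identity, while the symmetric contour returns $0=0$.

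The missing idea is a Hankel-type contour enclosing only the half-line $i\mathbb{R}_{>0}$ that carries the poles $i\rho_n$. In the variable $z=-i\rho$ of $h$, this is a contour around $[0,\infty)$.
\begin{itemize}
\item On one branch you write $F$ through $\frac{d}{d\rho}\log\mathcal{Z}_X(\frac12+\rho)$. On the other, using oddness of $F$, you write it through $\log\mathcal{Z}_X(\frac12-\rho)$.
\item Each branch is rotated into its own quadrant, which is exactly what the two quadrant-integrability hypotheses are for. This produces the $d\log\mathcal{Z}_X(\frac12+\rho)$ term.
\item The digamma ($S^2$) parts are instead collapsed onto the positive real $z$-axis. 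Their jump cancels the jump $\frac{\mathcal{Z}_X'}{\mathcal{Z}_X}(\frac12+\rho)+\frac{\mathcal{Z}_X'}{\mathcal{Z}_X}(\frac12-\rho)=4\pi(g-1)\rho\tan\pi\rho$ of the zeta parts. With $\rho=ix$ this gives $(2g-2)\int_0^\infty h(x)\,x\tanh\pi x\,dx$.
\end{itemize}

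\textbf{A sign issue in the target.} The two rotated branches carry opposite orientations, so the computation yields the combination $h(-i\rho)-h(i\rho)$, as in the Abel--Plana formula (the $S^1$ analogue). It does not yield the printed $h(i\rho)+h(-i\rho)$. With the printed sum, the formula would not reduce to the classical one for even $h$, so the sum looks like a sign slip in the statement. You should not expect your rotation to reproduce it.

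\textbf{Divergence of $F$.} In dimension $2$ neither $F(\rho)=\sum_n\frac{2\rho}{\rho^2+\rho_n^2}$ nor its $\rho$-derivative converges. The terms are of size $\rho_n^{-2}$, and $\sum_n\rho_n^{-2}=\sum_n\lambda_n^{-1}=\infty$ by Weyl's law; this is why the paper takes $m_0=3$.
\begin{itemize}
\item There is no ``absolutely convergent second derivative'' to fall back on.
\item Your $S^2$ series $\sum_j(2j+1)\bigl(\frac{1}{\rho+j+\frac12}-\frac{1}{j+\frac12}\bigr)$ also diverges.
\item You must work with a subtracted odd function such as $F_{\mathrm{reg}}(\rho)=\sum_n\bigl(\frac{2\rho}{\rho^2+\rho_n^2}-\frac{2\rho}{1+\rho_n^2}\bigr)$.
\item In the one-sided argument, an error in the constant term of $P$ would contribute a multiple of $\int_0^\infty h$. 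That constant is pinned by requiring the right-hand side to be odd, i.e.\ by the functional equation of $\mathcal{Z}_X$, not by Weyl asymptotics.
\end{itemize}

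\textbf{The exceptional poles do not cancel.} $F$ has poles on the real $\rho$-axis at $\rho=\pm\sqrt{\frac14-\mu}$ for eigenvalues $\mu<\frac14$ of $-\Delta_X$. In particular the zero mode gives $\rho=\pm\frac12$, which lies outside your strip when $\eta<\frac12$. These poles are exactly where the terms $h(\rho_n)$ with $\rho_n\notin\mathbb{R}$, including $n=0$, come from. They are picked up where the rotated rays meet the zeros of $\mathcal{Z}_X(\frac12+y)$ for $0<y\le\frac12$.
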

\begin{remark}
Cartier and Voros obtained (\ref{3-derivative of reg det in Rie-Surface.}) by plugging the special test function $h_\rho(z)=\left(\rho^2+z^2\right)^{-2}$ into the Selberg trace formula. Different from them, we get the functional equation of Selberg zeta function and determinant (\ref{3-derivative of reg det in Rie-Surface.}) by computing regularized determinant (Theorem \ref{theorem about derivative regularization in introduction}). Here, the term about Selberg zeta function can be interpreted as the Stokes structure of $\hat{\Theta}_{D_X}$-series on the Riemann surface $X$.
\end{remark}


\section{Relation between Two Regularizations
}{\label{section:relation of two reg}}

We take $\varepsilon$ as in Theorem \ref{theorem about derivative regularization in introduction} and Theorem \ref{thm about the computation of deform in intro}, then
\begin{theorem}{\label{relation Reg1 and Reg2}}
For all $ m\geq d$ and $\rho\in\{\rho\in\mathbb{C}:-\frac{\pi}{2}+\varepsilon<\arg(\rho)<\varepsilon\}$,
\begin{equation}
 \lim _{s_0 \rightarrow 0}\left(\frac{d}{d \rho}\right)^m \frac{\widetilde{d}}{d \rho} \widetilde{\log }\HDet_{-\Delta_X+C}^{\sharp2}\left(\rho^2\right)=\frac{\widetilde{d^{m+1}}}{d \rho^{m+1}} \widetilde{\log }\HDet_{-\Delta_X+C}\left(\rho^2\right).
\end{equation}
\end{theorem}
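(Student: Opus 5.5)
The plan is to work on the Laplace side, where the identity is just a limit of Laplace integrals whose integrands converge. The splitting used in the proof of Theorem \ref{thm about the computation of deform in intro} gives, for $-\frac{\pi}{2}+\varepsilon<\arg(\rho)<\varepsilon$,
\begin{equation*}
\widetilde{\frac{d}{d\rho}}\widetilde{\log}\HDet^{\sharp_2}_{-\Delta_X+C}(\rho^2)=i\,\LL^{(-\varepsilon)}\big(\hat{\Theta}_{D_X}(t-s_0)\big)(i\rho)-i\,\LL^{(\frac{\pi}{2}-\varepsilon)}\big(\hat{\Theta}_{D_X}(t-s_0)\big)(-i\rho),
\end{equation*}
using Corollary \ref{relation:laplace relation four} $(iv)$. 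Both directions lie in $(-\frac{\pi}{2},\frac{\pi}{2})$ and the arguments $\pm i\rho$ lie in the corresponding half-planes, so no continuation is needed. I will differentiate $m$ times in $\rho$ under each integral, using Proposition \ref{prop:multiple in borel and derivative in laplace}. This turns the two terms into
\begin{equation*}
i^{m+1}\LL^{(-\varepsilon)}\big((-t)^m\hat{\Theta}_{D_X}(t-s_0)\big)(i\rho)\quad\text{and}\quad -i(-i)^m\LL^{(\frac{\pi}{2}-\varepsilon)}\big((-t)^m\hat{\Theta}_{D_X}(t-s_0)\big)(-i\rho).
\end{equation*}
By the same algebra as \eqref{eqn:decomposition of m-derivative}, the right-hand side of the theorem equals the identical expression with $s_0=0$. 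This uses Corollary \ref{relation:laplace relation four} $(iii)$ with $m+1\geq d+1$, which is exactly why $m\geq d$ is assumed.

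It therefore remains to show that, for a fixed direction $\theta\in(-\frac{\pi}{2},\frac{\pi}{2})$ and fixed $\rho$ with $\Re(\rho e^{i\theta})>0$,
\begin{equation*}
\int_0^{e^{i\theta}\infty}t^m\hat{\Theta}_{D_X}(t-s_0)e^{-\rho t}dt\longrightarrow\int_0^{e^{i\theta}\infty}t^m\hat{\Theta}_{D_X}(t)e^{-\rho t}dt\quad\text{as } s_0\to0^-.
\end{equation*}
Pointwise convergence is clear, since $\hat{\Theta}_{D_X}$ is holomorphic on $\Re(t)>0$ and $t-s_0\to t$. The argument will use dominated convergence, and the main obstacle is finding a majorant that is uniform in $s_0$ near $t=0$.

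For the majorant I will write $t=re^{i\theta}$, so that $\Re(t-s_0)=r\cos\theta+|s_0|\geq r\cos\theta$. Every term satisfies $|e^{-\rho_n(t-s_0)}|\leq e^{-\rho_n r\cos\theta}$, which gives
\begin{equation*}
|\hat{\Theta}_{D_X}(t-s_0)|\leq\hat{\Theta}_{D_X}(r\cos\theta).
\end{equation*}
This bound is independent of $s_0$. By Proposition \ref{proposition:Dirac-theta property} (ii) and (iii), $r^m\hat{\Theta}_{D_X}(r\cos\theta)$ is $O(r^{m-d})$ near $0$, which is integrable because $m\geq d$. At infinity it decays exponentially, and $|e^{-\rho t}|=e^{-r\Re(\rho e^{i\theta})}$ decays as well. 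So $r^m\hat{\Theta}_{D_X}(r\cos\theta)\,e^{-r\Re(\rho e^{i\theta})}$ is an integrable majorant, and the termwise limit holds.

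One also needs that the $\rho$-derivative of the limit equals the limit of the $\rho$-derivatives. This is avoided entirely by applying $(d/d\rho)^m$ before taking $s_0\to0$, as above. Finally I will assemble the two directional terms, and the right-hand side becomes the decomposition \eqref{eqn:decomposition of m-derivative} with $m_0$ replaced by $m+1$, which proves the theorem. If one wants the "singularity translation" viewpoint the authors mention, the same limit can be read off from \eqref{eq:compute det1 formula}. I would use the direct dominated-convergence route as the actual proof.
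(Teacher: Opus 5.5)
Your proposal is correct and follows the paper's own route. You split into the two partial-fraction pieces and use Corollary \ref{relation:laplace relation four} $(iv)$ with Proposition \ref{prop:multiple in borel and derivative in laplace} to write them as Laplace transforms of $(-t)^m\hat{\Theta}_{D_X}(t-s_0)$ at $\pm i\rho$; you then let $s_0\to0$ by dominated convergence and reassemble via \eqref{laplace and determinant} and \eqref{eqn:decomposition of m-derivative}. Your $s_0$-uniform majorant $\hat{\Theta}_{D_X}(r\cos\theta)$, integrable against $r^m$ exactly because $m\geq d$, supplies the detail the paper leaves implicit. Your direction $-\varepsilon$ for the $i\rho$-term is valid on the whole sector for every admissible $\varepsilon$, whereas the paper's $-\frac{\pi}{2}+\varepsilon$ is so only when $\varepsilon\leq\frac{\pi}{4}$; this is harmless anyway by Proposition \ref{def:Laplace transform is welldefined}.
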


\begin{proof}
From the definition of $\frac{\widetilde{d}}{d \rho} \widetilde{\log }\HDet_{-\Delta_X+C}^{\sharp2}\left(\rho^2\right)$ in \S\ref{section:deformed Borel-laplace}, and the fact that $\sum_{n\geq1}\frac{ 2 \rho e^{s_0 \rho_n}}{\rho^2+\rho_n^2}$ uniformly convergent in any compact subsets of $\mathbb{C}-\{-\rho_1,-\rho_2,\cdots\}$,
$$
\begin{aligned}
\left(\frac{d}{d \rho}\right)^m\frac{\widetilde{d}}{d \rho} \widetilde{\log }\HDet_{-\Delta_X+C}^{\sharp2}\left(\rho^2\right)
&=\left(\frac{d}{d \rho}\right)^m \sum_{n\geq1}\frac{ 2 \rho e^{s_0 \rho_n}}{\rho^2+\rho_n^2}\\
&=\sum_{n\geq1}(-i)\left(\frac{d}{d \rho}\right)^m \left(\frac{ e^{s_0 \rho_n}}{\rho_n-i\rho}-\frac{ e^{s_0 \rho_n}}{\rho_n+i\rho}\right)\\
&=(-i)^{m+1}\frac{\widetilde{d^{m+1}}}{d (-i\rho)^{m+1}} \widetilde{\log }\HDet_{D_X}^{\sharp1}\left(-i\rho\right)+i^{m+1}\frac{\widetilde{d^{m+1}}}{d (i\rho)^{m+1}} \widetilde{\log }\HDet_{D_X}^{\sharp1}\left(-i\rho\right),
\end{aligned}
$$
from Corollary \ref{relation:laplace relation four}$(iv)$ and Proposition \ref{prop:multiple in borel and derivative in laplace}, the right hand side is
$$
(-i)^{m+1}\mathcal{L}^{\frac{\pi}{2}-\epsilon}\left((-t)^{m}\hat{\Theta}_{D_X}(t-s_0)\right)(-i\rho)+i^{m+1}\mathcal{L}^{-\frac{\pi}{2}+\epsilon}\left((-t)^{m}\hat{\Theta}_{D_X}(t-s_0)\right)(i\rho).
$$
Let $s_0\to 0$, using the Lebesgue dominant convergence theorem and Proposition \ref{proposition:Dirac-theta property}, we have
$$
\begin{aligned}
\lim_{s_0\to0}\left(\frac{d}{d \rho}\right)^m\frac{\widetilde{d}}{d \rho} \widetilde{\log }\HDet_{-\Delta_X+C}^{\sharp2}\left(\rho^2\right)
&=(-i)^{m+1}\mathcal{L}^{\frac{\pi}{2}-\epsilon}\left((-t)^{m}\hat{\Theta}_{D_X}(t)\right)(-i\rho)+i^{m+1}\mathcal{L}^{-\frac{\pi}{2}+\epsilon}\left((-t)^{m}\hat{\Theta}_{D_X}(t)\right)(i\rho)\\
&=(-i)^{m+1}\frac{\widetilde{d^{m+1}}}{d (-i\rho)^m} \widetilde{\log }\HDet_{D_X}\left(-i\rho\right)+i^{m+1}\frac{\widetilde{d^{m+1}}}{d (i\rho)^m} \widetilde{\log }\HDet_{D_X}\left(i\rho\right)\\
&=\frac{\widetilde{d^{m+1}}}{d \rho^{m+1}} \widetilde{\log }\HDet_{-\Delta_X+C}\left(\rho^2\right).
\end{aligned}
$$
The last two equalities follow from Eqs. (\ref{laplace and determinant}) and (\ref{eqn:decomposition of m-derivative}) respectively.
\end{proof}
\begin{figure}[htbp]
    \centering  
\tikzset{every picture/.style={line width=0.75pt}} 

\begin{tikzpicture}[x=0.75pt,y=0.75pt,yscale=-1,xscale=1]

\draw  (114,100.5) -- (271,100.5)(215,10.5) -- (215,199.5) (264,95.5) -- (271,100.5) -- (264,105.5) (210,17.5) -- (215,10.5) -- (220,17.5)  ;
\draw  [color={rgb, 255:red, 155; green, 155; blue, 155 }  ,draw opacity=0.58 ] (177.35,96.18) -- (185.29,104.31)(185.61,96.05) -- (177.03,104.44) ;
\draw   (177.35,120.18) -- (185.29,128.31)(185.61,120.05) -- (177.03,128.44) ;
\draw   (177.35,140.18) -- (185.29,148.31)(185.61,140.05) -- (177.03,148.44) ;
\draw   (177.35,159.18) -- (185.29,167.31)(185.61,159.05) -- (177.03,167.44) ;
\draw   (177.35,178.18) -- (185.29,186.31)(185.61,178.05) -- (177.03,186.44) ;
\draw [color={rgb, 255:red, 0; green, 0; blue, 0 }  ,draw opacity=1 ]   (215,100.5) -- (234.58,9.95) ;
\draw [shift={(235,8)}, rotate = 102.2] [color={rgb, 255:red, 0; green, 0; blue, 0 }  ,draw opacity=1 ][line width=0.75]    (10.93,-3.29) .. controls (6.95,-1.4) and (3.31,-0.3) .. (0,0) .. controls (3.31,0.3) and (6.95,1.4) .. (10.93,3.29)   ;
\draw    (215,100.5) -- (247.29,185.13) ;
\draw [shift={(248,187)}, rotate = 249.12] [color={rgb, 255:red, 0; green, 0; blue, 0 }  ][line width=0.75]    (10.93,-3.29) .. controls (6.95,-1.4) and (3.31,-0.3) .. (0,0) .. controls (3.31,0.3) and (6.95,1.4) .. (10.93,3.29)   ;
\draw   (175.02,17.04) -- (184.55,25.27)(184.93,16.91) -- (174.64,25.4) ;
\draw   (176.02,78.04) -- (185.55,86.27)(185.93,77.91) -- (175.64,86.4) ;
\draw   (176.02,60.04) -- (185.55,68.27)(185.93,59.91) -- (175.64,68.4) ;
\draw   (176.02,38.04) -- (185.55,46.27)(185.93,37.91) -- (175.64,46.4) ;
\draw  (354,100.5) -- (511,100.5)(455,10.5) -- (455,199.5) (504,95.5) -- (511,100.5) -- (504,105.5) (450,17.5) -- (455,10.5) -- (460,17.5)  ;
\draw   (451.35,118.18) -- (459.29,126.31)(459.61,118.05) -- (451.03,126.44) ;
\draw   (451.35,138.18) -- (459.29,146.31)(459.61,138.05) -- (451.03,146.44) ;
\draw   (451.35,157.18) -- (459.29,165.31)(459.61,157.05) -- (451.03,165.44) ;
\draw   (451.35,176.18) -- (459.29,184.31)(459.61,176.05) -- (451.03,184.44) ;
\draw [color={rgb, 255:red, 0; green, 0; blue, 0 }  ,draw opacity=1 ]   (455,100.5) -- (474.58,9.95) ;
\draw [shift={(475,8)}, rotate = 102.2] [color={rgb, 255:red, 0; green, 0; blue, 0 }  ,draw opacity=1 ][line width=0.75]    (10.93,-3.29) .. controls (6.95,-1.4) and (3.31,-0.3) .. (0,0) .. controls (3.31,0.3) and (6.95,1.4) .. (10.93,3.29)   ;
\draw    (455,100.5) -- (476.49,182.07) ;
\draw [shift={(477,184)}, rotate = 255.24] [color={rgb, 255:red, 0; green, 0; blue, 0 }  ][line width=0.75]    (10.93,-3.29) .. controls (6.95,-1.4) and (3.31,-0.3) .. (0,0) .. controls (3.31,0.3) and (6.95,1.4) .. (10.93,3.29)   ;
\draw   (450.02,17.04) -- (459.55,25.27)(459.93,16.91) -- (449.64,25.4) ;
\draw   (450.02,78.04) -- (459.55,86.27)(459.93,77.91) -- (449.64,86.4) ;
\draw   (450.02,60.04) -- (459.55,68.27)(459.93,59.91) -- (449.64,68.4) ;
\draw   (450.02,38.04) -- (459.55,46.27)(459.93,37.91) -- (449.64,46.4) ;
\draw    (160,116) -- (172.48,105.3) ;
\draw [shift={(174,104)}, rotate = 139.4] [color={rgb, 255:red, 0; green, 0; blue, 0 }  ][line width=0.75]    (10.93,-3.29) .. controls (6.95,-1.4) and (3.31,-0.3) .. (0,0) .. controls (3.31,0.3) and (6.95,1.4) .. (10.93,3.29)   ;
\draw    (284,102) -- (337,102.96) ;
\draw [shift={(339,103)}, rotate = 181.04] [color={rgb, 255:red, 0; green, 0; blue, 0 }  ][line width=0.75]    (10.93,-3.29) .. controls (6.95,-1.4) and (3.31,-0.3) .. (0,0) .. controls (3.31,0.3) and (6.95,1.4) .. (10.93,3.29)   ;

\draw (183,102.4) node [anchor=north west][inner sep=0.75pt]  [font=\scriptsize]  {$s_{0}$};
\draw (217,103.9) node [anchor=north west][inner sep=0.75pt]  [font=\scriptsize]  {$0$};
\draw (239,25.4) node [anchor=north west][inner sep=0.75pt]  [font=\tiny]  {$\frac{\pi }{2} -\epsilon $};
\draw (254,179.4) node [anchor=north west][inner sep=0.75pt]  [font=\tiny]  {$-\frac{\pi }{2} +\epsilon $};
\draw (81.27,12.4) node [anchor=north west][inner sep=0.75pt]  [font=\scriptsize,xslant=0.02]  {$t^{m_{0} -1}\hat{\Theta }_{D_{X}}( t-s_{0})$};
\draw (457,103.9) node [anchor=north west][inner sep=0.75pt]  [font=\scriptsize]  {$0$};
\draw (479,25.4) node [anchor=north west][inner sep=0.75pt]  [font=\tiny]  {$\frac{\pi }{2} -\epsilon $};
\draw (480,182.4) node [anchor=north west][inner sep=0.75pt]  [font=\tiny]  {$-\frac{\pi }{2} +\epsilon $};
\draw (344.27,11.4) node [anchor=north west][inner sep=0.75pt]  [font=\scriptsize,xslant=0.02]  {$t^{m_{0} -1}\hat{\Theta }_{D_{X}}( t)$};
\draw (5,116.4) node [anchor=north west][inner sep=0.75pt]  [font=\scriptsize]  {$ \begin{array}{l}
Remove\ the\ non-integrability\ \\
at\ there\ \ when\ s_{0} =0
\end{array}$};
\draw (291,80.4) node [anchor=north west][inner sep=0.75pt]  [font=\scriptsize]  {$s_{0}\rightarrow 0$};
\draw (251,80.4) node [anchor=north west][inner sep=0.75pt]  [font=\scriptsize]  {$\Re ( t)$};
\draw (496,82.4) node [anchor=north west][inner sep=0.75pt]  [font=\scriptsize]  {$\Re ( t)$};
\draw (190,2.4) node [anchor=north west][inner sep=0.75pt]  [font=\scriptsize]  {$\Im ( t)$};
\draw (419,9.4) node [anchor=north west][inner sep=0.75pt]  [font=\scriptsize]  {$\Im ( t)$};

\end{tikzpicture}
\caption{The relationship between two regularized determinants from the perspective of the singularities of $\hat{\Theta}_{D_X}(t)$.}
\label{fig: relation between two regularized determinants}
\end{figure}
\begin{remark}
In fact, the left hand side of Theorem \ref{relation Reg1 and Reg2} is uniformly convergent for any $s_0<0$, and the theorem can be proved directly by sending the parameter $s_0\to 0$.

However, the reason for employing a more intricate proof strategy, specifically the Laplace transform approach presented here, is to provide the insight into the choice of these two regularizations via the resurgent analysis of singularities.  This perspective also explains why the two methods coincide after differentiation and {sending the deformation parameter to zero}.

Both regularizations can be viewed as sums of Laplace transforms in two directions regarding certain modified Theta series (recall the proofs in Theorem \ref{theorem about derivative regularization in introduction} and \ref{thm about the computation of deform in intro}); they differ, however, in how they establish integrability at the origin. The derivative regularization achieves this goal by multiplying the series by the asymptotic power term at zero (e.g.$(-t)^{d}\hat{\Theta}_{D_X}$). In contrast, the exponential deformation regularization shifts the Theta series itself (e.g. $\hat{\Theta}_{D_X}(t-s_0)$). The shift moves all singularities to the left half-plane, making the Laplace transform well-defined in the right half-plane.

It deserves to point out that if we simply let the deformation parameter $s_0$ approach $0$, the deformation regularization does not converge to the derivative regularization. This is because the non-integrable singularity at $t=s_0$ shifts to the origin ($t=0$), causing both directions of the Laplace transform to be ill-defined. Consequently, we must multiply the integrand by a factor of at least $(-t)^{d}$ to ensure that $(-t)^{d}\hat{\Theta}_{D_X}(t-s_0)$ remains integrable at $t=0$ as $s_0 \to 0$. after the Laplace transform, multiplying the integrand--$\hat{\Theta}_{D_X}(t-s_0)$ by $(-t)^{d}$ corresponds precisely to take the $d$-th derivative of the exponential deformation regularization.

\end{remark}


\section{Concluding Remarks}


We studied two ways to regularize the ill defined functional determinant of Laplacian on Riemannian manifolds in the unified framework of resurgence theory. Inspired by Voros（\cite{voros1992spectral}) for previous regularizations, we clarify and establish similar results for two regularized determinants proposed in the current work. These regularized determinants are characterized by both the Stokes structure and the analytic continuation of the corresponding modified theta series. Specifically, our formulas for these determinants reproduce the Poisson summation formula, its deformed form, and the Selberg trace formula when applied to the unit circle and compact Riemann surfaces of higher genus. One may expect that the current regularization methods could be generalized to more general differential operators. In fact we hope to study the spectra of the Schr\"odinger operator and Gutzwiller semiclassical trace formula (SCTF) through resurgence theory as shown by Selberg trace formula.  

The singularities and Stokes structure of theta series reveal finer information on the spectra as one can see from  identities \eqref{distribution PSF} and \eqref{deformed distributon PSF} for the case of Poisson summation formula, and \eqref{computation of det in X and Selberg zeta} for the case of Selberg trace formula. We will address this issue by varying the integration contours.

{As shown in the formal solutions of Airy equation 
(\cite{liyong2025resurgence}), the modularity of the partial theta series appearing in Chern-Simons theory (\cite{HanLi2023}), and as did by Voros (\cite{voros1992spectral}) for the spectral functions in the cases of quartic oscillator, to name a few examples, resurgence theory will show its true power for much more general singularities beyond poles.
}We will come back to this issue in the following work.

\section*{Acknowledgments}
{The authors are deeply grateful to David Sauzin, who is supposed to be coauthor at the early stage,  and Yong Li for their valuable discussions and suggestions on the paper. We also appreciate Ziyang Zhu for his kind assistance during the preparation of the manuscript. Both authors are partially supported by National Key R\&D Program of China (2020YFA0713300) and NSFC (No.s 12171327, 12261131498). }


\appendix
\section{Proof of the Deformed Poisson Summation Formula 
}{\label{subsection:Deformed PSF}}


In the appendix, we prove Theorem \ref{thm: deformed psf} by using a method similar to that employed for the Poisson summation formula in (\cite{Steinbook, cartierVoros1988}). However, the meromorphic extension of identity \eqref{deformed distributon PSF}
$$\sum_{n\in\mathbb{Z}}^{\infty}\frac{\rho e^{s_0 |n|}}{n^2+\rho^2}=e^{is_0\rho}\left(\frac{\pi}{\tanh\pi\rho}
-\pi
\right)-i\mathcal{L}^{\left(\pi-\varepsilon\right)}\left(\frac{\sinh s_0}{\cosh t-\cosh s_0}\right)(-i\rho)-\pi e^{is_0\rho},$$
from the sector $\arg(\rho)\in(-\frac{\pi}{2}+\varepsilon,\varepsilon)$ is non-trivial. Specifically, the Laplace transform term $\mathcal{L}^{\left(\pi-\varepsilon\right)}\left(\frac{\sinh s_0}{\cosh t-\cosh s_0}\right)(-i\rho)$ develops singularities during this process, which we shall first discuss explicitly.

~\\
\textbf{The meromorphic extension of $\mathcal{L}^{\left(\pi-\varepsilon\right)}\left(\frac{\sinh s_0}{\cosh t-\cosh s_0}\right)(-i\rho)$.}





\begin{lemma}{\label{lemma: hol-ext of K}}
$\mathcal{L}^{\left(\pi-\varepsilon\right)}\left(\frac{\sinh s_0}{\cosh t-\cosh s_0}\right)(-i\rho)$ can be extended holomorphically to $\{\rho\in\mathbb{C}:-\pi+\epsilon<\arg\rho\leq\pi+\epsilon\}-i\ZZ_{\geq 0}$, and we denote it by $K_{s_0}(\rho)$. When $\rho\in i\Pi_0^{-\epsilon}:=\{|\rho|e^{i\theta}:\epsilon<\theta<\epsilon+\pi,|\rho|>0.\}$,
\begin{equation}{\label{extension of K}}
K_{s_0}(\rho)=\mathcal{L}^{-\epsilon}\left(\frac{\sinh s_0}{\cosh t-\cosh s_0}\right)(-i\rho)-\frac{2\pi i\left(e^{is_0\rho}-e^{-is_0\rho}\right)}{e^{2\pi \rho}-1}+2\pi ie^{-i\rho s_0};
\end{equation}
when $\arg\rho=\pi+\epsilon$ and any $\arg(-s_0-2\pi i)<-\epsilon^\prime<-\epsilon$,
\begin{equation}{\label{extension of K in line}}
K_{s_0}(\rho)=\mathcal{L}^{-\epsilon^\prime}\left(\frac{\sinh s_0}{\cosh t-\cosh s_0}\right)(-i\rho)-\frac{2\pi i\left(e^{is_0\rho}-e^{-is_0\rho}\right)}{e^{2\pi \rho}-1}+2\pi ie^{-i\rho s_0}.
\end{equation}
As a result, the singular set of $K_{s_0}(\rho)$ is $i\ZZ_{\geq0}$, with $\operatorname{Res}(K_{s_0}(\rho);in)=2\pi(e^{-s_0n}-e^{s_0n})$.
\end{lemma}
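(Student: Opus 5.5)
The plan is to continue $K_{s_0}$ analytically by rotating the direction of the Laplace integral. Starting from $\LL^{(\pi-\epsilon)}$, I swing the direction clockwise past the positive imaginary axis down to $-\epsilon$, and collect the residues of the integrand at the poles that are crossed. This is exactly the mechanism of Lemma \ref{lemma: Laplace and stokes} and Corollary \ref{coro: residue formula}, applied to the pole-type function
$$g(t):=\frac{\sinh s_0}{\cosh t-\cosh s_0}.$$
Since $\cosh t=\cosh s_0$ iff $t=\pm s_0+2\pi i k$, $k\in\ZZ$, all singularities of $g$ are simple poles. Because $(\cosh t)'=\sinh t$, the residues are $\operatorname{Res}(g;s_0+2\pi i k)=1$ and $\operatorname{Res}(g;-s_0+2\pi i k)=-1$.

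\textbf{Step 1: growth of $g$.} Away from the poles, $|g(t)|\le C e^{-|\Re t|}$. Hence $g\in\mathcal{N}(I,\gamma)$ for every direction interval $I$ avoiding $\pm\frac{\pi}{2}$, with $\gamma(\theta)=-|\cos\theta|$. In particular $\LL^{\theta}g(-i\rho)=\int_0^{e^{i\theta}\infty}g(t)e^{i\rho t}\,dt$ converges on the half-plane $\{\Re(-i\rho e^{i\theta})>-|\cos\theta|\}$. I also note that $g$ is integrable at $0$, since $s_0\neq0$.

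\textbf{Step 2: poles swept by the rotation.} I determine which poles lie in the sector $-\epsilon<\arg t<\pi-\epsilon$. The condition $\arg(s_0+2\pi i)<\pi-\epsilon$ places all of $s_0+2\pi ik$, $k\ge1$, inside the sector, while $s_0$ itself (argument $\pi$) stays outside. All of $-s_0+2\pi ik$, $k\ge0$, are inside, including the point $-s_0>0$ on the positive real axis. The lower-half-plane poles are not swept.

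\textbf{Step 3: the residue identity.} On the overlap of the two convergence half-planes, which is nonempty and contains $\rho$ with $\arg\rho$ slightly larger than $\epsilon$ and $\Re\rho>0$, I close contours between the two rays with large arcs. The arcs vanish because of the bound in Step 1, provided one runs between poles, the number of enclosed poles being $O(R)$. This gives
$$\LL^{(\pi-\epsilon)}g(-i\rho)=\LL^{-\epsilon}g(-i\rho)+2\pi i\cdot(\text{orientation sign})\Big(\sum_{k\ge1}e^{i\rho(s_0+2\pi ik)}-\sum_{k\ge0}e^{i\rho(-s_0+2\pi ik)}\Big).$$
With $q=e^{-2\pi\rho}$, $|q|<1$, these geometric series sum to
$$\frac{e^{is_0\rho}q-e^{-is_0\rho}}{1-q}.$$
Elementary algebra rewrites this as
$$-\frac{e^{is_0\rho}-e^{-is_0\rho}}{e^{2\pi\rho}-1}+e^{-is_0\rho}$$
up to the overall factor $2\pi i$ and the sign. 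Fixing the orientation sign against \eqref{extension of K} is a bookkeeping check.

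\textbf{Step 4: continuation and residues.} The right-hand side of \eqref{extension of K} is defined on all of $\{\epsilon<\arg\rho<\pi+\epsilon\}$. There $\LL^{-\epsilon}g(-i\rho)$ is holomorphic, and the elementary term is meromorphic with poles only where $e^{2\pi\rho}=1$, i.e.\ $\rho\in i\ZZ$. Since it agrees with $K_{s_0}$ on an open set, it furnishes the holomorphic extension. On the boundary ray $\arg\rho=\pi+\epsilon$ the half-plane of $\LL^{-\epsilon}$ degenerates. There I repeat Step 3 with $-\epsilon'$ in place of $-\epsilon$. The hypothesis $\arg(-s_0-2\pi i)<-\epsilon'$ guarantees that no additional pole (in particular not $-s_0-2\pi i$) is swept, so the same elementary term appears and \eqref{extension of K in line} follows. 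Finally, the residue at $\rho=in$ is read off from $e^{2\pi\rho}-1\sim2\pi(\rho-in)$, which yields a multiple of $e^{-s_0n}-e^{s_0n}$. At $n=0$ this vanishes, so $\rho=0$ is removable, and only $n\ge1$ lie in the extended sector.

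\textbf{Main obstacle.} I expect the main difficulty to be Step 3: justifying that the infinite residue sum is legitimate (vanishing of the large arcs threaded between poles, and the interchange with the summation) while consistently tracking the orientation and sign conventions, so that the constants match \eqref{extension of K} exactly. The continuation itself is then routine.
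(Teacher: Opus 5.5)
Your proposal is correct. The bookkeeping you defer closes: with the sector $-\epsilon<\arg t<\pi-\epsilon$ oriented counterclockwise, $(\LL^{-\epsilon}-\LL^{\pi-\epsilon})g(-i\rho)=2\pi i\big(\sum_{k\ge1}e^{i\rho(s_0+2\pi ik)}-\sum_{k\ge0}e^{i\rho(-s_0+2\pi ik)}\big)$. Your bracket equals $\frac{e^{is_0\rho}-e^{-is_0\rho}}{e^{2\pi\rho}-1}-e^{-is_0\rho}$, so the orientation sign is $-1$ and \eqref{extension of K} comes out exactly.

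Your route differs from the paper's in how the two directions are connected. The paper keeps the exponent $\gamma=0$, for which $i\Pi_0^{\pi-\epsilon}$ and $i\Pi_0^{-\epsilon}$ are disjoint. It therefore rotates in two steps through the direction $\frac{\pi}{2}$, where $g$ is only bounded and $\LL^{\pi/2}$ lives on $\Re\rho>0$. It gets \eqref{formula:extend K 1} from the poles $s_0+2\pi ik$ and \eqref{formula:extend K 2} from the poles $-s_0+2\pi ik$, then eliminates $\LL^{\pi/2}$.

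You rotate once, using the sharper exponent $\gamma(\theta)=-|\cos\theta|$ on the two pole-free rays. This makes the convergence half-planes overlap in the strip $|\Im(\rho e^{-i\epsilon})|<\cos\epsilon$. The big arcs are then killed by the decay of $g$ near the real directions together with the decay of $e^{i\rho t}$ near $\arg t=\frac{\pi}{2}$. For $\arg\rho=\epsilon$ the exponent $|\cos\phi|+|\rho|\sin(\epsilon+\phi)$ is positive on $[-\epsilon,\pi-\epsilon]$.

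This buys more than a shortcut. The strip is connected and contains $\rho=0$, so the identity theorem carries the equality to the whole strip, including a neighbourhood of the ray $\arg\rho=\pi+\epsilon$. Hence the glued function is single-valued on $\CC\setminus i\ZZ_{>0}$ and holomorphic at $0$. This compatibility with the original $\LL^{\pi-\epsilon}$ across that ray is left implicit in the paper. Your $\epsilon'$-step also becomes unnecessary: with your $\gamma$, $\LL^{-\epsilon}g(-i\rho)$ already converges on $\arg\rho=\pi+\epsilon$, and the degeneration you mention occurs only for $\gamma=0$.

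Two details to fix.

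First, in Step 1, membership in $\mathcal{N}(I,\gamma)$ requires $I$ to avoid every pole direction $\arg(\pm s_0+2\pi ik)$, not just $\pm\frac{\pi}{2}$. What you actually use is that the rays $\arg t=-\epsilon$ and $\arg t=\pi-\epsilon$ carry no poles. That is exactly the hypothesis $\arg(-s_0-2\pi i)<-\epsilon$, equivalently $\arg(s_0+2\pi i)<\pi-\epsilon$.

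Second, do not leave the residue as ``a multiple.'' From \eqref{extension of K}, $\operatorname{Res}(K_{s_0};in)=-\frac{2\pi i}{2\pi}(e^{-s_0n}-e^{s_0n})=-i(e^{-s_0n}-e^{s_0n})$. This is $\frac{1}{2\pi i}$ times the value $2\pi(e^{-s_0n}-e^{s_0n})$ written in the statement; the stated number is really $2\pi i\operatorname{Res}$. The value $-i(e^{-s_0n}-e^{s_0n})$ is the one that matches the residues of both sides of the continued identity \eqref{deformed distributon PSF} at $\rho=in$, and the residue bookkeeping in the appendix.
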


\begin{remark}
Formula \eqref{extension of K in line} can be explained as follows. Note that $$\frac{\sinh s_0}{\cosh t - \cosh s_0} \in \mathcal{N}\left((\arg(-s_0-2\pi i), 0), 0\right).$$ By Proposition \ref{def:Laplace transform is welldefined}, the Laplace transforms $\mathcal{L}^{-\epsilon^\prime}$ and $\mathcal{L}^{-\epsilon}$ coincide on the common domain $i\Pi_0^{-\epsilon} \cap i\Pi_0^{-\epsilon^\prime}$ for any $\arg(-s_0-2\pi i) < -\epsilon^\prime < -\epsilon$ which ensures that $K_{s_0}(\rho)$ can be extended meromorphically to  $\arg\rho = \pi + \epsilon$, retaining the same form as in \eqref{extension of K}.

\end{remark}

\begin{proof}
\begin{figure}[htbp]
    \centering    
    \tikzset{every picture/.style={line width=0.75pt}} 
\begin{tikzpicture}[x=0.75pt,y=0.75pt,yscale=-1,xscale=1]

\draw  (58,137.5) -- (215,137.5) node [anchor=west] {\scriptsize $\text{Re}(t)$};
\draw  (159,236.5) -- (159,47.5) node [anchor=south] {\scriptsize $\text{Im}(t)$};
\draw (208,132.5) -- (215,137.5) -- (208,142.5) (154,54.5) -- (159,47.5) -- (164,54.5);

\draw  (380.34,140.02) -- (530.72,140.02) node [anchor=west] {\scriptsize $\text{Re}(\rho)$};
\draw  (477.08,235.62) -- (477.08,53.11) node [anchor=south] {\scriptsize $\text{Im}(\rho)$};
\draw (523.72,135.02) -- (530.72,140.02) -- (523.72,145.02) (472.08,60.11) -- (477.08,53.11) -- (482.08,60.11);

\draw [color={rgb, 255:red, 208; green, 2; blue, 27 }  ,draw opacity=1 ]  (159,137.5) -- (159,37) ;
\draw [shift={(159,35)}, rotate = 90] [color={rgb, 255:red, 208; green, 2; blue, 27 }  ,draw opacity=1 ][line width=0.75]    (10.93,-3.29) .. controls (6.95,-1.4) and (3.31,-0.3) .. (0,0) .. controls (3.31,0.3) and (6.95,1.4) .. (10.93,3.29)   ;
\draw  [color={rgb, 255:red, 0; green, 0; blue, 0 }  ,draw opacity=0 ][fill={rgb, 255:red, 208; green, 2; blue, 27 }  ,fill opacity=0.13 ] (477.55,38.35) -- (477.02,241.15) -- (586.71,241.44) -- (587.24,38.63) -- cycle ;
\draw [color={rgb, 255:red, 128; green, 128; blue, 128 }  ,draw opacity=1 ]   (159,137.5) -- (65.3,119.38) ;
\draw [shift={(63.33,119)}, rotate = 10.94] [color={rgb, 255:red, 128; green, 128; blue, 128 }  ,draw opacity=1 ][line width=0.75]    (10.93,-3.29) .. controls (6.95,-1.4) and (3.31,-0.3) .. (0,0) .. controls (3.31,0.3) and (6.95,1.4) .. (10.93,3.29)   ;
\draw  [color={rgb, 255:red, 0; green, 0; blue, 0 }  ,draw opacity=0 ][fill={rgb, 255:red, 128; green, 128; blue, 128 }  ,fill opacity=0.12 ] (586.18,117.84) -- (370.04,162.68) -- (390,260.26) -- (606.14,215.42) -- cycle ;
\draw [color={rgb, 255:red, 245; green, 166; blue, 35 }  ,draw opacity=1 ]   (159,137.5) -- (234.71,153.58) ;
\draw [shift={(236.67,154)}, rotate = 191.99] [color={rgb, 255:red, 245; green, 166; blue, 35 }  ,draw opacity=1 ][line width=0.75]    (10.93,-3.29) .. controls (6.95,-1.4) and (3.31,-0.3) .. (0,0) .. controls (3.31,0.3) and (6.95,1.4) .. (10.93,3.29)   ;
\draw  [color={rgb, 255:red, 0; green, 0; blue, 0 }  ,draw opacity=0 ][fill={rgb, 255:red, 245; green, 166; blue, 35 }  ,fill opacity=0.11 ] (367.04,162.68) -- (588.03,116.71) -- (564.87,6.71) -- (343.88,52.67) -- cycle ;
\draw    (244.67,142) -- (331.67,142.98) ;
\draw [shift={(333.67,143)}, rotate = 180.64] [color={rgb, 255:red, 0; green, 0; blue, 0 }  ][line width=0.75]    (10.93,-3.29) .. controls (6.95,-1.4) and (3.31,-0.3) .. (0,0) .. controls (3.31,0.3) and (6.95,1.4) .. (10.93,3.29)   ;
\draw [color={rgb, 255:red, 155; green, 155; blue, 155 }  ,draw opacity=1 ] [dash pattern={on 4.5pt off 4.5pt}]  (588.03,116.71) -- (367.04,162.68) ;
\draw [color={rgb, 255:red, 245; green, 166; blue, 35 }  ,draw opacity=1 ] [dash pattern={on 4.5pt off 4.5pt}]  (591.03,116.71) -- (370.04,162.68) ;


\draw [color={rgb, 255:red, 0; green, 0; blue, 0 }  ,draw opacity=0.22 ]   (121.28,97.69) .. controls (122.24,115.77) and (140.09,96.99) .. (125.6,95.61) ;
\draw [shift={(123.61,95.53)}, rotate = 359.47] [color={rgb, 255:red, 0; green, 0; blue, 0 }  ,draw opacity=0.22 ][line width=0.75]    (10.93,-3.29) .. controls (6.95,-1.4) and (3.31,-0.3) .. (0,0) .. controls (3.31,0.3) and (6.95,1.4) .. (10.93,3.29)   ;

\draw [color={rgb, 255:red, 0; green, 0; blue, 0 }  ,draw opacity=0.22, dash pattern={on 4pt off 4pt} ]   (118.3,32.21) -- (127.03,55.99)(115.48,33.24) -- (124.22,57.03) ;

\draw [color={rgb, 255:red, 0; green, 0; blue, 0 }  ,draw opacity=0.22 ]   (123.96,57.13) .. controls (118.44,74.37) and (141.8,63.16) .. (128.74,56.71) ;
\draw [shift={(126.92,55.93)}, rotate = 20.29] [color={rgb, 255:red, 0; green, 0; blue, 0 }  ,draw opacity=0.22 ][line width=0.75]    (10.93,-3.29) .. controls (6.95,-1.4) and (3.31,-0.3) .. (0,0) .. controls (3.31,0.3) and (6.95,1.4) .. (10.93,3.29)   ;

\draw [color={rgb, 255:red, 0; green, 0; blue, 0 }  ,draw opacity=0.22, dash pattern={on 4pt off 4pt} ]   (73.12,85.14) -- (123.56,116.42)(71.54,87.69) -- (121.98,118.97) ;

\draw [color={rgb, 255:red, 0; green, 0; blue, 0 }  ,draw opacity=0.22 ]   (121.77,118.69) .. controls (127.1,135.99) and (139.85,113.43) .. (125.45,115.62) ;
\draw [shift={(123.5,116.02)}, rotate = 345.39] [color={rgb, 255:red, 0; green, 0; blue, 0 }  ,draw opacity=0.22 ][line width=0.75]    (10.93,-3.29) .. controls (6.95,-1.4) and (3.31,-0.3) .. (0,0) .. controls (3.31,0.3) and (6.95,1.4) .. (10.93,3.29)   ;

\draw [color={rgb, 255:red, 0; green, 0; blue, 0 }  ,draw opacity=0.22 ]   (122.62,76.05) .. controls (119.44,93.88) and (141.1,79.65) .. (127.3,75) ;
\draw [shift={(125.38,74.47)}, rotate = 12.62] [color={rgb, 255:red, 0; green, 0; blue, 0 }  ,draw opacity=0.22 ][line width=0.75]    (10.93,-3.29) .. controls (6.95,-1.4) and (3.31,-0.3) .. (0,0) .. controls (3.31,0.3) and (6.95,1.4) .. (10.93,3.29)   ;

\draw [color={rgb, 255:red, 0; green, 0; blue, 0 }  ,draw opacity=0.22 ] (93.34,98.86) .. controls (94.25,100.72) and (95.39,102.25) .. (96.76,103.45) .. controls (95.16,102.57) and (93.33,102.03) .. (91.28,101.8) ;
\draw [color={rgb, 255:red, 0; green, 0; blue, 0 }  ,draw opacity=0.22 ] (92.31,99.25) .. controls (91.35,97.42) and (90.17,95.92) .. (88.77,94.76) .. controls (90.39,95.59) and (92.23,96.09) .. (94.29,96.26) ;

\draw [color={rgb, 255:red, 0; green, 0; blue, 0 }  ,draw opacity=0.22 ] (98.63,71.51) .. controls (99.37,73.97) and (100.29,76) .. (101.4,77.6) .. controls (100.1,76.44) and (98.63,75.71) .. (96.96,75.41) ;
\draw [color={rgb, 255:red, 0; green, 0; blue, 0 }  ,draw opacity=0.22 ] (97.9,72.46) .. controls (97.12,70.03) and (96.16,68.04) .. (95.03,66.49) .. controls (96.34,67.59) and (97.83,68.26) .. (99.5,68.48) ;

\draw [color={rgb, 255:red, 0; green, 0; blue, 0 }  ,draw opacity=0.22 ] (113.1,55.62) .. controls (113.23,58.19) and (113.65,60.38) .. (114.35,62.2) .. controls (113.37,60.76) and (112.1,59.7) .. (110.56,59.02) ;
\draw [color={rgb, 255:red, 0; green, 0; blue, 0 }  ,draw opacity=0.22 ] (112.52,56.52) .. controls (112.33,53.98) and (111.88,51.82) .. (111.14,50.04) .. controls (112.15,51.43) and (113.44,52.43) .. (115.01,53.04) ;

\draw [color={rgb, 255:red, 0; green, 0; blue, 0 }  ,draw opacity=0.22 ] (121.1,44.95) .. controls (121.23,47.52) and (121.65,49.71) .. (122.35,51.53) .. controls (121.37,50.09) and (120.1,49.03) .. (118.56,48.36) ;
\draw [color={rgb, 255:red, 0; green, 0; blue, 0 }  ,draw opacity=0.22 ] (122.18,47.19) .. controls (122,44.64) and (121.54,42.49) .. (120.8,40.71) .. controls (121.82,42.1) and (123.11,43.09) .. (124.68,43.7) ;

\draw [color={rgb, 255:red, 0; green, 0; blue, 0 }  ,draw opacity=0.22, dash pattern={on 4pt off 4pt} ]   (89.14,60.49) -- (123.85,96.19)(86.99,62.58) -- (121.7,98.28) ;
\draw [color={rgb, 255:red, 0; green, 0; blue, 0 }  ,draw opacity=0.22, dash pattern={on 4pt off 4pt} ]   (104.56,39.04) -- (125.97,75.4)(101.97,40.56) -- (123.39,76.92) ;

\foreach \y in {79, 99, 119} {
    \draw (473.5, \y-4) -- (480.5, \y+4) (480.5, \y-4) -- (473.5, \y+4);
}
\foreach \y in {57.5, 77.5, 97.5, 117.5, 137.5, 157.5, 177.5,197.5, 217.5} {
    \draw (123.5, \y-3.5) -- (130.5, \y+3.5) (130.5, \y-3.5) -- (123.5, \y+3.5);
}
\foreach \y in {57.5, 77.5, 97.5, 117.5, 137.5, 157.5, 177.5, 197.5, 217.5} {
    \draw (190.5, \y-3.5) -- (197.5, \y+3.5) (197.5, \y-3.5) -- (190.5, \y+3.5);
}

\draw (161,140.9) node [anchor=north west] {$0$};
\draw (149,9.4) node [anchor=north west] [font=\scriptsize,color={rgb, 255:red, 126; green, 211; blue, 33 }] {$\textcolor[rgb]{0.82,0.01,0.11}{\frac{\pi }{2}}$};
\draw (479.53,143.1) node [anchor=north west] {$0$};
\draw (578.85,41.39) node [anchor=north west] [font=\scriptsize] {$\textcolor[rgb]{0.82,0.01,0.11}{i\Pi }\textcolor[rgb]{0.82,0.01,0.11}{_{0}^{\frac{\pi }{2}}}$};
\draw (28.6,112) node [anchor=north west] [font=\scriptsize] {$\textcolor[rgb]{0.5,0.5,0.5}{\pi -\epsilon }$};
\draw (322.04,168.04) node [anchor=north west] [font=\scriptsize] {$\textcolor[rgb]{0.5,0.5,0.5}{-\pi +\epsilon }$};
\draw (592.47,116.67) node [anchor=north west] [font=\scriptsize] {$\textcolor[rgb]{0.5,0.5,0.5}{\epsilon }$};
\draw (463.42,21.74) node [anchor=north west] [font=\scriptsize] {$\textcolor[rgb]{0.82,0.01,0.11}{\frac{\pi }{2}}$};
\draw (453.27,219.86) node [anchor=north west] [font=\scriptsize] {$\textcolor[rgb]{0.82,0.01,0.11}{-}\textcolor[rgb]{0.82,0.01,0.11}{\frac{\pi }{2}}$};
\draw (377.11,237.26) node [anchor=north west] [font=\scriptsize] {$\textcolor[rgb]{0.5,0.5,0.5}{i\Pi }\textcolor[rgb]{0.5,0.5,0.5}{_{0}^{\pi -\epsilon }}$};
\draw (222.8,154.6) node [anchor=north west] [font=\scriptsize] {$\textcolor[rgb]{0.96,0.65,0.14}{-\epsilon }$};
\draw (335.67,146.4) node [anchor=north west] [font=\scriptsize] {$\textcolor[rgb]{0.96,0.65,0.14}{\pi +\epsilon }$};
\draw (390.52,33.51) node [anchor=north west] [font=\scriptsize] {$\textcolor[rgb]{0.96,0.65,0.14}{i\Pi }\textcolor[rgb]{0.96,0.65,0.14}{_{0}^{-\epsilon }}$};
\draw (245,124) node [anchor=north west] [font=\scriptsize] {Laplace transform};
\draw (484,94.4) node [anchor=north west] [font=\scriptsize] {$in$};
\draw (184,140.6) node [anchor=north west] [font=\scriptsize] {$-s_{0}$};
\draw (113.07,137.13) node [anchor=north west] [font=\scriptsize] {$s_{0}$};

\end{tikzpicture}
    
    \caption{Meromorphic extension of $\mathcal{L}^{\left(\pi-\varepsilon\right)}\left(\frac{\sinh s_0}{\cosh t-\cosh s_0}\right)(-i\rho)$ by varying directions of the Laplace transform. \textbf{Left ($t$-plane):} Gray, red, and yellow rays represent transform directions $\pi-\epsilon$, $\pi/2$, and $-\epsilon$, respectively; the dashed gray rays and circles with arrows indicate the Hankel contours decomposed from the difference between the transforms along $\pi/2$ and $\pi-\epsilon$. \textbf{Right ($\rho$-plane):} Shaded regions denote the domains of convergence for the Laplace transforms along the three directions of corresponding colors. By computing the differences between these transforms, the convergence domains are glued together, thus extending the holomorphic domain of $\mathcal{L}^{\left(\pi-\varepsilon\right)}\left(\frac{\sinh s_0}{\cosh t-\cosh s_0}\right)(-i\rho)$.}
    
    \label{fig:Holomorphic continuation of K}
\end{figure}
Notice that the function $\frac{\sinh s_0}{\cosh t-\cosh s_0}$ is holomorphic and bounded on each ray emanating from the origin, except for those containing the points $\pm s_0+2\pi ik$ ($k\in \mathbb{Z}$). These points are simple poles, and the residues are given by
$\operatorname{Res}\left(\frac{\sinh s_0}{\cosh t-\cosh s_0}; t = \pm s_0 + 2\pi i k\right) = \pm 1.$
Thus, for $J_k^+:=\left(\arg(s_0+2\pi(k+1) i),\arg (s_0+2\pi ki)\right)$, $J_k^-:=\left(\arg(-s_0+2k\pi i),\arg (-s_0+2(k+1)\pi i)\right)$ with $k\in\ZZ$, we have 
$$
\frac{\sinh s_0}{\cosh t-\cosh s_0}\in\left(\bigcap_{k\in\ZZ}\mathcal{N}(J_k^+,0)\cap\mathcal{N}(J_k^-,0)\right)\cap\mathcal{R}_{\pm s_0+2\pi i\ZZ}^{simp}.
$$
Similarly to the proof of Lemma \ref{lemma: Laplace and stokes}, we decompose $(\mathcal{L}^{\pi-\epsilon} - \mathcal{L}^{\frac{\pi}{2}})\left(\frac{\sinh s_0}{\cosh t-\cosh s_0}\right)(-i\rho)$ into a sum of integrals along infinite Hankel contours\footnote{ Given $\theta_k^+:=\arg(s_0+2\pi i k)$ and $\epsilon^{''}$ small enough, for any $k\in\ZZ_{>0}$, each contour $\Gamma_{k,\epsilon^{''}}^+$ is defined as follows: it starts from $s_0+2\pi ik+e^{i\theta_k^+}\infty$ and goes to $s_0+2\pi ik+\epsilon^{''}e^{i\theta_k^+}$ along a straight line, and circles anticlockwise once around $s_0+2\pi ik$, at last, it returns to the $s_0+2\pi ik+e^{i(\theta_k^{+}+2\pi)}\infty$ along the ray starting from $s_0+2\pi ik+\epsilon^{''}e^{i(\theta_k^{+}+2\pi)}$. Look at the dashed gray rays and gray circles in the Figure \ref{fig:Holomorphic continuation of K}.}. Since $s_0 + 2k\pi i$ are simple poles, only the contour integrals around small circles enclosing these poles contribute to the sum. Consequently, by the Residue formula, we obtain that
\begin{equation}\label{formula:extend K 1}
\begin{aligned} K_{s_0}(\rho) &= \mathcal{L}^{\frac{\pi}{2}}\left(\frac{\sinh s_0}{\cosh t-\cosh s_0}\right)(-i\rho) - 2\pi i \sum_{k=1}^{\infty} e^{is_0\rho} e^{-2\pi\rho k}\\
&=\mathcal{L}^{\frac{\pi}{2}}\left(\frac{\sinh s_0}{\cosh t-\cosh s_0}\right)(-i\rho)-\frac{2\pi ie^{i\rho s_0}}{e^{2\pi \rho}-1},\quad \rho \in i\Pi_0^{\pi-\epsilon} \cap i\Pi_0^{\frac{\pi}{2}}.
\end{aligned}
\end{equation}
Using the same method, for $\arg(-s_0-2\pi i)<-\epsilon<0$ and any $\rho \in i\Pi_0^{\pi-\epsilon} \cap i\Pi_0^{\frac{\pi}{2}}$, we also have
\begin{equation}\label{formula:extend K 2}
\left(\mathcal{L}^{-\epsilon}-\mathcal{L}^{\frac{\pi}{2}}\right)\left(\frac{\sinh s_0}{\cosh t-\cosh s_0}\right)(-i\rho)=-2\pi i \sum_{k=0}^{\infty}e^{-is_0\rho}e^{-2\pi\rho k}=\frac{-2\pi ie^{-i\rho s_0}}{e^{2\pi \rho}-1}-2\pi ie^{-i\rho s_0}.    
\end{equation}

Thus, by the uniqueness of holomorphic extension, we find that $\mathcal{L}^{\left(\pi-\varepsilon\right)}\left(\frac{\sinh s_0}{\cosh t-\cosh s_0}\right)(-i\rho)$ can be meromorphically extended to $i\Pi_0^{-\epsilon} \cup i\Pi_0^{\pi-\epsilon}$ via the relations \eqref{formula:extend K 1} and \eqref{formula:extend K 2}. As a result,
$$
K_{s_0}(\rho)=\mathcal{L}^{-\epsilon}\left(\frac{\sinh s_0}{\cosh t-\cosh s_0}\right)(-i\rho)-\frac{2\pi i\left(e^{is_0\rho}-e^{-is_0\rho}\right)}{e^{2\pi \rho}-1}+2\pi ie^{-i\rho s_0},
$$
when $\rho\in i\Pi_0^{-\epsilon}:=\{|\rho|e^{i\theta}:\epsilon<\theta<\epsilon+\pi,|\rho|>0.\}$.
\end{proof}

Now we can start to prove Theorem \ref{thm: deformed psf} by Lemma \ref{lemma: hol-ext of K}.

\begin{proof}
 Observing that $ \sum\limits_{n \in \mathbb{Z}}\left|\frac{2 \rho e^{s_0|n|} }{\rho^2+n^2}\right|$ is uniformly convergent in any compact subset of $\mathbb{C}\setminus i\mathbb{Z}$, then for any fixed $N\in\ZZ_{>0}$ and the path $\gamma_N:=\{\pm\eta+it|t\in(-N-\frac{1}{2},N+\frac{1}{2})\}\cup\{s\pm i(N+\frac{1}{2})|s\in[-\eta,\eta]\}$ (see Figure \ref{fig:the integral contour gamma-N}), we have
$$
\sum_{n \in \mathbb{Z}} h(n)e^{s_0 |n|}
=\lim _{N \rightarrow \infty} \frac{1}{2 \pi i} \int_{\gamma_N} \sum_{n \in \mathbb{Z}} \frac{\rho h(-i \rho)e^{s_0|n|}}{\rho^2+n^2} d \rho.
$$

\begin{figure}[htbp]
    \centering
\tikzset{every picture/.style={line width=0.75pt}} 

\begin{tikzpicture}[x=0.50pt,y=0.50pt,yscale=-1,xscale=1]

\draw [color={rgb, 255:red, 208; green, 2; blue, 27 }  ,draw opacity=1 ]   (364.11,223) -- (362.79,68.11) ;
\draw [shift={(362.78,66.11)}, rotate = 89.51] [color={rgb, 255:red, 208; green, 2; blue, 27 }  ,draw opacity=1 ][line width=0.75]    (10.93,-3.29) .. controls (6.95,-1.4) and (3.31,-0.3) .. (0,0) .. controls (3.31,0.3) and (6.95,1.4) .. (10.93,3.29)   ;
\draw [color={rgb, 255:red, 208; green, 2; blue, 27 }  ,draw opacity=1 ]   (298,66.11) -- (298.77,220.33) ;
\draw [shift={(298.78,222.33)}, rotate = 269.71] [color={rgb, 255:red, 208; green, 2; blue, 27 }  ,draw opacity=1 ][line width=0.75]    (10.93,-3.29) .. controls (6.95,-1.4) and (3.31,-0.3) .. (0,0) .. controls (3.31,0.3) and (6.95,1.4) .. (10.93,3.29)   ;
\draw    (224,150) -- (266,150) -- (430,150) ;
\draw [shift={(432,150)}, rotate = 180] [color={rgb, 255:red, 0; green, 0; blue, 0 }  ][line width=0.75]    (10.93,-3.29) .. controls (6.95,-1.4) and (3.31,-0.3) .. (0,0) .. controls (3.31,0.3) and (6.95,1.4) .. (10.93,3.29)   ;
\draw    (330.83,264.92) -- (330.34,10.92) ;
\draw [shift={(330.33,8.92)}, rotate = 89.89] [color={rgb, 255:red, 0; green, 0; blue, 0 }  ][line width=0.75]    (10.93,-3.29) .. controls (6.95,-1.4) and (3.31,-0.3) .. (0,0) .. controls (3.31,0.3) and (6.95,1.4) .. (10.93,3.29)   ;
\draw [color={rgb, 255:red, 208; green, 2; blue, 27 }  ,draw opacity=1 ]   (362.78,66.11) -- (300,66.11) ;
\draw [shift={(298,66.11)}, rotate = 360] [color={rgb, 255:red, 208; green, 2; blue, 27 }  ,draw opacity=1 ][line width=0.75]    (10.93,-3.29) .. controls (6.95,-1.4) and (3.31,-0.3) .. (0,0) .. controls (3.31,0.3) and (6.95,1.4) .. (10.93,3.29)   ;
\draw [color={rgb, 255:red, 208; green, 2; blue, 27 }  ,draw opacity=1 ]   (298.78,222.33) -- (362.11,222.98) ;
\draw [shift={(364.11,223)}, rotate = 180.58] [color={rgb, 255:red, 208; green, 2; blue, 27 }  ,draw opacity=1 ][line width=0.75]    (10.93,-3.29) .. controls (6.95,-1.4) and (3.31,-0.3) .. (0,0) .. controls (3.31,0.3) and (6.95,1.4) .. (10.93,3.29)   ;

\draw (372,154.4) node [anchor=north west][inner sep=0.75pt]  [font=\normalsize]  {$\eta $};
\draw (374.67,81.4) node [anchor=north west][inner sep=0.75pt]  [font=\scriptsize]  {$\gamma _{N}{}$};
\draw (319,154.4) node [anchor=north west][inner sep=0.75pt]    {$0$};
\draw (324,142) node [anchor=north west][inner sep=0.75pt]  [font=\Large]  {$\cdot $};
\draw (321,110.4) node [anchor=north west][inner sep=0.75pt]    {$\textcolor[rgb]{0.82,0.01,0.11}{\times }$};
\draw (321,79.4) node [anchor=north west][inner sep=0.75pt]    {$\textcolor[rgb]{0.82,0.01,0.11}{\times }$};
\draw (321,48.4) node [anchor=north west][inner sep=0.75pt]    {$\textcolor[rgb]{0.82,0.01,0.11}{\times }$};
\draw (321,172.4) node [anchor=north west][inner sep=0.75pt]    {$\textcolor[rgb]{0.82,0.01,0.11}{\times }$};
\draw (321,203.4) node [anchor=north west][inner sep=0.75pt]    {$\textcolor[rgb]{0.82,0.01,0.11}{\times }$};
\draw (433,137.4) node [anchor=north west][inner sep=0.75pt]    {$\Re ( \rho )$};
\draw (340,3.4) node [anchor=north west][inner sep=0.75pt]    {$\Im ( \rho )$};
\draw (292,142) node [anchor=north west][inner sep=0.75pt]  [font=\Large]  {$\cdot $};
\draw (357,142) node [anchor=north west][inner sep=0.75pt]  [font=\Large]  {$\cdot $};
\draw (266,152.4) node [anchor=north west][inner sep=0.75pt]  [font=\normalsize]  {$-\eta $};
\draw (321,234.4) node [anchor=north west][inner sep=0.75pt]    {$\textcolor[rgb]{0.82,0.01,0.11}{\times }$};
\draw (339.33,79.4) node [anchor=north west][inner sep=0.75pt]  [font=\normalsize]  {$iN$};
\draw (342,46.73) node [anchor=north west][inner sep=0.75pt]  [font=\normalsize]  {$i( N+1)$};
\draw (333,197.07) node [anchor=north west][inner sep=0.75pt]  [font=\normalsize]  {$-iN$};
\draw (340,231.73) node [anchor=north west][inner sep=0.75pt]  [font=\normalsize]  {$-i( N+1)$};

\end{tikzpicture}
\caption{The integral contour $\gamma_N$.}
\label{fig:the integral contour gamma-N}
\end{figure}

Plugging the identity  (\ref{deformed distributon PSF}) 
$$\sum_{n\in\mathbb{Z}}^{\infty}\frac{\rho e^{s_0 |n|}}{n^2+\rho^2}h(-i\rho)=e^{is_0\rho}\left(\frac{\pi}{\tanh\pi\rho}\right)h(-i\rho)-i K_{s_0}(\rho)h(-i\rho)
-\pi e^{is_0\rho}h(-i\rho),\quad \rho\in\mathbb{C}-i\mathbb{Z}$$
into it and using the Residue formula for contour $\gamma_N$ and Lemma \ref{lemma: hol-ext of K}, we get
$$
\sum_{n \in \mathbb{Z}} h(n)e^{s_0|n|}
=\lim _{N \rightarrow \infty} \frac{1}{2 \pi i} \int_{\gamma_N} e^{i s_0\rho}h(-i \rho)\left(\frac{\pi }{\tanh \pi \rho}\right)d \rho -\lim_{N\to\infty}\sum_{n=1}^{N}(e^{-s_0n}-e^{s_0n})h(n),
$$
which can in turn be simplified to
$$\sum_{N \in \mathbb{Z}} h(n)e^{-s_0n}=\lim _{N \rightarrow \infty} \frac{1}{2 \pi i} \int_{\gamma_N} e^{i s_0\rho}h(-i \rho)\left(\frac{\pi }{\tanh \pi \rho}\right)d \rho.
$$

On the horizontal segments of $\gamma_N$, where $\rho = s \pm i(N + \frac{1}{2})$ for $s \in (-\eta, \eta)$, the term $\frac{\pi}{\tanh \pi \rho}$ is bounded. Given that $h(-i\rho) = O(|\rho|^{-1-\delta})$ as $\Im(\rho) \to -\infty$ and $h(-i\rho) = O(e^{\delta^\prime |\rho|})$ as $\Im(\rho) \to +\infty$, the integrals along these segments satisfy:
$$\left|\int_{- i(N+\frac{1}{2})}^{\pm\eta- i(N+\frac{1}{2})} e^{is_0\rho}h(-i \rho)\frac{\pi }{\tanh \pi \rho} d \rho\right| = O\left(N^{-1-\delta}\right),$$
and
$$\left|\int_{i(N+\frac{1}{2})}^{\pm\eta+ i(N+\frac{1}{2})} e^{is_0\rho}h(-i \rho)\frac{\pi }{\tanh \pi \rho} d \rho\right| = O\left(e^{(\delta^\prime-s_0)N}\right).$$
Noting that $s_0 > \delta^\prime$, then as $N \to \infty$, both contributions vanish, and we have
$$\lim _{N \rightarrow \infty} \int_{\pm i(N+\frac{1}{2})}^{\pm\eta\pm i(N+\frac{1}{2})} e^{is_0\rho}h(-i \rho)\frac{\pi }{\tanh \pi \rho} d \rho = 0.$$
As $N\to\infty$, the integral over $\gamma_N$ reduces to the contributions from its vertical segments. By expanding $\frac{\pi}{\tanh\pi\rho}$ into an exponential series uniformly on these segments and changing the variable $\rho = i\rho^\prime$, we obtain:
\begin{equation} \label{deform computation in appendix}\begin{aligned}
\sum_{n\in \ZZ} h(n)e^{-s_0n}&=\frac{1}{2\pi i} \left(\int_{\eta-i\infty}^{\eta+i\infty}-\int_{-\eta-i\infty}^{-\eta+i\infty} \right) e^{is_0\rho}h(-i \rho)\left(\frac{\pi}{\tanh\pi \rho}\right)d\rho \\
&\overset{\rho = i\rho^\prime}{=} \sum_{m\geq 1} \int_{-i\eta-\infty}^{-i\eta+\infty} e^{-s_0\rho}h(\rho)e^{-2\pi mi \rho} d\rho + \frac{1}{2}\int_{-i\eta-\infty}^{-i\eta+\infty} e^{-s_0\rho}h(\rho) d\rho \\
&\quad + \sum_{m\geq 1} \int_{i\eta-\infty}^{i\eta+\infty} e^{-s_0\rho}h(\rho)e^{2\pi m i\rho} d\rho + \frac{1}{2} \int_{i\eta-\infty}^{i\eta+\infty} e^{-s_0\rho}h(\rho) d\rho.\end{aligned}\end{equation}

For all $m\geq0$, when $R\to\infty$, $$\int_{R-i\eta}^{R} |e^{-s_0\rho}h(\rho)e^{-2\pi mi \rho}| d \rho,\int_{R}^{R+i\eta} |e^{-s_0\rho}h(\rho)e^{2\pi mi \rho}| d \rho\in O(e^{(\delta^\prime-s_0)R}),$$
$$\int_{-R-i\eta}^{-R} |e^{-s_0\rho}h(\rho)e^{-2\pi mi \rho}| d \rho,\int_{-R}^{-R+i\eta} |e^{-s_0\rho}h(\rho)e^{2\pi mi \rho}| d \rho\in O(R^{-1-\delta}),$$
so,$$\int_{-i\eta-\infty}^{-i\eta+\infty} e^{-s_0\rho}h(\rho)e^{-2\pi mi \rho} d \rho=\int_{-\infty}^{+\infty} e^{-s_0\rho}h(\rho)e^{-2\pi mi \rho} d \rho,$$
 $$\int_{i\eta-\infty}^{i\eta+\infty} e^{-s_0\rho}h(\rho)e^{2\pi mi \rho} d \rho=\int_{-\infty}^{+\infty} e^{-s_0\rho}h(\rho)e^{2\pi mi \rho} d \rho.$$
Now (\ref{deform computation in appendix}) becomes $\sum_{m\in\mathbb{Z}}\int_{-\infty}^{+\infty} h(\rho)e^{(2\pi m +is_0)i\rho} d \rho$,
which gives us the desired deformed Poisson Summation formula:
$$\sum_{n\in\mathbb{Z}}h(n)e^{-s_0n}=2\pi\sum_{m\in\mathbb{Z}}\hat{h}(2\pi m+is_0).$$
\end{proof}

\bibliographystyle{plain}
\bibliography{ref}


 \end{document}